\newtheorem{thm}{Theorem}
\newtheorem{cor}{Corollary}
\newtheorem{lem}{Lemma}
\newtheorem{assumption}{Assumption}
\newtheorem{definition}{Definition}
\newtheorem{remark}{Remark}
\newtheorem{example}{Example}
\newtheorem{problem}{Problem}
\title{Phase-Coordinated Multi-Agent Circular Formation Control with Non-Concentric Boundary Constraints}
\author{Shubham Singh and Anoop Jain,~\IEEEmembership{Senior Member,~IEEE}
\thanks{The authors are with the Department of Electrical Engineering, Indian Institute of Technology Jodhpur, 342030, India (e-mail:  {shubham.1@iitj.ac.in}; {anoopj@iitj.ac.in}).}}
\begin{document}
\maketitle


\begin{abstract}
This paper addresses the problem of collective circular motion control for unicycle agents, with the objective of achieving phase coordination of their velocity vectors while ensuring that their trajectories remain confined within a prescribed non-concentric circular boundary. To accommodate such nonuniform motion constraints, we build upon our earlier work and extend the use of M\"{o}bius transformation to a multi-agent framework. The M\"{o}bius transformation maps two nonconcentric circles to concentric ones, thereby converting spatially nonuniform constraints into uniform ones in the transformed plane. Leveraging this property, we introduce the notion of a phase-shifted order parameter, along with the associated concepts of M\"{o}bius phase-shift coupled synchronization and balancing, which characterize the phase-coordinated patterns studied in this paper. We establish an equivalence between the unicycle dynamics in the original and transformed planes under the M\"{o}bius transformation and its inverse, and show that synchronization is preserved across both planes, whereas balancing is generally not. Distributed control laws are then designed in the transformed plane using barrier Lyapunov functions, under the assumption of an undirected and connected communication topology among agents. These controllers are subsequently mapped back to the original plane to obtain the linear acceleration and turn-rate control inputs applied to the actual agents. Both simulations and experimental results are provided to illustrate the proposed framework. 
\end{abstract}

\begin{IEEEkeywords}
M\"{o}bius transformation, nonuniform motion constraint, collective circular motion, synchronization and balancing, stabilization.
\end{IEEEkeywords}


\section{Introduction}
 
\subsection{Motivation and Background}
\lettrine{N}{onuniform} motion constraints frequently arise in practical scenarios such as autonomous vehicles navigating complex environments or drone fleets performing surveillance in hazardous, confined regions \cite{zhao2017general,molnar2025collision}. In such settings, robots (or more broadly, agents) must operate within spatially dependent boundaries to ensure their physical safety. These nonuniform constraints pose significant challenges in the design of distributed controllers for multi-agent systems executing cooperative tasks. This paper addresses this challenge by developing control laws that stabilize multi-agent systems in a \emph{phase-coordinated} fashion around a common circular orbit, while simultaneously ensuring that their trajectories remain bounded within a prescribed, nonconcentric external circular boundary. This problem falls within the broader domain of geofencing control \cite{hermand2018constrained}, which seeks not only to restrict agents motion to a designated region but also to preserve the underlying communication topology, thereby avoiding loss of connectivity due to sensing limitations. In addition to maintaining geofencing constraints, the proposed distributed controller explicitly accounts for the nonholonomic motion constraints inherent in real robotic systems such as unmanned ground vehicles (UGVs) and unmanned aerial vehicles (UAVs). The phase coordination considered in this work is characterized by the coalition of agents' (unit) velocity vectors (or phasors), closely resembling the \emph{order parameter} widely studied in coupled oscillator networks \cite{strogatz2000kuramoto}, from which the terminology ``phase" is adopted.

This paper presents an extension to our earlier work \cite{singh2024stabilizing}, where the problem of constrained circular motion control was solved for a single robot modeled using unicycle kinematics with only a turn-rate controller. However, such a control mechanism may not be sufficient for achieving sophisticated collective goals in a multi-agent framework subject to motion constraints. In these scenarios, it is necessary to exploit the full control authority of unicycle dynamics, i.e., by actuating both the linear and angular velocities \cite{brinon2014cooperative,panagou2015distributed,han2019robust,zhang2022tracking}. Motivated by this, we consider unicycle agents with controllable linear and angular speeds, and develop a framework for enforcing nonconcentric boundary constraints on their motion. To achieve this, we leverage the concept of M\"{o}bius transformation \cite{priestley2003introduction,needham2023visual,turyn2014advanced}, which allows mapping two nonconcentric circles to concentric ones, thereby converting spatially nonuniform constraints into uniform ones in the transformed plane. Please note that it is not merely about the position mapping, instead, about the motion-level mapping of unicycle agents under the M\"{o}bius transformation, wherein both control inputs are also transformed. Within this framework, we first design the distributed control laws in the transformed plane and subsequently map them back into the original plane to realize the actual controllers applied to the physical agents.

\subsection{Related Work}

Collective circular motion control of multi-agent systems has been extensively studied across diverse application domains, with research efforts primarily focusing on two aspects: (i) designing control laws to generate desired motion patterns along a common circular path, and (ii) identifying the minimum information requirements for distributed implementation. In the first category, \cite{brinon2015distributed} addressed source-seeking by stabilizing unicycle agents on a common circular orbit with uniform angular spacing, thereby eliminating measurement errors in signal intensity. In target-tracking applications, there exists works dealing with both stationary and moving targets. For stationary targets, distributed circular motion control has been developed under ring-coupled interaction graphs \cite{zheng2015distributed}, directed sensing graphs \cite{yu2022decentralized}, and with prescribed inter-agent spacing without direct distance measurements \cite{yu2018distributed}. For moving targets with time-varying velocity, \cite{yu2016cooperative} proposed evenly spaced circular formations using only local measurements expressed in the Frenet–Serret frame. More general affine and elastic time-varying circular patterns, useful in navigating narrow passages with a moving reference center, were considered in \cite{brinon2014cooperative}. For data collection in oceanographic applications, sensor networks exploiting phase patterns of agent motion around a common circular path were proposed in \cite{leonard2007collective}. These phase patterns are built upon two formation primitives, namely synchronization and balancing, originally introduced in \cite{sepulchre2007stabilization}. In synchronization, all agents share a common velocity direction, whereas in balancing, agents are evenly dispersed so that their velocity vectors sum to zero. These concepts have been generalized to the case of nonidentical constant speeds, where agents follow individual circles while maintaining the collective phase patterns \cite{seyboth2014collective,sun2018circular}. In the second category, attention has been given to minimizing sensing and communication requirements for controller implementation. For example, \cite{chen2013remark} proposed a framework without reliance on reference beacons, \cite{cao2016collective} developed range-based control strategies, \cite{summers2009coordinated} introduced Lyapunov-based guidance vector fields, and \cite{sun2018circular} studied displacement- and distance-based approaches. The underlying interaction topology has also been shown to critically influence stability and convergence properties; see, e.g., \cite{zheng2015distributed,yu2022decentralized,yu2018distributed,sepulchre2008stabilization}.

While the aforementioned works established elegant control frameworks for collective circular motion, their applicability to safety-critical scenarios remains limited. In applications such as border surveillance, it is imperative that all agents remain confined within a predefined region at all times to ensure physical safety. Some earlier works in this direction are \cite{jain2019trajectory} and \cite{hegde2023synchronization} where the unicycle agents were stabilized to a circular \cite{jain2019trajectory} or a general polar curve \cite{hegde2023synchronization} in phase synchronization and balancing, while assuring that their trajectories remained bounded within concentric safety boundaries. However, in practice, the safety region is not necessarily concentric with the desired circular path, giving rise to spatially nonuniform motion constraints that must be explicitly addressed in distributed controller design. This paper contributes to this problem by enabling collective stabilization on a desired circular orbit, while simultaneously realizing different phase patterns in the agents' motion under such nonconcentric boundary constraints. Achieving these phase patterns is particularly relevant in applications requiring prioritized monitoring of specific regions, such as in border tracking and surveillance missions. It is worth emphasizing that the consideration of nonconcentric boundary constraints not only captures a more general and realistic setting, since fitting a nonconcentric circle within an arbitrary boundary is often more feasible than enforcing concentric boundaries as in \cite{jain2019trajectory}, but also provides greater maneuvering flexibility, which is essential for nonholonomic robotic agents.

\subsection{Main Contributions}  
We introduce the notion of a phase-shifted order parameter, which incorporates a position-dependent phase shift in the agents' velocity vectors, unlike the classical order parameter. Since this phase shift arises naturally from the M\"{o}bius transformation applied at each agent's position, we characterize two new motion patterns, referred to as M\"{o}bius phase-shift coupled synchronization and M\"{o}bius phase-shift coupled balancing, in the agents' velocity vectors while stabilizing them on a common circular orbit within nonconcentric boundaries. To meet the nonuniform trajectory constraints, we build upon the M\"{o}bius transformation framework in \cite{singh2024stabilizing} and show how a unicycle model with linear acceleration and turn-rate controllers can be transformed into an equivalent unicycle model with the same control structure in the transformed plane. We formally establish the equivalence between the two models in both planes and analyze the implications of the M\"{o}bius and inverse M\"{o}bius mappings. We show that synchronization is preserved across both planes under the M\"{o}bius transformation, whereas balancing is not. Leveraging this equivalence, we design distributed controllers in the transformed plane using a barrier Lyapunov function approach, and then recover the actual controllers in the original plane by explicitly exploiting the parameter relationships across the two planes. We prove that, under an undirected and connected communication graph (resp., circulant topology), the agents asymptotically converge to the desired circular path while achieving M\"{o}bius phase-shift coupled synchronization (resp., balancing) in the original plane, with trajectories guaranteed to remain within the nonconcentric safety boundary. As part of the design framework, we further address the selection of two critical design parameters that allow the unicycle dynamics in both planes to be expressed in polar form, thereby enabling analysis through the M\"{o}bius transformation formulation. We also derive tighter analytical bounds on key post-design signals, including positional errors, agent positions, and relative velocity vectors, in both planes. Finally, the effectiveness of the proposed controllers is validated through numerical simulations and experimental implementation on Khepera IV robots.

\subsection{Paper Organization}
Section~\ref{section_2_preliminaries} introduces the notations and reviews essential background on graph theory and the M\"{o}bius transformation. Section~\ref{section_3_system_and_problem} presents the agents' model, characterizes phase-coordinated motion, and formulates the problem in the original plane. Model equivalence under the M\"{o}bius transformation and the corresponding problem formulation in the transformed plane are developed in Section~\ref{section_4_model_equivalence}, where the equivalence of synchronization across both planes is also established. The control design and stability analysis in the transformed plane are detailed in Section~\ref{section_5_control_transformed_plane}, along with the demonstration of how the proposed controllers address the problem in the original plane. Section~\ref{section_6_control_original_plane} derives the explicit controllers in the original plane and establishes bounds on key post-design signals. Simulation and experimental results are provided in Section~\ref{section_7_simulations_and_experiments}, and concluding remarks with future research directions are summarized in Section~\ref{section_8_conclusion_and_future_remarks}.


\section{Preliminaries}\label{section_2_preliminaries}

\subsection{Notations and Graph Theory}
	The sets of real, natural, complex, nonnegative real, and integer numbers are denoted by $\mathbb{R}$, $\mathbb{N}$, $\mathbb{C}$, $\mathbb{R}_{+}$, and $\mathbb{Z}$, respectively. The imaginary unit is $i \triangleq \sqrt{-1}$. For any $z \in \mathbb{C}$, $\Re(z)$ and $\Im(z)$ denote its real and imaginary parts, and $|\bullet|$ denotes the modulus of a real or complex number. The unit circle in the complex plane is denoted by $\mathbb{S}^1 \subset \mathbb{C}$, and the $N$-torus is $\mathbb{T}^N \triangleq \mathbb{S}^1 \times \cdots \times \mathbb{S}^1$ ($N$ times), where $\times$ denotes the Cartesian product. For $\Upsilon_j, \Upsilon_k \in \mathbb{S}^1$, we write $\Upsilon_j \equiv \Upsilon_k \pmod{2\pi}$ if $\Upsilon_j - \Upsilon_k = 2k_0\pi$ for some $k_0 \in \mathbb{Z}$. The inner product of $z_1, z_2 \in \mathbb{C}$ is defined as $\langle z_1, z_2 \rangle \triangleq \Re(\bar{z}_1 z_2)$, where $\bar{z}_1$ is the complex conjugate of $z_1$. For vectors $\pmb{w}_1, \pmb{w}_2 \in \mathbb{C}^N$, the inner product is $\langle \pmb{w}_1, \pmb{w}_2 \rangle \triangleq \Re(\pmb{w}_1^\ast \pmb{w}_2)$, with $\pmb{w}_1^\ast$ denoting the conjugate transpose. The superscript $\top$ denotes transpose of a real or complex vector/matrix. The vectors of all zeros and ones are $\pmb{0}_N = [0,\dots,0]^\top \in \mathbb{R}^N$ and $\pmb{1}_N = [1,\dots,1]^\top \in \mathbb{R}^N$, respectively. For $\pmb{\Upsilon} = [\Upsilon_1,\dots,\Upsilon_N]^\top \in \mathbb{T}^N$, we define ${\rm e}^{i\pmb{\Upsilon}} \triangleq [{\rm e}^{i\Upsilon_1},\dots,{\rm e}^{i\Upsilon_N}]^\top \in \mathbb{C}^N$. For a function $V(\pmb{x}): \mathbb{R}^N \to \mathbb{R}_{+}$ with $\pmb{x} = [x_1,\dots,x_N]^\top \in \mathbb{R}^N$, its gradient is $\nabla V_{\pmb{x}} \triangleq [\partial V/\partial x_1,\dots,\partial V/\partial x_N]^\top$. The space of continuously differentiable functions of order $N$ is denoted by $\mathcal{C}^N$.

	A graph is a pair $\mathcal{G} = (\mathbb{V}, \mathbb{E})$, consisting of a finite set of vertices $\mathbb{V}$, and a finite set of edges $\mathbb{E} \subseteq \mathbb{V}\times \mathbb{V}$. An edge $\mathfrak{e} = (j, k) \in \mathbb{E}$ between nodes $j$ and $k$ indicates that $j \in \mathcal{V}$ is the head and $k \in \mathcal{V}$ is the tail. An undirected edge is denoted by $\mathfrak{e} = \{j, k\} \in \mathbb{E}$ and indicates that the information can be shared from node $j$ to node $k$ and vice-versa. The graph $\mathcal{G}$ is called an undirected graph if it consists of only undirected edges. The set of neighbors of node $j$ is denoted by $\mathcal{N}_j \triangleq \{k \mid (j, k) \in \mathbb{E}\}$. The incidence matrix $\mathcal{B}\in\mathbb{R}^{|\mathbb{V}|\times|\mathbb{E}|}$ and Laplacian matrix $\mathcal{L}\in\mathbb{R}^{|\mathbb{V}|\times|\mathbb{V}|}$ are defined in the standard way \cite{mesbahi2010graph}, with $\mathcal{L}=\mathcal{B}\mathcal{B}^\top$ for undirected graphs. For $\pmb{z}\in\mathbb{C}^N$, the Laplacian quadratic form $Q_{\mathcal{L}}(\pmb{z})=\langle \pmb{z},\mathcal{L}\pmb{z}\rangle$ is positive semi-definite, and $Q_{\mathcal{L}}(\pmb{z})=0$ iff $\pmb{z}=z_0\pmb{1}_N$ for some $z_0\in\mathbb{C}$. A graph is circulant if and only if its Laplacian is circulant. In this paper, agents are assumed to interact over an undirected, limited-range communication topology that remains connected. Throughout this paper, agents are assumed to interact over an undirected and limited communication topology that remains connected (i.e., $\mathcal{G}$ contains a spanning tree).
	
\subsection{A Review of M\"{o}bius Transformation}	
In this subsection, we provide a brief overview of the M\"{o}bius Transformation and its properties in mapping two non-concentric circles to the concentric circles from \cite{singh2024stabilizing}, which plays a key role in the subsequent developments in this paper. 

A M\"{o}bius transformation is a mapping of the form \cite[Chapter 2, pg. 23]{priestley2003introduction}, \cite[Chapter 3, pg. 137]{needham2023visual}:
\begin{equation}\label{mt}
	z \mapsto w = f(z) \triangleq \frac{az + b}{cz + d}, \quad a, b, c, d \in \mathbb{C}, ~ ad - bc \neq 0,
\end{equation}
where $ad-bc$ is referred to as the determinant of $f$. The condition $ad - bc \neq 0$ grantees that $f$ is neither undefined (i.e., at least one of $c$ and $d$ is nonzero) nor one that is identically a constant, as can be seen in the case when $c \neq 0$ and $ad = bc \implies a/c = b/d = k$ for some $k \in \mathbb{C}$, \eqref{mt} maps to a constant, i.e., $f(z) = (ckz + dk)/(cz + d) = k$. The special case of $c = 0$ implies that neither $a$ nor $d$ is zero, and hence, $f(z)$ is well-defined. Clearly, the domain of $f$ is $\mathbb{C} \setminus \{-d/c\}$ if $c \neq 0$, and $\mathbb{C}$ if $c = 0$.  For all $z \in \mathbb{C} \setminus \{-d/c\}$, the derivative of $f$ is given by \cite[Chapter 8, pg. 95]{priestley2003introduction}:
\begin{equation}\label{mobius_transform_derivative}
	f'(z) = \frac{ad - bc}{(cz + d)^2} \neq 0,	
\end{equation}
since $ad - bc \neq 0$. Hence, $f$ is conformal in $\mathbb{C} \setminus \{-d/c\}$, where $c \neq 0$. In the special case of $c = 0$, $f(z) = (az+b)/d$ is a linear mapping that is conformal everywhere. In the subsequent discussion, we do not explicitly mention the special case of $c =0$. Further, solving for $z$ in \eqref{mt}, the inverse of $f$ is obtained as \cite[Chapter 2, pg. 23]{priestley2003introduction}:
\begin{equation}\label{imt}
	w \mapsto z = f^{-1}(w) = \frac{dw-b}{a-cw},
\end{equation}
which has the same determinant $ad - bc \neq 0$, and hence, is also a M\"{o}bius transformation on the domain $\mathbb{C} \setminus \{a/c\}$. Note that $f(z) : \mathbb{C} \setminus \{-d/c\} \mapsto \mathbb{C} \setminus \{a/c\}$ defines a bijective and holomorphic map. For simplicity, \eqref{mt} is often regarded as a mapping from $\mathbb{C}_{\infty}$ to $\mathbb{C}_{\infty}$ such that $f(\infty) = a/c$ and $f(-d/c) = \infty$ if $c \neq 0$, and $f(\infty) = \infty$ if $c = 0$, in the extended complex plane $\mathbb{C}_{\infty} = \mathbb{C} \cup \{\infty\}$ (i.e., the complex plane augmented by the point at infinity). With the aid of $\mathbb{C}_{\infty}$, it can be deduced that $f: \mathbb{C}_{\infty} \to \mathbb{C}_{\infty}$ is bijective and conformal everywhere in its domain \cite[Chapter 2, pg. 23]{priestley2003introduction}. 

The M\"{o}bius transformation \eqref{mt} shares some remarkable geometric properties \cite[Ch. 3, p. 168]{needham2023visual}: (i) it maps circles (and lines) to circles, and (ii) if two points are symmetric with respect to a circle, then their images under \eqref{mt} are symmetric with respect to the corresponding image circle. The latter is referred to as the \emph{symmetry principle} \cite[Ch. 3, pp. 139--142]{needham2023visual}. By exploiting these properties, a M\"{o}bius transformation can be devised that maps any pair non-intersecting, nonconcentric circles in the $z$-plane to a pair of concentric circles in the $w$-plane. This property forms the foundation of our subsequent developments, which we formalize in the following theorem.

\begin{thm}\label{thm_mapping_of_circles}
	Let $\mathcal{C}: |z| = 1$ and $\mathcal{C}': |z - \lambda| = \mu$ be two circles, where $\lambda \neq 0$ and $\mu > 0$ are given real numbers. Suppose $\mathcal{C}$ and $\mathcal{C}'$ have no point in common (i.e., the two circles do not intersect or touch each other). We have the following results:
	\begin{itemize}[leftmargin=*]
		\item[(a)] (Mapping of nonconcentric circles to concentric circles \cite[Chapter 16, pp. 1234$-$1240]{turyn2014advanced}). There exist real numbers $\alpha$ and $\beta$ such that the M\"{o}bius transformation 
	\begin{equation}\label{mobius_transformation}
		w = f(z) = \frac{z+\alpha}{z+\beta},
	\end{equation}
	maps both the circles $\mathcal{C}$ and $\mathcal{C}'$ to concentric circles centered at $0$ in the $w$-plane, as long as, it turns out that $z = -\beta$ is on neither $\mathcal{C}$ nor $\mathcal{C}'$. Here, $\beta = 1/\alpha$, where the real number $\alpha$ satisfies
	\begin{equation}\label{mobius_roots}
		\lambda\alpha^2 + (\lambda^2 - \mu^2 + 1)\alpha + \lambda = 0.
	\end{equation}
	Further, the images of the circles in the $w$-plane are given by $f(\mathcal{C}): |w| = |\alpha|$ and $f(\mathcal{C}'): |w| = |(\lambda + \alpha)/\mu|$, respectively.
	\item[(b)] (Mapping of enclosed regions \cite[Theorem~2]{singh2024stabilizing}). Let $\alpha_{+}$ and $\alpha_{-}$ be the roots of quadratic equation \eqref{mobius_roots}, corresponding to the positive and negative sign of its discriminant, respectively. Further, define the smaller root $\alpha_s$ and the larger root $\alpha_{\ell}$ based on their magnitudes as:
	\begin{align}
		\label{alpha_s}\alpha_s &= \{\alpha_{\pm} \mid |\alpha_s| = \min\{|\alpha_+|, |\alpha_{-}|\}\},\\
		\label{alpha_l}\alpha_{\ell} &= \{\alpha_{\pm} \mid |\alpha_\ell| = \max\{|\alpha_+|, |\alpha_{-}|\}\}.
	\end{align}
Then, the M\"{o}bius transformation \eqref{mobius_transformation} with $\alpha = \alpha_s$ (resp., $\alpha = \alpha_{\ell}$) preserves (resp., reverses) the interior-exterior mapping of the regions enclosed by the circles $\mathcal{C}$ and $\mathcal{C}'$ in the $w$-plane. Moreover, the mapping of the region enclosed between the two circles is always an annulus in the $w$-plane, irrespective of the roots $\alpha_s$ or $\alpha_{\ell}$.	
\end{itemize}
\end{thm}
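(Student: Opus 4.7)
The plan is to deduce the structure of the Möbius map from the symmetry principle, verify the claimed image radii by direct evaluation, and then handle part (b) by exploiting a notable feature of the quadratic \eqref{mobius_roots}: the product of its two roots is exactly $1$.

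For part (a), I would start from the observation that $f(-\alpha)=0$ and $f(-\beta)=\infty$. A Möbius map sends a circle to one centered at the origin in the $w$-plane if and only if, by the symmetry principle, the preimages of $0$ and $\infty$ are inverse points with respect to that circle. Applying this condition to $\mathcal{C}\colon |z|=1$ yields $(-\alpha)(-\beta)=1$, that is, $\beta=1/\alpha$, while applying it to $\mathcal{C}'\colon |z-\lambda|=\mu$ yields $(\alpha+\lambda)(\beta+\lambda)=\mu^{2}$. Substituting $\beta=1/\alpha$ into the latter and clearing denominators produces exactly \eqref{mobius_roots}. The requirement that $-\beta$ lie on neither circle merely rules out $f$ having a pole on the circles. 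To recover the image radii, I would evaluate $f$ at real points of each circle: $f(1)=\alpha$ and $f(-1)=-\alpha$ give $|f(\mathcal{C})|=|\alpha|$, and a brief calculation of $f(\lambda+\mu)\,f(\lambda-\mu)=-(\lambda+\alpha)^{2}/\mu^{2}$, which uses the symmetry relation $(\alpha+\lambda)(\beta+\lambda)=\mu^{2}$ to kill the $\mu^{2}$ terms, shows that the common modulus of both images is $|(\lambda+\alpha)/\mu|$.

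For part (b), the key algebraic fact is that the quadratic \eqref{mobius_roots} has equal leading coefficient and constant term, both $\lambda$, so by Vieta $\alpha_{+}\alpha_{-}=1$, giving $\alpha_{s}\alpha_{\ell}=1$ and hence $|\alpha_{s}|\le 1\le|\alpha_{\ell}|$. To test whether the interior of $\mathcal{C}$ maps to the interior or the exterior of its image, I would evaluate $f$ at the interior point $z=0$: since $\beta=1/\alpha$, we have $f(0)=\alpha/\beta=\alpha^{2}$. Comparing $|f(0)|=|\alpha|^{2}$ with the image radius $|\alpha|$ yields the dichotomy at once: for $\alpha=\alpha_{s}$ the interior of $\mathcal{C}$ maps into $|w|<|\alpha_s|$, while for $\alpha=\alpha_{\ell}$ it maps onto $|w|>|\alpha_\ell|$. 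An analogous test at an interior point of $\mathcal{C}'$ (e.g., its center $z=\lambda$), combined with the constraint $(\alpha+\lambda)(\beta+\lambda)=\mu^{2}$ to simplify $f(\lambda)$, yields the same dichotomy for $\mathcal{C}'$. The annulus claim then follows from topology: the region bounded between two disjoint circles in $\mathbb{C}_{\infty}$ is doubly connected, and $f$ is a homeomorphism of $\mathbb{C}_{\infty}$, so its image must be a doubly-connected region bounded by the two concentric image circles, i.e., an annulus. This conclusion is independent of the chosen root because both choices still produce two concentric image circles; only their nesting order is swapped.

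The step I expect to be most delicate is the interior/exterior verification for $\mathcal{C}'$. Comparing $|f(\lambda)|$ against $|(\lambda+\alpha)/\mu|$ cleanly requires simultaneously using $\beta=1/\alpha$ and the symmetry relation $(\alpha+\lambda)(\beta+\lambda)=\mu^{2}$, and one must also carefully track the geometric configuration (whether $\mathcal{C}'$ lies inside $\mathcal{C}$, $\mathcal{C}$ inside $\mathcal{C}'$, or the two are externally disjoint) to ensure that the test point is truly interior in the intended sense and that the inequality $|\alpha|\lessgtr 1$ translates to the correct nesting of $f(\mathcal{C})$ and $f(\mathcal{C}')$. Once that bookkeeping is in place, the rest of part (b) reduces to comparing two magnitudes and invoking the monotone behavior of the image as $|\alpha|$ crosses the threshold $1$.
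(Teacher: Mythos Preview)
The paper does not actually prove Theorem~\ref{thm_mapping_of_circles}; it is stated as a preliminary result with part (a) attributed to \cite{turyn2014advanced} and part (b) to the authors' own earlier work \cite{singh2024stabilizing}, followed immediately by a corollary, a remark, and an illustrative example. So there is no in-paper proof to compare against.

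That said, your proposal is correct and follows the standard route the paper itself gestures at when it invokes the symmetry principle just before the theorem. The inverse-point conditions for $-\alpha$ and $-\beta$ with respect to $\mathcal{C}$ and $\mathcal{C}'$ yield $\beta=1/\alpha$ and \eqref{mobius_roots} exactly as you write; the real-axis evaluations $f(\pm 1)=\pm\alpha$ and the product computation $f(\lambda+\mu)f(\lambda-\mu)=-(\lambda+\alpha)^2/\mu^2$ are clean ways to read off the image radii; and for part (b) the Vieta identity $\alpha_s\alpha_\ell=1$ (which the paper also records in Corollary~\ref{cor_quadractic_roots}) together with $f(0)=\alpha^2$ gives the interior--exterior dichotomy for $\mathcal{C}$ immediately. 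One simplification you may have overlooked: the step you flag as delicate, namely the separate interior/exterior check for $\mathcal{C}'$, is unnecessary. Once you know how the interior of $\mathcal{C}$ maps and you have the annulus conclusion from the doubly-connected topology argument, the behavior of the third region (interior or exterior of $\mathcal{C}'$, depending on configuration) is forced, because $f$ is a homeomorphism of $\mathbb{C}_\infty$ that permutes the three complementary regions. This also delivers the correct nesting order of the image radii without any extra inequality chasing.
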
 

\begin{figure}[t!]
	\centering{
		\subfigure[$\lambda > 0$]{\includegraphics[width=3.0cm]{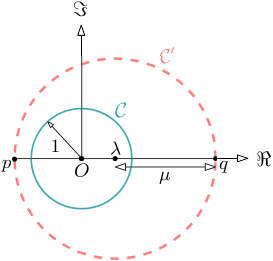}\label{A1}}\hspace*{15pt}
		\subfigure[$\lambda < 0$]{\includegraphics[width=3.0cm]{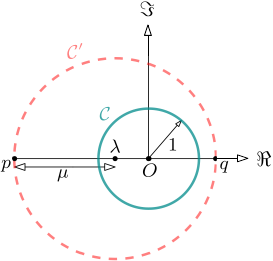}\label{A2}}}
	\caption{Circle $\mathcal{C}'$ encircling circle $\mathcal{C}$ with $\mu > 1 + |\lambda|$.}
	\label{lambda_mu_rel1}
	\vspace*{-5pt}
\end{figure}

\begin{figure}[t!]
	\centering{
	\subfigure[$\lambda > 0$]{\includegraphics[width=3.0cm]{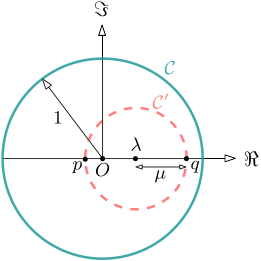}\label{A3}}\hspace*{15pt}
	\subfigure[$\lambda < 0$]{\includegraphics[width=3.0cm]{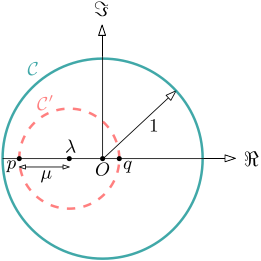}\label{A4}}}
	\caption{Circle $\mathcal{C}$ encircling circle $\mathcal{C}'$ with $\mu < 1 - |\lambda|$.}
	\label{lambda_mu_rel2}
	\vspace*{-15pt}
\end{figure}

We further recall the following corollary to Theorem~\ref{thm_mapping_of_circles}.
\begin{cor}[Roots of {\eqref{mobius_roots}} \cite{singh2024stabilizing}]\label{cor_quadractic_roots}
	Under the conditions in Theorem~\ref{thm_mapping_of_circles}, the roots $\alpha$ of \eqref{mobius_roots} satisfy the following: (a) their product is equal to unity, and (b) the point $z = -\beta$ lies neither on $\mathcal{C}$ nor $\mathcal{C}'$, iff the solutions of \eqref{mobius_roots} are not $\alpha = \pm 1$. 	
\end{cor}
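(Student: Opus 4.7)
Part (a) is immediate from Vieta's formulas: the quadratic \eqref{mobius_roots} has both leading coefficient and constant term equal to $\lambda \neq 0$, so the product of its two roots equals $\lambda/\lambda = 1$. (This also confirms that neither root can be $0$, so $1/\alpha$ is well defined throughout.)

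For part (b), since $\beta = 1/\alpha$, the point in question is $z = -\beta = -1/\alpha$. The plan is to separately translate the conditions ``$-1/\alpha \in \mathcal{C}$'' and ``$-1/\alpha \in \mathcal{C}'$'' into algebraic conditions on $\alpha$, and then combine each with the defining quadratic \eqref{mobius_roots}. The first, $|{-1/\alpha}| = 1$, simplifies immediately to $|\alpha| = 1$, which for real $\alpha$ is exactly $\alpha = \pm 1$. The second, $|{-1/\alpha - \lambda}| = \mu$, squaring and clearing the denominator $|\alpha|^2$, yields $(\lambda^2 - \mu^2)\alpha^2 + 2\lambda\alpha + 1 = 0$.

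The crux is to show that this last relation, taken together with the defining quadratic \eqref{mobius_roots}, forces $\alpha = \pm 1$. My approach is to multiply \eqref{mobius_roots} by $\alpha$ to get $(\lambda^2 - \mu^2)\alpha^2 = -\lambda\alpha^3 - \alpha^2 - \lambda\alpha$, substitute into the squared-membership condition above, and simplify to the cubic $\lambda\alpha^3 + \alpha^2 - \lambda\alpha - 1 = 0$, which factors cleanly as $(\alpha^2 - 1)(\lambda\alpha + 1) = 0$. The spurious root $\alpha = -1/\lambda$ is then excluded by substituting back into \eqref{mobius_roots}, which forces $\mu^2 = 0$ and contradicts the standing assumption $\mu > 0$. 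Hence $\alpha = \pm 1$, which disposes of the forward direction of the ``iff''. The converse is straightforward: if $\alpha = \pm 1$, then $-\beta = \mp 1$ has unit modulus, so $-\beta \in \mathcal{C}$.

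I anticipate the only nontrivial step to be the algebraic manipulation producing the factorization $(\alpha^2 - 1)(\lambda\alpha + 1) = 0$; once it is in hand, the exclusion of the extraneous root $\alpha = -1/\lambda$ via $\mu > 0$ is transparent, and the two implications chain together to give the stated biconditional.
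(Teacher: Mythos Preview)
Your argument is correct. Part (a) via Vieta is immediate, and your treatment of part (b) is clean: the $\mathcal{C}$ condition reduces directly to $|\alpha|=1$, while for the $\mathcal{C}'$ condition your elimination of $(\lambda^2-\mu^2)\alpha^2$ using \eqref{mobius_roots} and the resulting factorization $(\alpha^2-1)(\lambda\alpha+1)=0$ are both valid, and the exclusion of $\alpha=-1/\lambda$ by substitution (forcing $\mu=0$) is correct. The converse is handled as you say.

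As for the comparison: the present paper does not actually prove Corollary~\ref{cor_quadractic_roots}; it merely recalls the statement from the cited reference \cite{singh2024stabilizing}. So there is no in-paper proof to compare your approach against. Your self-contained algebraic derivation is therefore a genuine addition rather than a restatement, and it stands on its own.
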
 

\begin{remark}
It is worth emphasizing that Theorem~\ref{thm_mapping_of_circles} applies to any arbitrary pair of non-concentric circles through an appropriate coordinate transformation (see \cite{singh2024stabilizing} for details). Moreover, depending on the parameters $\lambda$ and $\mu$, two distinct geometric configurations may arise for the circles $\mathcal{C}$ and $\mathcal{C}'$: (a) $\mathcal{C}'$ encloses $\mathcal{C}$ (Fig.~\ref{lambda_mu_rel1}), which occurs when $\mu > 1 + |\lambda|$, and (b) $\mathcal{C}$ encloses $\mathcal{C}'$ (Fig.~\ref{lambda_mu_rel2}), which occurs when $\mu < 1 + |\lambda|$. Importantly, the conclusions of Theorem~\ref{thm_mapping_of_circles} are valid in both cases \cite{singh2024stabilizing}.	
\end{remark}

\begin{figure}[t!]
	\centering{\hspace*{-0.5cm}
		\includegraphics[width=8.5cm]{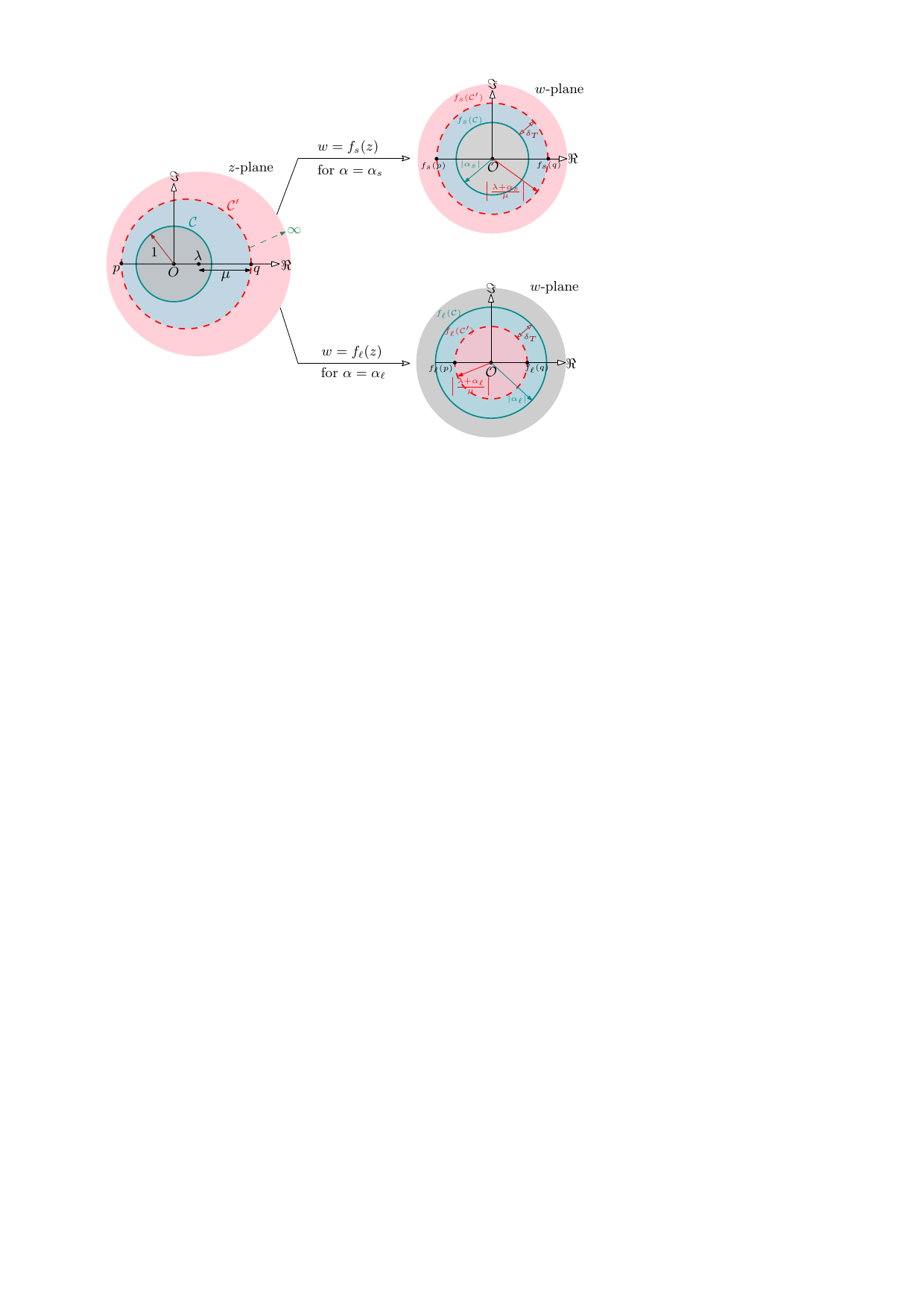}}
	\caption{Mapping of circles $\mathcal{C}$ and $\mathcal{C}'$ along with their enclosed regions under M\"{o}bius transformation \eqref{mobius_transformation}.}
	\label{fig_mobius_mapping}
	\vspace*{-15pt}
\end{figure}

Theorem~\ref{thm_mapping_of_circles} is illustrated in Fig.~\ref{fig_mobius_mapping}, which depicts how the three regions in the $z$-plane: (i) the annular region between the two circles, (ii) the interior of the inner circle, and (iii) the exterior of the outer circle, are mapped into the $w$-plane under both the smaller and larger roots of \eqref{mobius_roots}. To further clarify, the following example demonstrates Theorem~\ref{thm_mapping_of_circles} for the case where $\mathcal{C}'$ encloses $\mathcal{C}$ (see also \cite{singh2024stabilizing} for additional examples).

\begin{example}\label{example_1}
For non-concentric circles $\mathcal{C}: |z| = 1$ and $\mathcal{C}': |z - ({1}/{2})| = \sqrt{{5}/{2}}$, \eqref{mobius_roots} yields $({1}/{2})\alpha^2 - (5/4)\alpha + ({1}/{2}) = 0$, since $\lambda = 1/2$ and $\mu = \sqrt{{5}/{2}}$. On solving this, we get two positive roots: $\alpha_{-} = 1/2$ and $\alpha_{+} = 2$. Selecting the smaller root $\alpha_s = 1/2$, \eqref{mobius_transformation} is given by $w= f_s(z) = (2z+1)/(2z+4)$, which preserves the interior-exterior relation with $\mathcal{C}$ mapping to $f_s(\mathcal{C}): |w| = {1}/{2}$ and $\mathcal{C}'$ to $f_s(\mathcal{C}'): |w| = \sqrt{2/5}$. For the larger root $\alpha_{\ell} = 2$, the transformation $w= f_{\ell}(z) = (2z+4)/(2z+1)$ reverses the interior–exterior relation with $\mathcal{C}$ mapping to $|w| = 2$ and $\mathcal{C}'$ to $|w| = \sqrt{{5}/{2}}$. Thus, the results are in accordance with Theorem~\ref{thm_mapping_of_circles}. 
\end{example}

\section{System Description, Phase Coordination and Problem Formulation}\label{section_3_system_and_problem}

\subsection{Agent Model}
We consider a group of $N$ agents moving in a planar space $\mathbb{R}^2$. For ease of analysis, we identify the $\mathbb{R}^2$ plane by a complex plane $\mathbb{C}$ through the mapping $[\mathfrak{p}, \mathfrak{q}]^\top \mapsto \mathfrak{p} + i\mathfrak{q}$ and use complex numbers to describe the agent's non-holonomic motion model, given by
\begin{equation}\label{model_actual_plane}
	\dot{r}_k = v_k\mathrm{e}^{i\theta_k}, \ \dot{v}_k = u_k, \ \dot{\theta}_k = \omega_k, \ \forall k = 1, \ldots, N,
\end{equation}
where $\dot{r}_k$ is the velocity vector of the $k^\text{th}$ agent, $\theta_k \in \mathbb{S}^1$ is its heading, $v_k \in \mathbb{R}$ is its speed. Further, $u_k \in \mathbb{R}$ and $\omega_k \in \mathbb{R}$ are the feedback controllers that govern the linear acceleration $\dot{v}_k$ and turn-rate $\dot{\theta}_k$ of the $k^\text{th}$ agent, respectively. For characterizing the agent's motion, we follow the convention that if $\omega_k > 0$ (resp., $\omega_k < 0$), the $k^\text{th}$ agent moves in the anticlockwise direction (resp., the clockwise direction). Further, let the position of the $k^\text{th}$ agent be represented by
\begin{equation}\label{position_actual_plane}
	r_k = x_k + iy_k \triangleq |r_k|{\rm e}^{i\phi_k},
\end{equation}
where $[x_k, y_k]^\top \in \mathbb{R}^2$ are the Cartesian coordinates and $\phi_k \in \mathbb{S}^1$ denotes the angle of the position vector $r_k$ from the positive real axis. Please note that \eqref{model_actual_plane} describes the agent's model in the polar form where $v_k = |\dot{r}_k|$ is positive by convention. However, since $v_k$ is governed by the controller $u_k$ in \eqref{model_actual_plane}, $v_k$ could either be positive or negative depending upon the nature of $u_k$. However, we shall later show that there exist controllers $u_k$ and $\omega_k$ such that $v_k(t) > 0$ for all $k$ and $t \geq 0$, if $v_k(0) > 0, \forall k$. Therefore, the polar presentation \eqref{model_actual_plane} is valid where the agent moves with the forward linear speed $v_k > 0$, a convention relevant to polar representation \cite{seyboth2014collective,sun2018circular}. To proceed further, we carry out the analysis by assuming $v_k > 0$ and prove the claim later in Corollary~\ref{cor_positive_speeds}.

\subsection{Characterization of Phase Coordinated Motion}
In this paper, one of our main objectives is to achieve phase coordination among the agents \eqref{model_actual_plane} characterized by the so-called \emph{phase-shifted order parameter}, defined as follows:
\begin{definition}
The phase-shifted order parameter $q$ of the group of $N$ agents in \eqref{model_actual_plane} is defined by
\begin{equation}\label{shifted_order_parameter}
	q \triangleq \frac{1}{N}\sum_{k=1}^{N} {\rm e}^{i\Theta_k} = |q|{\rm e}^{i\bar{\Theta}},
\end{equation}
where $\Theta_k = \theta_k + \chi_k$ with $\chi_k$ being the phase-shift in the heading angle $\theta_k$ for each $k$, $|q|$ is the magnitude of $q$ and $\bar{\Theta}$ is the resultant angle of phasors ${\rm e}^{i\Theta_k}, \forall k$.
\end{definition}
 
From \eqref{shifted_order_parameter}, it can be easily observed that $0 \leq |q| \leq 1$. Clearly, $|q| = 1$ if $\Theta_1 = \Theta_2 = \cdots = \Theta_N$, i.e., the shifted pharors ${\rm e}^{i\Theta_k}$ are synchronized. Further, $q = 0$ if $\pmb{1}^\top_N {\rm e}^{i\pmb{\Theta}} = 0$, where ${\rm e}^{i\pmb{\Theta}} = [{\rm e}^{i\Theta_1}, \cdots, {\rm e}^{i\Theta_N}]^\top \in \mathbb{T}^N$, which corresponds to a balanced configuration where the phasors cancel. Depending on $\chi_k$, we have the following two special cases:  
\begin{itemize}[leftmargin=*]
	\item If $\chi_k \equiv 0$ for all $k$, \eqref{shifted_order_parameter} equates to $q = (1/N)\sum_{k=1}^{N} {\rm e}^{i\theta_k}$, which implies that $|q| = 1$ (resp., $q = 0$) corresponds to synchronization (resp., balancing) of the actual heading angles $\theta_k, \forall k$. This scenario has been addressed in literature \cite{sepulchre2007stabilization,sepulchre2008stabilization} where \eqref{shifted_order_parameter} is referred to as \emph{order parameter} \cite{strogatz2000kuramoto}. 
	\item If $\chi_k \equiv \chi$ for all $k$, where $\chi \in \mathbb{S}^1$ is a constant. In this situation, \eqref{shifted_order_parameter} equates to $q = {\rm e}^{i\chi}(1/N)\sum_{k=1}^{N}{\rm e}^{i\theta_k}$ which again implies the same for $|q|$ as stated above, since $|{\rm e}^{i\chi}| = 1$. Alternatively, \eqref{shifted_order_parameter} is invariant to constant rotation or phase-shift \cite{sepulchre2007stabilization,jain2018collective}. 
\end{itemize}

Unlike these works, in this paper, we address a scenario where $\chi_k$ is no longer a constant; instead, it depends on the position $r_k$ of the $k^\text{th}$ agent. To be specific, we consider that $\chi_k$ is introduced due to the M\"{o}bius transformation $f(r_k)$ at the position $r_k$ of the $k^\text{th}$ agent, and is given by the argument of its positional derivative $df(r_k)/dr_k$. For convenience, we refer to this angle $\chi_k$ as the M\"{o}bius phase-shift, and define it formally as follows:
\begin{definition}[M\"{o}bius phase-shift]\label{def_mobius_phase_shift}
Let $f(r_k)$ be the M\"{o}bius transformation at the position $r_k$ of the $k^\text{th}$ agent in \eqref{model_actual_plane}. The M\"{o}bius phase-shift $\chi_k$ at $r_k$ is defined as: 
\begin{equation}\label{chi_argument_form}
\chi(r_k) \triangleq \arg\left[\frac{df(r_k)}{dr_k}\right].
\end{equation}
\end{definition} 

The rationale for introducing $\chi_k$ in Definition~\ref{def_mobius_phase_shift} will become evident in the next subsection, where we reformulate the problem in a coordinate frame induced by the M\"{o}bius transformation. For notational convenience, we denote $f_k \triangleq f(r_k)$ and $\chi(r_k) \triangleq \chi_k$. In view of this, we now state the following definition, which formalizes our phase coordination objective for the agents \eqref{model_actual_plane} in terms of their heading angles $\theta_k$:

\begin{definition}[M\"{o}bius phase shift-coupled synchronization and balancing]\label{def_mobius_synchronization_balancing}
Consider the phase-shifted order paremeter \eqref{shifted_order_parameter} with $\chi_k$ being the M\"{o}bius phase-shift defined by \eqref{chi_argument_form}. A set of phases $\{\theta_k\}_{k=1}^{N}$ is said to achieve M\"{o}bius phase shift-coupled synchronization if $|q| = 1$, that is, $\theta_1 + \chi_1 = \theta_2 + \chi_2 = \cdots = \theta_N + \chi_N$. A set of phases $\{\theta_k\}_{k=1}^{N}$ is said to achieve M\"{o}bius phase shift-coupled balancing if $q = 0$, that is, $\pmb{1}^\top_N {\rm e}^{\i(\pmb{\theta} + \pmb{\chi})} = 0$, where ${\rm e}^{i(\pmb{\theta} + \pmb{\chi})} = [{\rm e}^{i(\theta_1 + \chi_1)}, {\rm e}^{i(\theta_2 + \chi_2)}, \cdots, {\rm e}^{i(\theta_N + \chi_N)}]^\top$. 
\end{definition}

\subsection{Problem Description}
Our main control objective aims to achieve the aforementioned phase patterns among the agents around a common circular orbit $\mathcal{C}$, while ensuring that their trajectories remain bounded within a nonconcentric circular boundary $\mathcal{C}'$. This situation is depicted in Fig.~\ref{fig_agents_motion} where the $k^\text{th}$ agent is trying to follow the desired circular orbit $\mathcal{C}$ of radius $r_d \in \mathbb{R}_{+}$ and center $c_d \in \mathbb{C}$. To enforce its motion on $\mathcal{C}$, the positional error $e_k$ between the agent's current position (point $\mathcal{Q}$) and the desired position (point $\mathcal{P}$) on $\mathcal{C}$ must be minimized. The point $\mathcal{P}$ essentially describes a point on $\mathcal{C}$ where the tangent is parallel to the agent's current velocity vector $\dot{r}_k$. One other such point on $\mathcal{C}$ is $\mathcal{P}'$, which is diametrically opposite to $\mathcal{P}$. Either of these points can be chosen with $\mathcal{P}$ (resp., $\mathcal{P}'$) characterizing the agent's motion in the anticlockwise (resp., the clockwise) direction on $\mathcal{C}$. Note that the unit tangent vector at $\mathcal{P}$ on $\mathcal{C}$ along the $k^\text{th}$ agent's velocity direction is ${\rm e}^{i\theta_k}$, which on rotating by $\pi/2$ radians in the clockwise or anticlockwise direction leads to the unit vectors $-i \mathrm{e}^{i \theta_k}$ (along $\mathcal{C}_d\mathcal{P}$) and $i \mathrm{e}^{i \theta_k}$ (along $\mathcal{P}\mathcal{C}_d$), respectively. Now, using vector addition rule, it can be written that $\mathcal{P}\mathcal{Q} = \mathcal{C}_d\mathcal{Q} - \mathcal{C}_d\mathcal{P} = (\mathcal{O}\mathcal{Q} - \mathcal{O}\mathcal{C}_d) - \mathcal{C}_d\mathcal{P}$ and $\mathcal{P}'\mathcal{Q} = \mathcal{C}_d\mathcal{Q} - \mathcal{C}_d\mathcal{P}' = (\mathcal{O}\mathcal{Q} - \mathcal{O}\mathcal{C}_d) - \mathcal{C}_d\mathcal{P}'$. Consequently, the error $e_k$ can be expressed as: 
\begin{equation}\label{error_primary}
e_k = (r_k - c_d) \pm i {r_d}{\rm e}^{i\theta_k},
\end{equation}
where $+$ (resp., $-$) sign corresponds to the robot's motion in the anticlockwise (resp., the clockwise) direction. 

\begin{figure}[t!]
	\centering{
		\includegraphics[width=6cm]{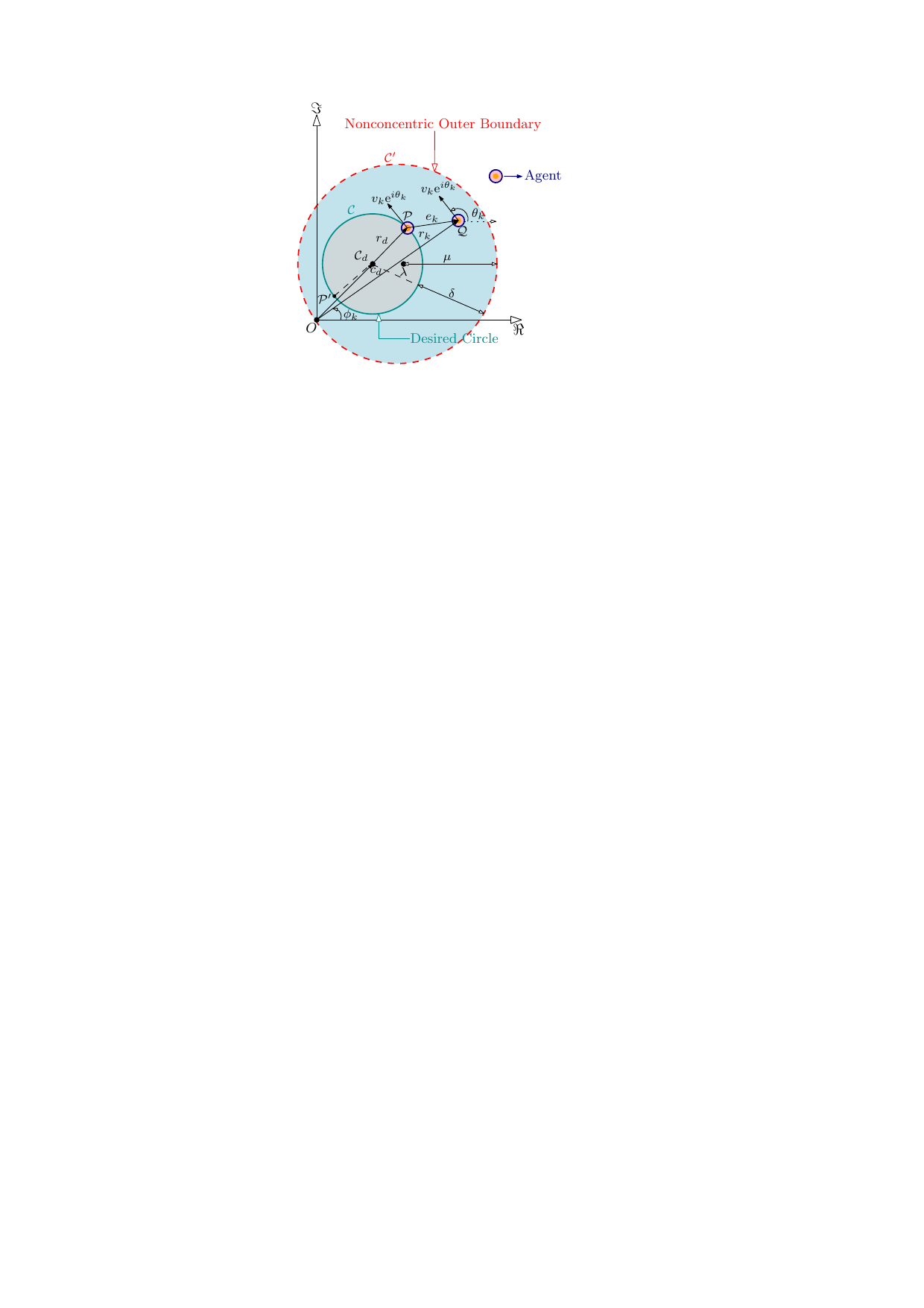}}
	\caption{The $k^\text{th}$ agent following desired circular orbit $\mathcal{C}$ with motion restricted within nonconcentric outer boundary $\mathcal{C'}$.}
	\label{fig_agents_motion}
	\vspace*{-15pt}
\end{figure}

Without loss of generality, and by applying a suitable coordinate transformation, we represent the two circles $\mathcal{C}$ and $\mathcal{C}'$ in Fig.~\ref{fig_agents_motion} within the framework of Theorem~\ref{thm_mapping_of_circles} as $\mathcal{C}: |z| = 1$ and $\mathcal{C}': |z - \lambda| = \mu$, respectively, where we assume $c_d = 0 + i0$ and $r_d = 1$. Under these conditions, \eqref{error_primary} can be expressed as:
\begin{equation}\label{error_actual_plane}
e_k = r_k  \pm i {\rm e}^{i\theta_k}, \ \forall k.
\end{equation}  
Further, we make the following practical assumption in our analysis, which ensures that there is sufficient separation between the two circular boundaries so that the agents, owing to their finite shape and size, can perform safe maneuvers. 
\begin{assumption}\label{assumption_cicular_boundaries} 
The nonconcentric circles $\mathcal{C}$ and $\mathcal{C}'$ in Fig.~\ref{fig_agents_motion} are non-touching and non-intersecting. 
\end{assumption} 
This assumption is inherent in Theorem~\ref{thm_mapping_of_circles}. As illustrated in Fig.~\ref{lambda_mu_rel1}, for $\mathcal{C}'$ to encircle $\mathcal{C}$ under this condition, it must hold that $\mu > 1 + |\lambda|$. Clearly, the equality $\mu = 1 + |\lambda|$ corresponds to the case where $\mathcal{C}$ and $\mathcal{C}'$ touch each other, which is excluded from our discussion. We now formally state the problem addressed in this paper.
\begin{problem}\label{problem_actual_plane}
Consider the circles $\mathcal{C}: |z| = 1$ and $\mathcal{C}': |z - \lambda| = \mu$, where $\lambda$ and $\mu$ are such that Assumption~\ref{assumption_cicular_boundaries} holds with $\mathcal{C}'$ encircling $\mathcal{C}$. Let the agents \eqref{model_actual_plane} be interacting over an undirected and connected grapgh $\mathcal{G}$ with Laplacain $\mathcal{L}$ and begin their motion with initial positions $r_k(0) \in  \mathcal{Z} \triangleq \{z_r \in \mathbb{C} \mid |z_r - \lambda| < \mu\}$ and speeds $v_k(0) > 0$ for all $k$. Design the control laws $u_k$ and $\omega_k$ for all $k = 1, \ldots, N$, such that the agents asymptotically stabilizes on the desired circular orbit $\mathcal{C}$ with their heading angles $\{\theta_k\}_{k=1}^{N}$ in M\"{o}bius phase shift-coupled synchronization or balancing as in Definition~\ref{def_mobius_synchronization_balancing}, and their trajectories remains bounded within nonconcentric circular boundary $\mathcal{C}'$ at all times. That is, $e_k(t) \to 0$ (equivalently, $|r_k(t)| \to 1$) for all $k$, and either $\theta_1 + \chi_1 = \cdots = \theta_N + \chi_N$ or $\pmb{1}^\top_N {\rm e}^{i(\pmb{\theta} + \pmb{\chi})} = 0$, as $t \to \infty$, and $r_k(t) \in \mathcal{Z}$ for all $k$ and $t \geq 0$.
\end{problem}

Our solution approach to Problem~\ref{problem_actual_plane} relies on transforming the agents' motion \eqref{model_actual_plane} in the original coordinates to the transformed coordinates by using M\"{o}bius transformation \eqref{mobius_transformation}. In this direction, we first establish the equivalence between the two agent models in both planes, followed by the problem formulation in the transformed plane, which is the topic of the next section.

\section{Model Equivalence and Problem Formulation in Transformed Plane}\label{section_4_model_equivalence}

\subsection{Model Equivalence in Two Planes}
Under the M\"{o}bius transformation \eqref{mobius_transformation} in Theorem~\ref{thm_mapping_of_circles} (where $\beta = 1/\alpha$), the position of the $k^\text{th}$ agent is mapped from the actual plane to the transformed plane as:
\begin{equation}\label{position_transformed_plane}
	\rho_k = f(r_k) = \frac{r_k + \alpha}{r_k + \beta} = \frac{\alpha(r_k + \alpha)}{1 + \alpha r_k} \triangleq |\rho_k|\mathrm{e}^{i\psi_k}, \ \forall k,
\end{equation}
where $|\rho_k|$ is the magnitude and $\psi_k \in \mathbb{S}^1$ is the direction of the transformed position vector $\rho_k \in \mathbb{C}$. It is worth mentioning that \eqref{mobius_transformation} is applied only on the positions of the $k^\text{th}$ robot, and hence, $\dot{\rho} \neq f(\dot{r})$. To obtain the velocity $\dot{\rho}_k$ of the $k^\text{th}$ robot in the transformed plane, we need to differentiate \eqref{position_transformed_plane} and define other states. Because $f(r_k)$ is holomorphic, one can easily obtain the time-derivative of \eqref{position_transformed_plane} using the chain rule as:
\begin{equation}
	\dot{\rho}_k = \left(\frac{df_k}{dr_k}\right)\dot{r}_k,
\end{equation}
where 
\begin{equation}\label{position_derivative_mobius_mapping}
	\frac{df_k}{dr_k} = \frac{\alpha(1 - \alpha^2)}{(1 + \alpha r_k)^2},
\end{equation}
is derived according to \eqref{mobius_transform_derivative}. In this way, we relate the $k^\text{th}$ agent's velocities in two planes as:
\begin{equation}
	\dot{\rho}_k = \left(\frac{df_k}{dr_k}\right)\dot{r}_k = \frac{\alpha(1 - \alpha^2)\dot{r}_k}{(1 + \alpha r_k)^2} \triangleq |\dot{\rho}_k|{\rm e}^{i\gamma_k},
\end{equation}
where, 
\begin{subequations}\label{s_k_gamma_k}
	\begin{align}
		\label{s_k}	|\dot{\rho}_k| & = \left|\frac{df_k}{dr_k}\right||\dot{r}_k| = \left|\frac{\alpha(1 - \alpha^2)}{(1 + \alpha r_k)^2}\right|v_k \triangleq s_k,\\
		\label{gamma_k}	\gamma_k & = \arg[\dot{r}_k] + \arg\left[\frac{df_k}{dr_k}\right] = \theta_k + \chi_k,
	\end{align}
\end{subequations}
using \eqref{model_actual_plane} and \eqref{chi_argument_form}. With these notations, let the $k^\text{th}$ agent in the transformed plane be modeled as:
\begin{equation}\label{model_transformed_plane}
	\dot{\rho}_k = s_k{\rm e}^{i\gamma_k}, \ \dot{s}_k = \nu_k, \ \dot{\gamma}_k = \Omega_k, \ k = 1, \ldots, N,
\end{equation}
where $s_k > 0$ is its forward speed, $\gamma_k \in \mathbb{S}^1$ is the heading angle, and $\nu_k \in \mathbb{R}$ and $\omega_k \in \mathbb{R}$ are the speed and turn-rate controllers for the $k^\text{th}$ agent in the transformed plane, respectively. Please note that $s_k$ is positive by definition \eqref{s_k}. As modeled in \eqref{model_transformed_plane}, since $\dot{s}_k$ is governed by the controller $\nu_k$, $s_k$ may also be negative depending on $\nu_k$, and might result in an ambiguity about \eqref{s_k}. However, we shall prove later in Corollary~\ref{cor_positive_speeds} that there exist design parameters such that the proposed controllers $\nu_k$ and $\Omega_k$ in \eqref{model_transformed_plane} ensure that $v_k(t) > 0$ (provided $v_k(0) > 0$) and $s_k(t) > 0$ for all $k$ and $t \geq 0$. 

Please note that the model \eqref{model_transformed_plane} is derived by applying the M\"{o}bius transformation on the agent's model \eqref{model_actual_plane} in the original plane. However, it is possible to establish an analogous relation between the two models by using the idea of inverse M\"{o}bius transformation. Notably, both models have their own importance in controller design as examined in the subsequent discussion. From \eqref{position_transformed_plane}, one arrives at the inverse mapping below, 
\begin{equation}\label{inverse_mobius_transformation}
	\rho_k \mapsto r_k = f^{-1}(\rho_k) = \frac{\alpha^2-\rho_k}{\alpha(\rho_k-1)}, \ \forall k,
\end{equation}
which provides the agents' position in the original plane based on their positions in the transformed plane. Further, analogous to the above discussion, it can be written from \eqref{inverse_mobius_transformation} that
\begin{equation}
	\dot{r}_k = \left(\frac{df^{-1}_k}{d \rho_k}\right)\dot{\rho}_k = |\dot{r}_k|{\rm e^{\theta_k}},
\end{equation}
where 
\begin{equation}\label{position_derivative_inverse_mobius_mapping}
	\frac{df^{-1}_k}{d\rho_k} = \frac{(1 - \alpha^2)}{\alpha(\rho_k - 1)^2},
\end{equation}
\begin{subequations}\label{r_k_theta_k}
	\begin{align}
		\label{v_k}	|\dot{r}_k| & = \left|\frac{df^{-1}_k}{d \rho_k}\right||\dot{\rho}_k| = \left|\frac{(1 - \alpha^2)}{\alpha(\rho_k - 1)^2}\right|s_k = v_k,\\
		\label{theta_k}	\theta_k & = \arg[\dot{\rho}_k] + \arg\left[\frac{df^{-1}_k}{d \rho_k}\right] = \gamma_k + \zeta_k,
	\end{align}
\end{subequations}
where we denote by
\begin{equation}\label{zeta_argument_form}
	\zeta_k = \arg\left[\frac{df^{-1}_k}{d \rho_k}\right].
\end{equation}

One can further simplify \eqref{chi_argument_form} and \eqref{zeta_argument_form} as mentioned in Lemma~\ref{lem_chi_zeta} in the Appendix. Note that \eqref{position_actual_plane} has a singularity at $r_k = -1/\alpha$, and \eqref{inverse_mobius_transformation} has a singularity at $\rho_k = 1$. However, we shall show in the below Lemma~\ref{lem_transformation_derivative_properties} that both \eqref{position_actual_plane} and \eqref{inverse_mobius_transformation} are well-defined in our framework. 

\begin{lem}\label{lem_transformation_derivative_properties}
Under the conditions specified in Theorem~\ref{thm_mapping_of_circles}, the following statements are true:
\begin{itemize}[leftmargin=*]
	\item[(a)] The mappings \eqref{position_actual_plane} and \eqref{inverse_mobius_transformation} are well-defined. 
	\item[(b)] Their positional derivatives in \eqref{position_derivative_mobius_mapping} and \eqref{position_derivative_inverse_mobius_mapping} are non-zero, that is, ${df_k}/{dr_k} \neq 0$ and  ${df^{-1}_k}/{d\rho_k} \neq 0$ for all $k$.   
\end{itemize} 
\end{lem}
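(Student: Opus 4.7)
My overall strategy is to treat part~(a) and part~(b) in sequence, using the fact that a M\"{o}bius transformation of the form $f(z)=(z+\alpha)/(z+\beta)$ has exactly one singularity (at $z=-\beta$) and its inverse has exactly one singularity (at $w=1$), and then to rule out both singularities using the geometric information already packaged in Theorem~\ref{thm_mapping_of_circles} and Corollary~\ref{cor_quadractic_roots}.

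\textbf{Part (a): well-definedness of $f$ and $f^{-1}$.} The map $f$ in \eqref{position_transformed_plane} fails to be finite only when its denominator vanishes, i.e.\ when $r_k=-\beta=-1/\alpha$. First I would invoke Corollary~\ref{cor_quadractic_roots}(b), which rules out $\alpha=\pm1$, so that $-\beta$ is guaranteed not to lie on either $\mathcal{C}$ or $\mathcal{C}'$. Next, I would argue that with the relevant root (namely $\alpha=\alpha_s$, which by Theorem~\ref{thm_mapping_of_circles}(b) preserves the interior-exterior assignment), the image of $\mathcal{Z}=\{z:|z-\lambda|<\mu\}$ under $f$ is the bounded disk $\{w:|w|<|(\lambda+\alpha_s)/\mu|\}$ in the $w$-plane. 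Since $f$ sends $-\beta$ to $\infty$, and $\infty$ does not belong to this bounded image, the preimage $-\beta$ cannot lie in $\mathcal{Z}$. Consequently $r_k\neq-\beta$ for every $k$, and \eqref{position_transformed_plane} is well-defined. For $f^{-1}$ in \eqref{inverse_mobius_transformation}, the only candidate singularities are $\alpha=0$ and $\rho_k=1$. The former is excluded directly from \eqref{mobius_roots}: substituting $\alpha=0$ forces $\lambda=0$, contradicting the hypothesis $\lambda\neq0$ of Theorem~\ref{thm_mapping_of_circles}. The latter is excluded by observing that $\rho_k=1$ would require $(r_k+\alpha)/(r_k+\beta)=1$, i.e.\ $\alpha=\beta=1/\alpha$, so $\alpha=\pm1$; this is ruled out by Corollary~\ref{cor_quadractic_roots}(b).

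\textbf{Part (b): non-vanishing of the positional derivatives.} The explicit forms \eqref{position_derivative_mobius_mapping} and \eqref{position_derivative_inverse_mobius_mapping} reduce the claim to a check on numerators and denominators. The common numerator factor $\alpha(1-\alpha^2)$ (resp.\ $(1-\alpha^2)/\alpha$) is nonzero because $\alpha\neq 0$ (from $\lambda\neq 0$) and $\alpha\neq\pm1$ (from Corollary~\ref{cor_quadractic_roots}(b)). The denominators $(1+\alpha r_k)^2$ and $\alpha(\rho_k-1)^2$ are nonzero exactly by the conclusions of part~(a), since $1+\alpha r_k=\alpha(r_k+\beta)$ vanishes iff $r_k=-\beta$, and $\rho_k-1$ vanishes iff $\rho_k=1$. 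Hence both positional derivatives are strictly nonzero for every $k$.

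\textbf{Anticipated obstacle.} The routine ingredient is the algebraic bookkeeping in part~(b); the actual hurdle is the first step of part~(a), namely establishing that $-\beta$ cannot lie in the operating region $\mathcal{Z}$. This is not purely an algebraic statement, since for the other root $\alpha_\ell$ one has $-\beta=-\alpha_s$, which does fall inside $\mathcal{Z}$ when $\mathcal{C}'$ encircles $\mathcal{C}$. Thus the proof must lean on the geometric dichotomy in Theorem~\ref{thm_mapping_of_circles}(b): the image of a bounded region under a M\"{o}bius map is bounded precisely when the singular point $-\beta$ lies outside that region, and this is exactly the situation induced by selecting $\alpha=\alpha_s$. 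Making this selection explicit, and tying it cleanly to Theorem~\ref{thm_mapping_of_circles}(b) and Corollary~\ref{cor_quadractic_roots}, is where the proof earns its keep.
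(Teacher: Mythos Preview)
Your proposal follows the same skeleton as the paper: both argue by contradiction, invoking Theorem~\ref{thm_mapping_of_circles}'s hypothesis for the forward map's singularity at $r_k=-\beta$ and reducing the inverse map's singularity $\rho_k=1$ to $\alpha=\pm1$, which is ruled out by Corollary~\ref{cor_quadractic_roots}; part~(b) is then a direct check on numerators and denominators in both versions. The difference is that the paper stops at ``$-\beta$ lies on neither $\mathcal{C}$ nor $\mathcal{C}'$'' (straight from the hypothesis of Theorem~\ref{thm_mapping_of_circles}) without locating $-\beta$ relative to the interior region $\mathcal{Z}$, whereas you supply the additional geometric argument that for $\alpha=\alpha_s$ the bounded image of $\mathcal{Z}$ forces $-\beta\notin\mathcal{Z}$. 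Your observation that this fails for $\alpha=\alpha_\ell$ (since then $-\beta=-1/\alpha_\ell=-\alpha_s$, which satisfies $|-\alpha_s|<1$ and hence lies inside $\mathcal{C}\subset\mathcal{Z}$) is correct and flags a subtlety the paper's brief proof does not address; at the lemma level the paper is content with well-definedness away from the pole $-\beta$, while your version pins down when the operating region actually avoids that pole.
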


\begin{proof}
We prove these by contradiction. 
\begin{itemize}[leftmargin=*]
	\item[(a)] Note that \eqref{mobius_transformation} is well-defined since the singularity at $r_k = -1/\alpha = -\beta$ contradicts our hypothesis in Theorem~\ref{thm_mapping_of_circles}, which states that $z = -\beta$ does not lie either on $\mathcal{C}$ or $\mathcal{C}'$. Further, \eqref{inverse_mobius_transformation} is well-defined in the sense that, at singularity $\rho_k = 1$, it follows from \eqref{position_transformed_plane} that $1 + \alpha r_k = \alpha(r_k + \alpha) \implies \alpha^2 = 1 \implies \alpha = \pm 1$, which again contradicts our hypothesis in Theorem~\ref{thm_mapping_of_circles} (see Corollary~\ref{cor_quadractic_roots}).
	
	\item[(b)] From the proof of part (a), it is clear that \eqref{position_derivative_mobius_mapping} and \eqref{position_derivative_inverse_mobius_mapping} are well-defined. Further, it is straightforward to observe that \eqref{position_derivative_mobius_mapping} and \eqref{position_derivative_inverse_mobius_mapping} are zero only if $\alpha = \pm 1$, which is a contradiction according to Corollary~\ref{cor_quadractic_roots}.
\end{itemize} 
This concluded the proof. 
\end{proof}

\begin{figure}[t!]
	\centering{
	\includegraphics[width=7cm]{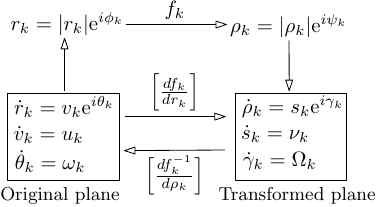}}
	\caption{Model equivalence in two planes.}
	\label{fig_model_mapping}
	\vspace*{-15pt}
\end{figure}

A summary of model equivalence in both the planes is depicted in Fig.~\ref{fig_model_mapping}. Here, one may draw an analogy with the frequency response analysis for linear systems in classical control theory, where the output of the system, with respect to a sinusoidal input signal, is scaled and shifted in phase by the magnitude and argument of the system's transfer function, respectively. Similarly, in Fig.~\ref{fig_model_mapping}, $df_k/dr_k$ and $df^{-1}_k/d\rho_k$ may be regarded equivalent to a transfer function connecting two unicycle models in original and transformed planes such that the speed and heading angle of the $k^\text{th}$ agent in two planes is scaled and shifted in phase by the magnitude and argument of $df_k/dr_k$ and $df^{-1}_k/d\rho_k$, respectively, as evident by expressions \eqref{s_k_gamma_k} and \eqref{r_k_theta_k}. This is precisely the implication of the conformal property of the M\"{o}bius transformation \eqref{mobius_transformation}. Furthermore, the sense of rotation of the agents' velocity vectors in the transformed plane depends on the roots $\alpha_s$ and $\alpha_{\ell}$ of \eqref{mobius_roots}, as summarized in the below theorem from \cite{singh2024stabilizing}.

\begin{thm}[\hspace{-.1pt}Sense of rotation in two planes\cite{singh2024stabilizing}]\label{thm_velcoty_direction_transformed_plane}
	Consider the robot models \eqref{model_actual_plane} and \eqref{model_transformed_plane} in the actual and transformed planes, respectively. Under the M\"{o}bius transformation \eqref{mobius_transformation}, the sense of rotation of the $k^\text{th}$ agent's velocity vector in the two planes is the same if $\alpha = \alpha_s$, and it is opposite if $\alpha = \alpha_\ell$.
\end{thm}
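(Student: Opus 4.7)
The plan is to reduce the claim to a direct computation on the target circular orbit $\mathcal{C}:|r_k|=1$, where the sense of rotation of the velocity vector coincides with the direction of angular motion. Writing $r_k = e^{i\phi_k}$, the tangential condition on $\mathcal{C}$ yields $\theta_k = \phi_k \pm \pi/2$ (with the sign determined by the direction of motion), and hence $\omega_k = \dot{\theta}_k = \dot{\phi}_k$. By Theorem~\ref{thm_mapping_of_circles}(a), the image point lies on $|\rho_k|=|\alpha|$; writing $\rho_k = |\alpha|\,e^{i\psi_k}$ and applying the same tangential argument in the transformed plane gives $\Omega_k = \dot{\gamma}_k = \dot{\psi}_k$. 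The theorem therefore reduces to comparing the signs of $\dot{\psi}_k$ and $\dot{\phi}_k$, i.e., the sign of $d\psi_k/d\phi_k$.

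Next, I would substitute $r_k = e^{i\phi_k}$ into \eqref{mobius_transformation} and differentiate with respect to $\phi_k$ to obtain
\begin{equation*}
\frac{d\rho_k}{d\phi_k} \;=\; \frac{i\,\alpha(1-\alpha^2)\,e^{i\phi_k}}{(1+\alpha e^{i\phi_k})^2},
\end{equation*}
while on the other hand $d\rho_k/d\phi_k = i\,|\alpha|\,e^{i\psi_k}\,(d\psi_k/d\phi_k)$. Using the identity $e^{i\psi_k} = \mathrm{sgn}(\alpha)(e^{i\phi_k}+\alpha)/(1+\alpha e^{i\phi_k})$ that follows from $\rho_k/|\rho_k|$, together with the factorization $(e^{i\phi_k}+\alpha)(1+\alpha e^{i\phi_k}) = e^{i\phi_k}\,|1+\alpha e^{i\phi_k}|^2$, these two expressions combine to give
\begin{equation*}
\frac{d\psi_k}{d\phi_k} \;=\; \frac{1-\alpha^2}{|1+\alpha e^{i\phi_k}|^2},
\end{equation*}
which is real-valued and has the sign of $1-\alpha^2$; the denominator is strictly positive by Lemma~\ref{lem_transformation_derivative_properties}.

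To finish, I would invoke Corollary~\ref{cor_quadractic_roots}, which provides $\alpha_s\alpha_\ell = 1$ with both roots distinct from $\pm 1$. Combined with the definitions \eqref{alpha_s}--\eqref{alpha_l} (so $|\alpha_s| \leq |\alpha_\ell|$), this forces the strict ordering $|\alpha_s| < 1 < |\alpha_\ell|$. Hence $1-\alpha_s^2 > 0$ and $1-\alpha_\ell^2 < 0$, so $d\psi_k/d\phi_k$ is positive for $\alpha=\alpha_s$ (same sense of rotation in the two planes) and negative for $\alpha=\alpha_\ell$ (opposite sense), which is precisely the claim. The step I expect to be the main obstacle is justifying the reduction to the circular-orbit setting: for a general transient trajectory off $\mathcal{C}$, one has $\Omega_k = \omega_k + \dot{\chi}_k$, and in principle the extra term $\dot{\chi}_k$ could dominate. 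However, since the control goal is asymptotic stabilization on $\mathcal{C}$, the statement is naturally read in that asymptotic regime, and the geometric interpretation via the interior–exterior mapping property in Theorem~\ref{thm_mapping_of_circles}(b) supplies a conformality/continuity argument that extends the conclusion to a neighborhood of $\mathcal{C}$.
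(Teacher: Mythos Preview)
The paper does not actually prove this theorem; it is imported from \cite{singh2024stabilizing} without argument, so there is no in-paper proof to compare against. Your computation on the target circle $\mathcal{C}$ is correct: the identity $(e^{i\phi_k}+\alpha)(1+\alpha e^{i\phi_k})=e^{i\phi_k}\,|1+\alpha e^{i\phi_k}|^2$ (valid because $\alpha\in\mathbb{R}$) gives $d\psi_k/d\phi_k=(1-\alpha^2)/|1+\alpha e^{i\phi_k}|^2$, and the root ordering $|\alpha_s|<1<|\alpha_\ell|$ from Corollary~\ref{cor_quadractic_roots} then yields the claimed dichotomy. This on-circle version is exactly what the paper uses downstream, since the sole consequence drawn from the theorem is fixing the sign of $\sigma$ in \eqref{error_transformed_plane}, which concerns the direction of traversal on $f(\mathcal{C})$.

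Regarding the gap you flag: the underlying phenomenon is not a failure of orientation preservation ($f$ is holomorphic with nonvanishing derivative by Lemma~\ref{lem_transformation_derivative_properties}, hence always locally orientation-preserving), but rather that for $\alpha=\alpha_\ell$ the pole $-\beta=-1/\alpha_\ell$ lies \emph{inside} $\mathcal{C}$, so the interior of $\mathcal{C}$ is sent to the exterior of $f(\mathcal{C})$. Counterclockwise traversal of $\mathcal{C}$ keeps the interior on the left; after the map, the image of that region (now the exterior of $f(\mathcal{C})$) is still on the left, which forces clockwise traversal of $f(\mathcal{C})$. This is precisely the interior--exterior reversal of Theorem~\ref{thm_mapping_of_circles}(b), so your closing remark linking the result to that theorem is the correct geometric mechanism, not merely a heuristic patch. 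The instantaneous relation $\Omega_k=\omega_k+\dot{\chi}_k$ on transient trajectories is a separate issue and is not what the theorem asserts; your reduction to the on-circle setting is therefore not a limitation but the natural reading of the claim.
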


We shall now formulate an equivalent problem in the next subsection where the nonconcentric boundary constraints in Problem~\ref{problem_actual_plane} are mapped to uniform spatial constraints in the transformed plane keeping the same control objectives. This transformation facilitates controller design as we ponder in the upcoming discussion.    

\subsection{Equivalent Problem in Transformed Plane}
Let us now turn our focus to the transformed plane where both the circles $f(\mathcal{C})$ and $f(\mathcal{C}')$ are concentric as shown in Fig.~\ref{fig_mobius_mapping}. Since the desired circle $\mathcal{C}$ in original plane on which the agents are required to converge mapped to $f(\mathcal{C})$ in the transformed plane, we aim to stabilize agents \eqref{model_transformed_plane} on the circle $f(\mathcal{C})$ of radius $|\alpha|$. Following \eqref{error_actual_plane}, the position error in the transformed plane can be analogously written as: 
\begin{equation}\label{error_transformed_plane}
	\mathcal{E}_k = \rho_k \pm i |\alpha|\mathrm{e}^{i\gamma_k}=\rho_k+i\sigma \mathrm{e}^{i\gamma_k}, \ \forall k,
\end{equation}
where $\sigma \triangleq \pm |\alpha| \neq 0$, and $``+"$ and $``-"$ signs are analogously used as in \eqref{error_actual_plane}. Following Theorem 3, this essentially implies that $\sigma = +|\alpha|$ if $\alpha = \alpha_s$, and $\sigma = -|\alpha|$ if $\alpha = \alpha_{\ell}$. In the transformed plane, our focus is to minimize error \eqref{error_transformed_plane}, which is now subject to uniform radial constraints between the two circles in the transformed plane and is given by (see Fig.~\ref{fig_mobius_mapping}):
\begin{equation}\label{delta_transformed}
	\delta_T = 
	\begin{cases}
		|(\lambda + \alpha)/\mu| - |\alpha|,  & \text{if } \alpha = \alpha_s\\
		|\alpha| - |(\lambda + \alpha)/\mu|,  & \text{if } \alpha = \alpha_{\ell}.
	\end{cases}	
\end{equation} 

Moreover, the phase-shifted order parameter \eqref{shifted_order_parameter} can be expressed solely in terms of the heading angles $\gamma_k$ in the transformed plane as: $q = (1/N)\sum_{k=1}^{N}{\rm e}^{i\gamma_k}$, using \eqref{gamma_k}. As a result, synchronization and balancing of $\theta_k + \chi_k$ in Problem~\ref{problem_actual_plane} is equivalent to synchronization and balancing of $\gamma_k$. Additionally, we also regulate the speed $s_k$ of the $k^\text{th}$ agent in the transformed plane to some constant value $s_d > 0$. As will be established later in Corollary~\ref{cor_positive_speeds}, $s_d$ is a design parameter and can be appropriately chosen to ensure that $s_k(t) > 0$ for all $k$ and $t \geq 0$, and so is $v_k(t) > 0$ for all $k$ and $t \geq 0$, provided $v_k(0) > 0$ for all $k$. Please note that the control over speed $s_k$ does not affect our objectives in Problem~\ref{problem_actual_plane} and is done for the simplicity of the controller design, as discussed in the next section. We assume that the interaction topology among agents is independent of the coordinate transformation and remains the same in both planes. With this, we now describe the equivalent problem in the transformed plane as follows: 

\begin{problem}[Problem in transformed plane]\label{problem_transformed_plane}
	Consider the concentric circles $f(\mathcal{C}): |w| = |\alpha|$ and  $f(\mathcal{C}'): |w| = |(\lambda + \alpha)/\mu|$, obtained under the  M\"{o}bius transformation \eqref{mobius_transformation}, where $\lambda$ and $\mu$ are defined in Theorem~\ref{thm_mapping_of_circles} and $\alpha$ is the root of \eqref{mobius_roots}. Based on the choice of $\alpha$ as discussed in Theorem~\ref{thm_mapping_of_circles}, Problem~\ref{problem_actual_plane} is equivalent to the following two problems in the transformed plane:
	\begin{enumerate}[leftmargin=*]
		\item[\emph{(P1)}] Trajectory-Constraining Problem: If $\alpha = \alpha_s$. Design the controls $\nu_k$ and $\Omega_k$ for all $k = 1, \ldots, N$, such that the agents \eqref{model_transformed_plane} with initial positions $\rho_k(0) \in  \mathcal{W}_s \triangleq \{w_{\rho} \in \mathbb{C} \mid |w_{\rho}| < |(\lambda + \alpha_s)/\mu|\}, \forall k$, and interacting over an undirected and connected graph $\mathcal{G}$ with Laplacian $\mathcal{L}$, asymptotically converge on the circle $f_s(\mathcal{C})$ at constant speed $s_d$ in synchronization (resp., balancing) of phasors ${\rm e}^{i\gamma_k}$ with their trajectories remain bounded within the concentric circle $f_s(\mathcal{C}')$ at all times. That is, $\mathcal{E}_k(t) \to 0$ (equivalently, $|\rho_k(t)| \to |\alpha_s|$), $s_k \to s_d, \forall k$, with $\gamma_1 = \cdots = \gamma_N$ (resp., $\pmb{1}^\top_N \mathrm{e}^{i{\pmb{\gamma}}} = 0$), as $t \to \infty$, and $\rho_k(t) \in  \mathcal{W}_s, \forall k$ and $\forall t \geq 0$. 
		\item[\emph{(P2)}] Obstacle-Avoidance Problem: If $\alpha = \alpha_{\ell}$. Design the controls $\nu_k$ and $\Omega$ for all $k = 1, \ldots, N$, such that the agents \eqref{model_transformed_plane} with initial positions $\rho_k(0) \in  \mathcal{W}_{\ell} \triangleq \{w_{\rho} \in \mathbb{C} \mid |w_{\rho}| > |(\lambda + \alpha_{\ell})/\mu|\}, \forall k$, and interacting over an undirected and connected graph $\mathcal{G}$ with Laplacian $\mathcal{L}$, asymptotically converge on the circle $f_{\ell}(\mathcal{C})$ at constant speed $s_d$ in synchronization (resp., balancing) of phasors ${\rm e}^{i\gamma_k}$ with their trajectories remain exterior to the concentric circle $f_{\ell}(\mathcal{C}')$ at all times. That is, $\mathcal{E}_k(t) \to 0$ (equivalently, $|\rho_k(t)| \to |\alpha_{\ell}|$), $s_k \to s_d, \forall k$, with $\gamma_1 = \cdots = \gamma_N$ (resp., $\pmb{1}^\top_N \mathrm{e}^{i{\pmb{\gamma}}} = 0$), as $t \to \infty$, and $\rho_k(t) \in \mathcal{W}_{\ell}, \forall k$ and $\forall t \geq 0$. 
	\end{enumerate}
\end{problem}

A direct consequence of synchronization of agents around the common circular orbits results in an interesting fact that the synchronization of phasors ${\rm e}^{i\gamma_k}$ in the transformed plane is equivalent to the synchronization of phasors ${\rm e}^{i\theta_k}$ in the original plane, irrespective of M\"{o}bius-shift $\chi_k$ in \eqref{chi_argument_form}. We prove this result in the following lemma:

\begin{figure}[t!]
	\centering{\hspace*{-0.5cm}
	\includegraphics[width=8.5cm]{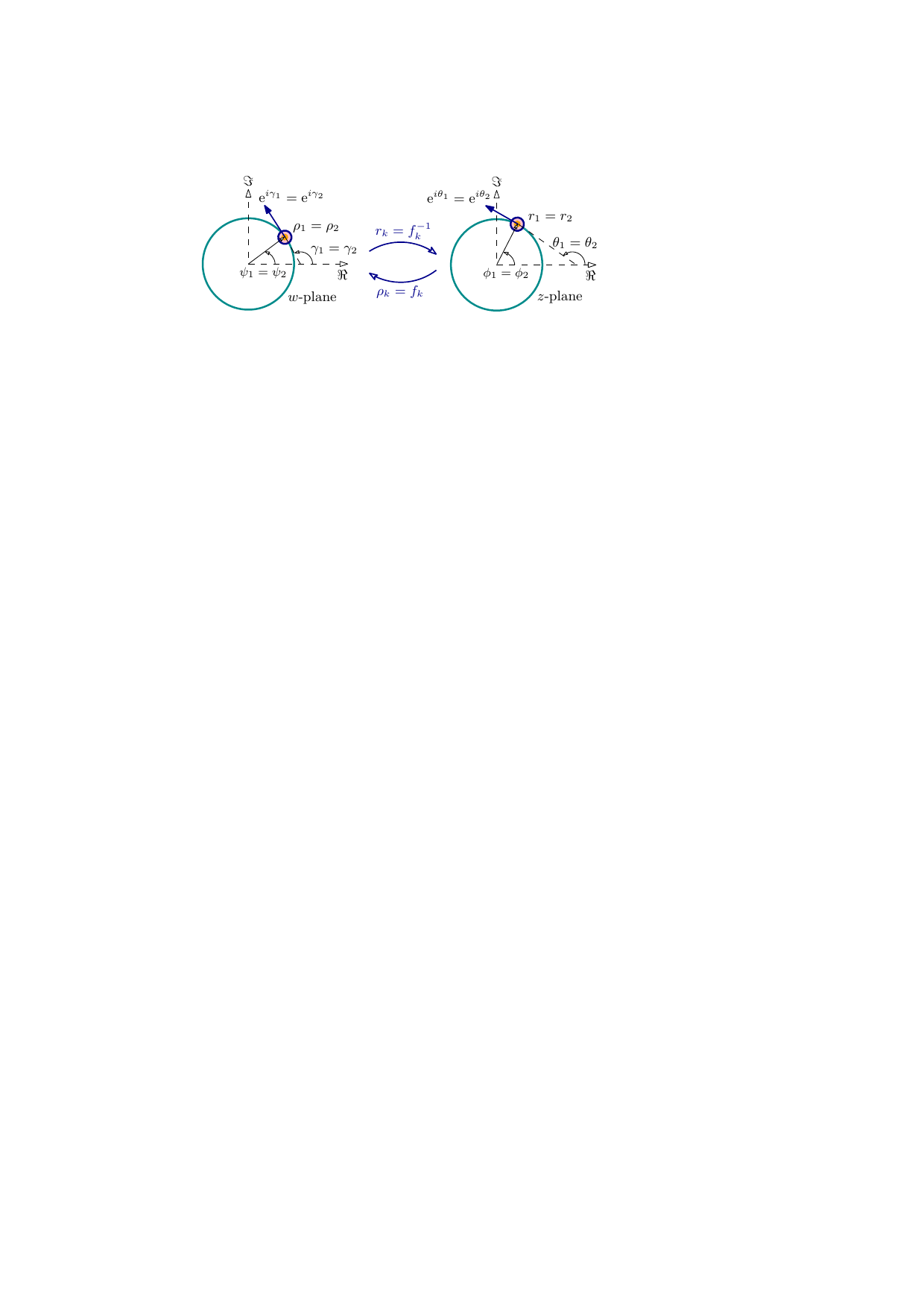}}
	\caption{Synchronization equivalence in both planes for $N=2$.}
	\label{fig_synchronization_equivalence}
	\vspace*{-15pt}
\end{figure}

\begin{lem}\label{lem_synchronization_equivalence_both_planes}
Consider $N$ agents described by models \eqref{model_actual_plane} and \eqref{model_transformed_plane} in the original and transformed planes, respectively, and related via the M\"{o}bius transformation \eqref{mobius_transformation}. The synchronization of phasors ${\rm e}^{i\gamma_k}$ on the circle $f(\mathcal{C})$ in the transformed plane is equivalent to the synchronization of phasors ${\rm e}^{i\theta_k}$ on the desired circle $\mathcal{C}$ in the original plane for all $k$.
 \end{lem}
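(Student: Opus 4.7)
The plan is to exploit a simple geometric fact: on a circle, once the direction of traversal (clockwise or anticlockwise) is fixed, the tangent direction at a point uniquely determines the point. Hence synchronized velocity directions on the common orbit force collocation of the agents, which in turn forces equality of the M\"{o}bius phase-shifts, and then the two notions of synchronization coincide.

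Concretely, I would start from the identity $\gamma_k = \theta_k + \chi_k$ established in \eqref{gamma_k}, where $\chi_k = \arg[df_k/dr_k]$ depends only on $r_k$ via \eqref{position_derivative_mobius_mapping}. First I would treat the forward direction: assume ${\rm e}^{i\theta_1}=\cdots={\rm e}^{i\theta_N}$ with every agent on $\mathcal{C}:|r_k|=1$. Writing $r_k={\rm e}^{i\phi_k}$ and using the tangency condition (for the anticlockwise case, $\dot r_k = i v_k {\rm e}^{i\phi_k}$, hence $\theta_k \equiv \phi_k + \pi/2 \pmod{2\pi}$; the clockwise case is symmetric), equality of the $\theta_k$'s forces equality of the $\phi_k$'s and therefore $r_1=\cdots=r_N$. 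Substituting this common value into \eqref{position_derivative_mobius_mapping} yields $\chi_1=\cdots=\chi_N$, so $\gamma_k=\theta_k+\chi_k$ is the same for every $k$, which is exactly the synchronization of ${\rm e}^{i\gamma_k}$.

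For the converse, I would run the same argument in the transformed plane using the inverse M\"{o}bius map. With all agents on $f(\mathcal{C}):|\rho_k|=|\alpha|$ and ${\rm e}^{i\gamma_1}=\cdots={\rm e}^{i\gamma_N}$, the analogous tangency relation on the circle of radius $|\alpha|$ forces $\rho_1=\cdots=\rho_N$. Here one must invoke Theorem~\ref{thm_velcoty_direction_transformed_plane} to pick the correct sense of rotation in the transformed plane (same as the original for $\alpha=\alpha_s$, reversed for $\alpha=\alpha_\ell$); this only flips the sign in the tangency relation and does not affect the conclusion that $\rho_k$ is determined by $\gamma_k$. Since the M\"{o}bius transformation \eqref{mobius_transformation} is bijective on the relevant domain (Lemma~\ref{lem_transformation_derivative_properties}), $\rho_1=\cdots=\rho_N$ gives $r_1=\cdots=r_N$ via \eqref{inverse_mobius_transformation}, hence $\chi_1=\cdots=\chi_N$, and finally $\theta_k=\gamma_k-\chi_k$ is common across $k$, establishing synchronization of ${\rm e}^{i\theta_k}$.

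The main obstacle will be nothing deep but rather stating the tangency condition cleanly in both senses (anticlockwise/clockwise) so that the two cases from \eqref{error_actual_plane} and \eqref{error_transformed_plane}, together with the root choice $\alpha\in\{\alpha_s,\alpha_\ell\}$, are handled uniformly; this is a bookkeeping step based on Theorem~\ref{thm_velcoty_direction_transformed_plane}. Once that is in place, the geometric ``tangent-determines-point'' observation and the bijectivity of the M\"{o}bius map supply both implications, and the picture in Fig.~\ref{fig_synchronization_equivalence} confirms the intuition in the $N=2$ case.
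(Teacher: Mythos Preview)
Your proposal is correct and follows essentially the same approach as the paper: both arguments rest on the geometric observation that, on a circle traversed in a fixed sense, the tangent direction determines the point, so synchronized headings force collocation, and then bijectivity of the M\"{o}bius map transfers collocation between planes. The only cosmetic difference is that the paper, after obtaining $r_1=\cdots=r_N$, invokes tangency on $\mathcal{C}$ again to conclude $\theta_k$ are equal, whereas you instead use $r_1=\cdots=r_N \Rightarrow \chi_1=\cdots=\chi_N$ together with $\theta_k=\gamma_k-\chi_k$; your version is slightly more explicit (and you treat both implications and the rotation-sense bookkeeping via Theorem~\ref{thm_velcoty_direction_transformed_plane} more carefully than the paper does), but the underlying idea is identical.
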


\begin{proof}
The proof follows from the observation that synchronization of agents' velocity vectors along a curve implies synchronization of their spatial positions. Since the agents \eqref{model_transformed_plane} move on the circle $f(\mathcal{C})$ at the steady-sate, their (unit) velocity vectors ${\rm e}^{i\gamma_k}$ must be tangent to $f(\mathcal{C})$. Consequently, synchronization of phasors ${\rm e}^{i\gamma_k}$ (alternatively, $\gamma_k = \gamma_0 \mod(2\pi)$ for some $\gamma_0 \in \mathbb{S}^1$ for all $k$) implies that $\rho_k = \rho_0$, and hence, $\psi_k = \phi_0 \mod(2\pi)$ for some $\rho_0 \in \mathbb{C}$ and $\psi_0 \in\mathbb{S}^1$ for all $k$ in the transformed plane. Applying the inverse M\"{o}bius transformation \eqref{inverse_mobius_transformation} then gives $r_k = r_0$, and hence, $\phi_k = \phi_0 \mod(2\pi)$ for some $r_0 \in \mathbb{C}$ and $\phi_0 \in\mathbb{S}^1$ for all $k$ in the original plane. Equivalently, the phasors ${\rm e}^{i\theta_k}$ synchronize in the original plane, i.e., $\theta_k = \theta_0 \mod(2\pi)$ for some $\theta_0 \in \mathbb{S}^1$ for all $k$. For clarity, this equivalence is illustrated in Fig.~\ref{fig_synchronization_equivalence} for the case of two agents.  
\end{proof}  

As a consequence of Lemma~\ref{lem_synchronization_equivalence_both_planes}, it follows that $\chi_k = \chi_0$ and $\zeta_k = \zeta_0$ for some $\chi_0, \zeta_0 \in \mathbb{S}^1$ for all $k = 1, \ldots, N$, under synchronization (please refer to Lemma~\ref{lem_chi_zeta} in the Appendix). Next, we aim to solve Problem~\ref{problem_transformed_plane} by designing the controllers $\nu_k$ and $\Omega_k$ in the transformed plane, followed by control design in the original plane to solve actual Problem~\ref{problem_actual_plane}. The next section focuses on the former control design.

\section{Control Design in Transformed Plane}\label{section_5_control_transformed_plane}
In this section, we begin by introducing the potential functions corresponding to the individual control objectives in Problem~\ref{problem_transformed_plane}, and then proceed to design the controllers using a composite Lyapunov function framework.

\subsection{Control Objective-based Potential Functions}

\subsubsection{Achieving synchronization and balancing in transformed plane}
Let $\pmb{\gamma} = [\gamma_1, \ldots, \gamma_N]^\top \in \mathbb{T}^N$ be the heading vector in the transformed plane and denote $\mathrm{e}^{i\pmb{\gamma}} = [\mathrm{e}^{i\gamma_1}, \ldots, \mathrm{e}^{i\gamma_N}]^\top \in \mathbb{C}^N$. To achieve synchronization and balancing of phasors ${\rm e}^{i\gamma_k}$, we consider the following quadratic potential function: 
\begin{equation}\label{potential_synchro_balanc}
	\mathcal{U}(\pmb{\gamma})=(1/2)\langle \mathrm{e}^{i\pmb{\gamma}}, \mathcal{L}\mathrm{e}^{i\pmb{\gamma}}  \rangle,
\end{equation}
relying on the Laplacian $\mathcal{L}$ of the interaction graph $\mathcal{G}$ among agents in \eqref{model_transformed_plane}. Note that if $\mathrm{e}^{i\pmb{\gamma}} = \mathrm{e}^{i\gamma_0}\pmb{1}_N$ for some $\gamma_0 \in \mathbb{S}^1$, implying that $\mathcal{L}\mathrm{e}^{i\pmb{\gamma}} = \mathrm{e}^{i\gamma_0}\mathcal{L}\pmb{1}_N = 0$ for an undirected and connected graph $\mathcal{G}$. In other words, \eqref{potential_synchro_balanc} minimizes if phasors $\mathrm{e}^{i\gamma_k}$ synchronizes. Further, if the graph $\mathcal{G}$ is circulant, it follows from Lemma~\ref{lem_circulant_graphs} in the Appendix that (i) the Laplacian $\mathcal{L}$ has orthogonal basis vectors of the exponential form, (ii) the potential \eqref{potential_synchro_balanc} can be expressed as  $\mathcal{U}(\pmb{\gamma}) = (1/2)\langle \mathcal{F}^\ast {\rm e}^{i\pmb{\gamma}}, \Lambda \mathcal{F}^\ast {\rm e}^{i\pmb{\gamma}}\rangle$, which on substituting $\pmb{w} = \mathcal{F}^\ast {\rm e}^{i\pmb{\gamma}}$, implies $\mathcal{U}(\pmb{\gamma}) = (1/2)\langle \pmb{w}, \Lambda\pmb{w}\rangle = (1/2)\sum_{k=2}^{N}|w_k|^2\lambda_k$. Since $\mathcal{F}$ is unitary, $\|\pmb{w}\| = \|{\rm e}^{i\pmb{\gamma}}\| = \sqrt{N}$. Therefore, $ \mathcal{U}(\pmb{\gamma}) = (1/2)\langle \pmb{w}, \Lambda\pmb{w}\rangle \leq (N/2) \lambda_{\text{max}}$, where, $\lambda_{\text{max}}$ is the maximum eigenvalue of $\mathcal{L}$. In other words, $\mathcal{U}(\pmb{\gamma})$ is bounded by $(N/2) \lambda_{\text{max}}$ for a circulant graph $\mathcal{G}$, and the maximum value is achieved by selecting ${\rm e}^{i\pmb{\gamma}}$ as the eigenvector of $\mathcal{L}$, associated with $\lambda_{\max}$. According to the above point (i), since ${\rm e}^{i\pmb{\gamma}}$ is orthogonal to $\pmb{1}_N$, i.e., $\pmb{1}_N^\top{\rm e}^{i\pmb{\gamma}} = 0$, this corresponds to the balancing of ${\rm e}^{i\pmb{\gamma}}$.    

The time-derivative of \eqref{potential_synchro_balanc}, along \eqref{model_transformed_plane}, is given by  
\begin{equation}\label{pf_sb1}
	\dot{\mathcal{U}} = \sum_{k = 1}^{N}\left(\frac{\partial\mathcal{U}}{\partial\gamma_k} \right)\dot{\gamma}_k = \sum_{k=1}^{N} \left(\frac{\partial\mathcal{U}}{\partial\gamma_k}\right)\Omega_k,
\end{equation}
where the gradient of $\mathcal{U}$ is obtained as:
\begin{equation}\label{gradient}
	\frac{\partial\mathcal{U}}{\partial\gamma_k} = \langle  i\mathrm{e}^{i\gamma_k}, \mathcal{L}_k \mathrm{e}^{i\pmb{\gamma}}\rangle = -\sum_{j\in \mathcal{N}_k} \sin (\gamma_j-\gamma_k),
\end{equation}
with $\mathcal{L}_k$ being the $k^\text{th}$ row of the Laplacian $\mathcal{L}$. Substituting \eqref{gradient} into \eqref{pf_sb1}, we get
\begin{equation} \label{potential_synchronization_balancing_time_derivative}
	\dot{\mathcal{U}} = \sum_{k=1}^{N}\langle  i\mathrm{e}^{i\gamma_k}, \mathcal{L}_k\mathrm{e}^{i\pmb{\gamma}}\rangle\Omega_k.
\end{equation}
which will be used subsequently. It is worth noting that the critical points of \eqref{potential_synchro_balanc} (i.e., where \eqref{gradient} equates to zero) characterize the synchronization and balancing of $\pmb{\gamma}$ under certain conditions on the graph $\mathcal{G}$; please refer to Lemma~\ref{lem_critical_points} in the Appendix. For an undirected and connected graph $\mathcal{G}$, since $\mathcal{L} = \mathcal{B}\mathcal{B}^\top$ (where $\mathcal{B}$ is the incidence matrix of $\mathcal{G}$), we can write $\langle \mathrm{e}^{i\pmb{\gamma}}, \mathcal{B}\mathcal{B}^\top \mathrm{e}^{i\pmb{\gamma}} \rangle = \langle  \mathcal{B}^\top\mathrm{e}^{i\pmb{\gamma}}, \mathcal{B}^\top\mathrm{e}^{i\pmb{\gamma}}\rangle$. Using this, one can obtain from \eqref{gradient} that
\begin{align}\label{ortho_prop}
	\sum_{k=1}^{N}\frac{\partial\mathcal{U}}{\partial\gamma_k}  =\langle i\mathrm{e}^{i\pmb{\gamma}}, \mathcal{B}\mathcal{B}^\top \mathrm{e}^{i\pmb{\gamma}} \rangle = \langle i \mathcal{B}^\top \mathrm{e}^{i\pmb{\gamma}}, \mathcal{B}^\top \mathrm{e}^{i\pmb{\gamma}}\rangle = 0,
\end{align}
implying that $\langle \nabla_{\pmb{\gamma}}\mathcal{U}, \pmb{1}_N \rangle = 0$, i.e., $\nabla_{\pmb{\gamma}}\mathcal{U} \perp \pmb{1}_N$. For an undirected and connected graph $\mathcal{G}$, we can also derive the upper bound on \eqref{potential_synchro_balanc} as follows:  
\begin{align}\label{cardinality}
	\nonumber & \frac{1}{2}\langle \mathrm{e}^{i\pmb{\gamma}}, \mathcal{L}\mathrm{e}^{i\pmb{\gamma}}  \rangle 
	 = \frac{1}{2}\langle \mathrm{e}^{i\pmb{\gamma}}, \mathcal{B}\mathcal{B}^\top \mathrm{e}^{i\pmb{\gamma}} \rangle = \frac{1}{2}\langle \mathcal{B}^\top \mathrm{e}^{i\pmb{\gamma}}, \mathcal{B}^\top \mathrm{e}^{i\pmb{\gamma}} \rangle\\
	 & \hspace*{-10pt} = \frac{1}{2}\sum_{\{j, k\} \in \mathbb{E}} \left| \mathrm{e}^{i\gamma_j} - \mathrm{e}^{i\gamma_k}  \right|^2 {\leq} \frac{1}{2}\sum_{\{j, k\} \in \mathbb{E}} \left(\left|\mathrm{e}^{i\gamma_j}\right| + \left|\mathrm{e}^{i\gamma_k}\right|\right)^2 {=} 2|\mathbb{E}|,
\end{align}
where $|\mathbb{E}|$ is the cardinality of the edge set $\mathbb{E}$.

\subsubsection{Constrained circular motion in transformed plane}
To stabilize the circular motion while enforcing uniform boundary constraints in the transformed plane, we minimize error \eqref{error_transformed_plane} by employing the following BLF for all $k$:
\begin{equation}\label{blf}
\mathcal{S}(\pmb{\mathcal{E}}) = \mathcal{S}(\pmb{r}, \pmb{\gamma}) =  \frac{1}{2}\sum_{k=1}^{N}\ln\left[ \frac{\delta_T^2}{\delta_T^2-|\mathcal{E}_k|^2} \right],
\end{equation}
where ``$\ln$" denotes the natural logarithm and $\pmb{\mathcal{E}} = [\mathcal{E}_1,\ldots,\mathcal{E}_N]^\top \in \mathbb{C}^N$ is the error vector. Note that \eqref{blf} is positive definite and $\mathcal{C}^1$ in the domain $|\mathcal{E}_k| < \delta_T, \forall k$ \cite{tee2009barrier}, and vanishes iff $\pmb{\mathcal{E}} = \pmb{0}_N$. The time-derivative of \eqref{blf}, along the model \eqref{model_transformed_plane}, is given by
\begin{equation}\label{td_blf}
\dot{\mathcal{S}} = \frac{1}{2} \sum_{k=1}^{N} \frac{\frac{d}{dt}|\mathcal{E}_k|^2}{\delta_T^2-|\mathcal{E}_k|^2},
\end{equation}
where $(1/2)(d/dt)|\mathcal{E}_k|^2 = \langle \mathcal{E}_k,\dot{ \mathcal{E}_k}  \rangle$. Form \eqref{error_transformed_plane}, it can be obtained that 
\begin{equation}\label{error_transformed_plane_derivative}	
\dot{\mathcal{E}}_k = \dot{\rho}_k - \sigma{\rm e}^{i\gamma_k}\dot{\gamma}_k = {\rm e}^{i\gamma_k}(s_k - \sigma\Omega_k),
\end{equation}
substituting which, we have $(1/2)(d/dt)|\mathcal{E}_k|^2  = \langle \rho_k + i\sigma{\mathrm{e}}^{i\gamma_k},  {\mathrm{e}}^{i\gamma_k} \rangle (s_k - \sigma\Omega_k) = [\langle \rho_k,  {\mathrm{e}}^{i\gamma_k} \rangle + \sigma\langle i {\mathrm{e}}^{i\gamma_k},  {\mathrm{e}}^{i\gamma_k} \rangle](s_k - \sigma\Omega_k)$, using linearity of inner product. Since $\langle i {\mathrm{e}}^{i\gamma_k}, {\mathrm{e}}^{i\gamma_k} \rangle = 0$, we have $(1/2)(d/dt)|\mathcal{E}_k|^2 = \langle \rho_k,  {\mathrm{e}}^{i\gamma_k} \rangle (s_k - \sigma\Omega_k)$. Substituting this into \eqref{td_blf}, yields
\begin{equation}\label{blf_time_derivative}
	\dot{\mathcal{S}} =  \sum_{k=1}^{N} \frac{\langle \rho_k,\mathrm{e}^{i\gamma_k} \rangle}{\delta_T^2-|\mathcal{E}_k|^2} (s_k - \sigma\Omega_k).
\end{equation}

\subsubsection{Achieving desired linear speed in transformed plane}
To fulfill the requirement that $s_k \to s_d$ for all $k$ at the steady-sate as in Problem~\ref{problem_transformed_plane}, we define the speed error as:
\begin{equation}\label{speed_error}
	\tilde{s}_k \triangleq s_k - s_d.
\end{equation}
and consider the following BLF for all $k$:
\begin{equation}\label{blf_speed}
	\mathcal{H}({\pmb{s}}) = \frac{1}{2} \sum_{k=1}^{N} \ln \left[\frac{\delta_S^2}{\delta_S^2 - \tilde{s}_k^2}\right],
\end{equation}
where $\delta_S > 0$ is a constant design parameter and $\pmb{s} = [s_1, \cdots, s_N]^\top \in \mathbb{R}^N$ is the speed vector. Analogous to \eqref{blf}, the function \eqref{blf_speed} is positive definite and $\mathcal{C}^1$ in the domain $|\tilde{s}_k| < \delta_S, \forall k$, and becomes zero iff $\pmb{s} = s_d\pmb{1}_N$, as desired. The time-derivative of \eqref{blf_speed}, along model \eqref{model_transformed_plane}, is given by 
\begin{equation}\label{H_time_derivative}
	\dot{\mathcal{H}} = \sum_{k=1}^{N}  \frac{\tilde{s}_k\dot{s}_k}{\delta_S^2 - \tilde{s}_k^2} = \sum_{k=1}^{N}  \frac{\tilde{s}_k}{\delta_S^2 - \tilde{s}_k^2}\nu_k.
\end{equation}
The expressions \eqref{potential_synchronization_balancing_time_derivative}, \eqref{blf_time_derivative} and \eqref{H_time_derivative} will be used in the next section where we discuss the control design. 

\subsection{Controller Design}
Based on the potentials \eqref{potential_synchro_balanc}, \eqref{blf}, and \eqref{blf_speed}, we now state the following theorem, which proposes the control laws $\nu_k$ and $\Omega_k$.

\begin{thm}\label{thm_control_transformed_plane}
	Consider $N$ agents described by the model \eqref{model_transformed_plane} and interacting over an undirected and connected graph $\mathcal{G}$ with Laplacian matrix $\mathcal{L}$. Assume that agents' initial states $[\rho_k(0), s_k(0), \gamma_k(0)]^\top$ belong to the set $\mathcal{W}_{\delta} \triangleq \{[\pmb{\rho}, \pmb{s}, \pmb{\gamma}]^\top \in \mathbb{C}^N \times \mathbb{R}^N \times \mathbb{T}^N \mid |\mathcal{E}_k| < \delta_T, |\tilde{s}_k| < \delta_S, \forall k \}$, where $\delta_T > 0$ and $\delta_S > 0$ are defined in \eqref{delta_transformed} and \eqref{blf_speed}, respectively. Let the agents be governed by the controllers
	\begin{align}
		\label{velocity_control_transformed_plane}	\nu_k & =  -(\delta_S^2 - \tilde{s}_k^2) \left[\kappa_1 \frac{\langle \rho_k,\mathrm{e}^{i\gamma_k}  \rangle}{\delta_T^2-|\mathcal{E}_k|^2} + \kappa_2\tilde{s}_k \right],\\
		\label{curvature_control_transformed_plane}	\Omega_k & = \frac{1}{\sigma}\left[s_d + \kappa_1 \frac{\langle \rho_k,\mathrm{e}^{i\gamma_k} \rangle}{\delta_T^2-|\mathcal{E}_k|^2} + \frac{\mathcal{K}}{\sigma} \langle  i\mathrm{e}^{i\gamma_k}, \mathcal{L}_k\mathrm{e}^{i\pmb{\gamma}}\rangle \right].
	\end{align}
	Then, the following properties hold:
	\begin{itemize}
		\item[(a)] If $\kappa_1 > 0$, $\kappa_2 > 0$, and $\mathcal{K} < 0$, all the agents converge to the circle $f(\mathcal{C})$, and the phasors ${\rm e}^{i\pmb{\gamma}}$ synchronize within $\mathcal{W}_{\delta}$.
	    \item[(b)] Additionally, if $\mathcal{G}$ is circulant and $\kappa_1 > 0$, $\kappa_2 > 0$, and $\mathcal{K} > 0$, all the agents converge to the circle $f(\mathcal{C})$, and the phasors ${\rm e}^{i\pmb{\gamma}}$ achieve balancing within $\mathcal{W}_{\delta}$. 
		\item[(c)] In both cases above, the agents' speeds remain bounded as $s_k(t) \in (s_d - \delta_S, s_d + \delta_S)$ for all $k$ and $t \geq 0$. Moreover, depending on the roots of \eqref{mobius_roots}, the agents' trajectories remain bounded within regions given by
		\begin{equation}\label{rho_bounds}
		|\rho_k(t)| \in  
		\begin{dcases*}
			\left(2|\alpha| - \left|\frac{\lambda + \alpha}{\mu}\right|,  \left|\frac{\lambda + \alpha}{\mu}\right| \right), & if $\alpha = \alpha_s$ \\
			\left(\left|\frac{\lambda + \alpha}{\mu}\right|,  2|\alpha| - \left|\frac{\lambda + \alpha}{\mu}\right|\right), & if $\alpha = \alpha_{\ell}$, 
		\end{dcases*}
		\end{equation}
		for all $k$ and $t \geq 0$.
		\end{itemize}
\end{thm}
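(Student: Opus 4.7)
My plan is to construct a single composite Lyapunov function
$V(\pmb{\rho},\pmb{s},\pmb{\gamma}) = \kappa_1\mathcal{S} + \mathcal{H} - \mathcal{K}\mathcal{U}$
and exploit the specific structure of $\nu_k$ and $\Omega_k$ so that $\dot V$ becomes a perfect sum of squares. The key observation driving the design is that the coefficient $\mathcal{K}/\sigma$ in the consensus term of $\Omega_k$ is chosen precisely to make the cross terms collapse. Starting from \eqref{potential_synchronization_balancing_time_derivative}, \eqref{blf_time_derivative}, \eqref{H_time_derivative} and substituting the controllers, with the shorthand $A_k = \langle \rho_k,\mathrm{e}^{i\gamma_k}\rangle/(\delta_T^2-|\mathcal{E}_k|^2)$ and $B_k = \langle i\mathrm{e}^{i\gamma_k},\mathcal{L}_k\mathrm{e}^{i\pmb{\gamma}}\rangle$, I expect a direct computation (using $s_k-\sigma\Omega_k = \tilde s_k - \kappa_1 A_k - (\mathcal{K}/\sigma)B_k$ and $\sum_k B_k = 0$ from \eqref{ortho_prop}) to yield
\begin{equation*}
\dot V = -\kappa_2\sum_{k=1}^N \tilde s_k^2 \;-\; \sum_{k=1}^N \left(\kappa_1 A_k + \tfrac{\mathcal{K}}{\sigma} B_k\right)^{\!2} \le 0.
\end{equation*}

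Next, I would establish well-posedness and forward invariance of $\mathcal{W}_\delta$ from the BLF property. For part (a), $\mathcal{K}<0$ makes $V$ a sum of nonnegative terms, while for part (b), $V$ is bounded below using the circulant upper bound $\mathcal{U} \le 2|\mathbb{E}|$ from \eqref{cardinality}. In either case $V(t)\le V(0)<\infty$, and since $\mathcal{S}\to\infty$ as $|\mathcal{E}_k|\to\delta_T$ and $\mathcal{H}\to\infty$ as $|\tilde s_k|\to\delta_S$, both barriers are avoided for all $t\ge 0$. This immediately proves the speed bound in part (c), and the trajectory bound \eqref{rho_bounds} follows from the reverse triangle inequality $\bigl||\rho_k| - |\alpha|\bigr| \le |\rho_k + i\sigma\mathrm{e}^{i\gamma_k}| = |\mathcal{E}_k| < \delta_T$ together with the definition of $\delta_T$ in \eqref{delta_transformed} for the two choices of $\alpha$.

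For the asymptotic statements in (a)--(b), I would invoke LaSalle's invariance principle on the precompact sublevel set of $V$. On the largest invariant set $\dot V = 0$ forces $\tilde s_k \equiv 0$ and $\kappa_1 A_k + (\mathcal{K}/\sigma)B_k \equiv 0$ for every $k$. Using $\tilde s_k \equiv 0$ in \eqref{velocity_control_transformed_plane} gives $\nu_k \equiv 0$, which forces $A_k \equiv 0$ and hence $B_k \equiv 0$. The condition $A_k\equiv 0$, i.e.\ $\langle \rho_k, \mathrm{e}^{i\gamma_k}\rangle \equiv 0$, differentiated in time along \eqref{model_transformed_plane} with $\Omega_k = s_d/\sigma$ (obtained from \eqref{curvature_control_transformed_plane} when $A_k=B_k=0$), gives $\langle \rho_k, i\mathrm{e}^{i\gamma_k}\rangle = -\sigma$; decomposing $\rho_k$ in the orthonormal frame $\{\mathrm{e}^{i\gamma_k}, i\mathrm{e}^{i\gamma_k}\}$ yields $\rho_k = -i\sigma\mathrm{e}^{i\gamma_k}$, i.e.\ $\mathcal{E}_k = 0$ and $|\rho_k|=|\alpha|$, so agents lie on $f(\mathcal{C})$. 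Meanwhile $B_k\equiv 0$ identifies phase configurations as critical points of $\mathcal{U}$.

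The main obstacle is ruling out the undesired critical points of $\mathcal{U}$. For part (a), since $\mathcal{K}<0$ the synchronized configuration globally minimizes both $\mathcal{S}$ and $-\mathcal{K}\mathcal{U}$ simultaneously on the invariant set, so it is the unique asymptotically stable equilibrium; all other critical points of $\mathcal{U}$ are saddles/maxima (from \cite{sepulchre2007stabilization} and Lemma~\ref{lem_critical_points}) and hence form an unstable invariant set that attracts a measure-zero set of trajectories. For part (b), the circulant assumption activates the spectral analysis of $\mathcal{U}$ from Lemma~\ref{lem_circulant_graphs}: with $\mathcal{K}>0$, the term $-\mathcal{K}\mathcal{U}$ is minimized exactly at the balanced configurations $\pmb{1}_N^\top \mathrm{e}^{i\pmb{\gamma}}=0$, which therefore form the asymptotically stable component of the invariant set. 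Combining the invariance conclusion with Lemma~\ref{lem_synchronization_equivalence_both_planes} completes the argument in both cases.
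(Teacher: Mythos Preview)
Your proposal is correct and follows essentially the same route as the paper: the same composite Lyapunov function $\kappa_1\mathcal{S}+\mathcal{H}-\mathcal{K}\,\mathcal{U}$, the same use of \eqref{ortho_prop} to collapse the cross terms into the sum-of-squares \eqref{Vs_derivative_final}, and the same LaSalle analysis (with $\tilde s_k\equiv 0\Rightarrow\nu_k=\dot s_k\equiv 0\Rightarrow A_k\equiv 0\Rightarrow B_k\equiv 0$). Two minor remarks: for part~(b) the paper shifts the potential by the constant $(N\lambda_{\max}/2)\mathcal{K}$ to make $\mathcal{V}_b$ nonnegative rather than merely bounded below (your version is equivalent), and your closing appeal to Lemma~\ref{lem_synchronization_equivalence_both_planes} is unnecessary since Theorem~\ref{thm_control_transformed_plane} is stated entirely in the transformed plane.
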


\begin{proof}
(a) Consider the following composite potential function
\begin{equation}\label{V_s}
	\mathcal{V}_s(\pmb{\rho}, \pmb{s}, \pmb{\gamma}) =  \kappa_1\mathcal{S}(\pmb{\mathcal{E}}) - \mathcal{K}\mathcal{U}(\pmb{\gamma}) + \mathcal{H}{(\pmb{s})}; \ \kappa_1 > 0, \mathcal{K} < 0,
\end{equation}
which is positive-definite and bounded from below by zero, and hence, is a valid Lyapunov function candidate in the set $\mathcal{W}_{\delta}$. The time-derivative of \eqref{V_s}, along the model \eqref{model_transformed_plane}, is given by $\dot{\mathcal{V}}_s = \kappa_1\dot{\mathcal{S}} - \sigma\mathcal{K}{\dot{\mathcal{U}}} + \dot{\mathcal{H}}$. Using \eqref{potential_synchronization_balancing_time_derivative}, \eqref{blf_time_derivative}, and  \eqref{H_time_derivative}, we have $\dot{\mathcal{V}}_s = \sum_{k=1}^{N} \left[ \kappa_1 \frac{\langle \rho_k,\mathrm{e}^{i\gamma_k} \rangle}{\delta_T^2-|\mathcal{E}_k|^2} (s_k-\sigma\Omega_k) - \mathcal{K}\langle  i\mathrm{e}^{i\gamma_k}, \mathcal{L}_k\mathrm{e}^{i\pmb{\gamma}}\rangle\Omega_k \right] + \sum_{k=1}^{N} \frac{\tilde{s}_k}{\delta_S^2 - \tilde{s}_k^2}\nu_k$. 
Further, substituting $\tilde{s}_k$ and $\nu_k$ from \eqref{speed_error} and \eqref{velocity_control_transformed_plane}, respectively, yields
\begin{align*}
	\dot{\mathcal{V}}_s & = \sum_{k=1}^{N} \left[ \kappa_1 \frac{\langle \rho_k,\mathrm{e}^{i\gamma_k} \rangle}{\delta_T^2-|\mathcal{E}_k|^2} (s_k-\sigma\Omega_k) - \mathcal{K}\langle  i\mathrm{e}^{i\gamma_k}, \mathcal{L}_k\mathrm{e}^{i\pmb{\gamma}}\rangle\Omega_k \right]\\
	& \qquad + \sum_{k=1}^{N} (s_k - s_d) \left[-\kappa_1 \frac{\langle \rho_k,\mathrm{e}^{i\gamma_k}  \rangle}{\delta_T^2-|\mathcal{E}_k|^2} - \kappa_2(s_k - s_d) \right]\\ 
	& = \sum_{k=1}^{N} \left[ \kappa_1 \frac{\langle \rho_k,\mathrm{e}^{i\gamma_k} \rangle}{\delta_T^2-|\mathcal{E}_k|^2} \left(s_d - \sigma\Omega_k\right) - \frac{\mathcal{K}}{\sigma}\langle  i\mathrm{e}^{i\gamma_k}, \mathcal{L}_k\mathrm{e}^{i\pmb{\gamma}}\rangle (\sigma\Omega_k) \right]\\
	& \qquad - \sum_{k=1}^{N} \kappa_2\tilde{s}_k^2.
\end{align*}
Using the orthogonality property \eqref{ortho_prop} and knowing that $\sigma, s_d$ and $\mathcal{K}$ are constants, one can observe that $s_d(\mathcal{K}/\sigma) \sum_{k=1}^{N} \langle  i\mathrm{e}^{i\gamma_k}, \mathcal{L}_k\mathrm{e}^{i\pmb{\gamma}}\rangle = 0$. This term can be added to the above expression to yield \begin{align*}
	\nonumber\dot{\mathcal{V}}_s & {=} \sum_{k=1}^{N} \left[ \kappa_1 \frac{\langle \rho_k,\mathrm{e}^{i\gamma_k} \rangle}{\delta_T^2-|\mathcal{E}_k|^2} {+}	\frac{\mathcal{K}}{\sigma}\langle  i\mathrm{e}^{i\gamma_k}, \mathcal{L}_k\mathrm{e}^{i\pmb{\gamma}}\rangle \right]\left(s_d - \sigma\Omega_k\right)\\
	& \qquad - \sum_{k=1}^{N} \kappa_2\tilde{s}_k^2.
\end{align*} 
Now, substituting $\Omega_k$ from \eqref{curvature_control_transformed_plane}, we have
\begin{equation}\label{Vs_derivative_final}
\dot{\mathcal{V}}_s = -\sum_{k=1}^{N} \left[ \left\{\kappa_1 \frac{\langle \rho_k,\mathrm{e}^{i\gamma_k} \rangle}{\delta_T^2-|\mathcal{E}_k|^2} + \frac{\mathcal{K}}{\sigma}\langle  i\mathrm{e}^{i\gamma_k}, \mathcal{L}_k\mathrm{e}^{i\pmb{\gamma}}\rangle\right\}^2 + \kappa_2\tilde{s}_k^2\right],
\end{equation}
implying that $\dot{\mathcal{V}}_s$ is negative semi-definite, and hence, $\mathcal{V}_s(\pmb{\rho}(t), \pmb{s}(t), \pmb{\gamma}(t)) \leq \mathcal{V}_s(\pmb{\rho}(0),\pmb{s}(0),\pmb{\gamma}(0))$ for all $t \geq 0$, in the set $\mathcal{W}_{\delta}$. From \eqref{cardinality}, $\mathcal{U}(\pmb{\gamma}) \leq 2|\mathcal{E}|$, and so is $\mathcal{U}(\pmb{\gamma}(0)) \leq 2|\mathcal{E}|$. From \eqref{blf}, it can be observed that $\mathcal{S}(\pmb{\rho}(0),\pmb{\gamma}(0))$ is finite and positive for any initial condition in the set $|\mathcal{E}_k| < \delta_T, \forall k$, and it is undefined only if $|\mathcal{E}_k| = \delta_T$ for any $k$.  Also, it follows from \eqref{blf_speed} that $\mathcal{H}(\pmb{s}(0))$ is finite and positive for any initial condition in the set $|\tilde{s}_k| < \delta_S, \forall k$, and it is undefined only if $|\tilde{s}_k| = \delta_S$ for any $k$. Consequently, it follows from \eqref{V_s} that $\mathcal{V}_s(\pmb{\rho}(0),\pmb{s}(0),\pmb{\gamma}(0))$ is finite and positive for any initial condition in the set $\mathcal{W}_{\delta}$, since $\kappa_1, \mathcal{K}$ and $\sigma$ are finite. In other words, this implies that there exists a positive constant $\kappa$ such that $\mathcal{V}_s(\pmb{\rho}, \pmb{s}, \pmb{\gamma}) \leq \kappa$ for $t \geq 0$ in $\mathcal{W}_{\delta}$. Therefore, \eqref{V_s} is a BLF according to Definition~\ref {defn_blf} in the Appendix. 

Note that the set  $\Gamma \triangleq \{[\pmb{\rho}, \pmb{s}, \pmb{\gamma}]^\top \in \mathcal{W}_{\delta} \mid \mathcal{V}_s(\pmb{\rho},\pmb{s},\pmb{\gamma}) \leq \kappa \} \subset \mathcal{W}_{\delta}$ is compact and positively invariant, since $\mathcal{V}_s(\pmb{\rho}, \pmb{s}, \pmb{\gamma})$ is positive definite and continuously differentiable, and $\dot{\mathcal{V}}_s \leq 0$, along the solutions of \eqref{model_transformed_plane}, in $\mathcal{W}_{\delta}$. From LaSalle's invariance principle \cite[Theorem 4.4, pg. 128]{khalil2002nonlinear}, it follows that solutions of \eqref{model_transformed_plane}, under controllers \eqref{velocity_control_transformed_plane} and \eqref{curvature_control_transformed_plane}, converge to the largest invariant set $\Delta_s$ contained in the set $\Delta \subset \Gamma$, where $\dot{\mathcal{V}}_s = 0$. That is,
\begin{equation}
	\Delta \triangleq \{[\pmb{\rho}, \pmb{s}, \pmb{\gamma}]^\top \in \mathcal{W}_{\delta} \mid \dot{\mathcal{V}}_s = 0\},
\end{equation}
which, using \eqref{Vs_derivative_final}, implies that
\begin{subequations}
	\begin{align}
		\label{inv_seta} \kappa_1 \frac{\langle \rho_k,\mathrm{e}^{i\gamma_k} \rangle}{\delta_T^2-|\mathcal{E}_k|^2} + \frac{\mathcal{K}}{\sigma} \langle i\mathrm{e}^{i\gamma_k},  \mathcal{L}_k\mathrm{e}^{i\pmb{\gamma}}\rangle & =0 \\
		\label{inv_setb} \tilde{s}_k &= 0,
	\end{align}
\end{subequations}
for all $k$ in $\Delta$. From \eqref{speed_error}, $\tilde{s}_k = 0 \implies s_k  = s_d, \forall k$, and hence, $\nu_k = \dot{s}_k = 0, \forall k$ in $\Delta$. Substituting these into \eqref{velocity_control_transformed_plane}, yield  
\begin{equation}\label{inv_setc}
	-\kappa_1\delta_S^2\frac{\langle \rho_k,\mathrm{e}^{i\gamma_k} \rangle}{\delta_T^2-|\mathcal{E}_k|^2} = 0, \forall k,	
\end{equation}
in $\Delta$. From \eqref{inv_setc}, $\langle \rho_k,\mathrm{e}^{i\gamma_k} \rangle = 0, \forall k$ in $\Delta$, implying that $\rho_k \perp \mathrm{e}^{i\gamma_k}, \forall k$ in $\Delta$. Also, it follows by substituting  \eqref{inv_seta} into \eqref{curvature_control_transformed_plane} that $\Omega_k = s_d/\sigma, \forall k$ in $\Delta$. Therefore, it can be concluded that the agents \eqref{model_transformed_plane} converge to the circle $f(\mathcal{C})$ and move with constant linear and angular speeds $s_d$ and $s_d/\sigma$ for all $k = 1, \ldots, N$, respectively. 

Further, upon substituting \eqref{inv_setc} into \eqref{inv_seta}, we have that $\langle  i\mathrm{e}^{i\gamma_k},  \mathcal{L}_k\mathrm{e}^{i\pmb{\gamma}}\rangle = 0, \forall k$ in $\Delta$. Exploiting \eqref{gradient}, this implies that $\pmb{\gamma}$ is the critical point of $\mathcal{U}(\pmb{\gamma})$ in $\Delta$. According to Lemma~\ref{lem_critical_points} in Appendix, this implies that $\mathrm{e}^{i\pmb{\gamma}}$ is the eigenvector of $\mathcal{L}$, i.e., $\mathcal{L}\mathrm{e}^{i\pmb{\gamma}} = \lambda \mathrm{e}^{i\pmb{\gamma}}$, using which, $\langle  i\mathrm{e}^{i\gamma_k},  \mathcal{L}_k\mathrm{e}^{i\pmb{\gamma}}\rangle=\lambda \langle  i\mathrm{e}^{i\gamma_k},  \mathrm{e}^{i\gamma_k}\rangle = 0$, where $\lambda \in \mathbb{R}$ is the eigenvalue of $\mathcal{L}$. For an undirected and connected graph $\mathcal{G}$, since $\pmb{1}_N$ spans the kernel of $\mathcal{L}$ corresponding to $\lambda = 0$, ${\rm e}^{i\pmb{\gamma}} = {\rm e}^{i\gamma_0}\pmb{1}_N$ for some $\gamma_0 \in \mathbb{S}^1$ (i.e., synchronization). While the other eigenvectors ${\rm e}^{i\pmb{\gamma}}$ satisfy $\pmb{1}^\top_N {\rm e}^{i\pmb{\gamma}} = 0$ (i.e., balancing), as they are mutually orthogonal for symmetric $\mathcal{L}$. For example, if $N = 2$, \eqref{gradient} is zero for $\gamma_j = \gamma_k$ or $\gamma_j = \pi \pm \gamma_k$. However, for $\gamma_j \neq \gamma_k$ (in general for $\pmb{1}^\top_N {\rm e}^{i\pmb{\gamma}} = 0$), it is easy to check that the potential $\mathcal{U}(\pmb{\gamma})$ is non-zero, and hence, $\mathcal{V}_s(\pmb{\rho}, \pmb{s}, \pmb{\gamma})$ attains a constant value. However, since $\dot{\mathcal{V}}_s \leq 0$ in $\mathcal{W}_{\delta}$, the solution trajectories of \eqref{model_transformed_plane} cannot stay in any local minimum of $\mathcal{U}(\pmb{\gamma})$, and approach to the largest invariant set $\Delta_s \subset \Delta$ as $t \to \infty$, that is, the phasors ${\rm e}^{i\pmb{\gamma}}$ synchronize at the steady-state. This proves the first statement. \par

(b) To prove the second statement, we consider the following joint potential function:
\begin{equation}\label{V_b}
	\mathcal{V}_b(\pmb{\rho},\pmb{s},\pmb{\gamma}) =  \kappa_1\mathcal{S}(\pmb{\mathcal{E}}) + \mathcal{K}\left[\frac{N\lambda_{\max}}{2} - \mathcal{U}(\pmb{\gamma}) \right] + \mathcal{H}(\pmb{s}),
\end{equation}
where $\kappa_1 > 0$ and $\mathcal{K} > 0$. Note that \eqref{V_b} is positive and bounded from below by zero for the circulant graph $\mathcal{G}$ (see Lemma~\ref{lem_critical_points} in the Appendix). It is easy to check that the time-derivative of \eqref{V_b}, along model \eqref{model_transformed_plane}, and under the controls \eqref{velocity_control_transformed_plane} and \eqref{curvature_control_transformed_plane}, results in $\dot{\mathcal{V}}_b = \dot{\mathcal{V}}_s \leq 0$. However, in this case, the largest invariant set $\Delta_b \subset \Delta$, corresponds to balancing of phasors ${\rm e}^{i\pmb{\gamma}}$, as discussed further. From Lemma~\ref{lem_circulant_graphs} (in the Appendix), the Laplacian $\mathcal{L}$ of a circulant graph $\mathcal{G}$ has eigenvectors of the form ${\rm e}^{i\pmb{\gamma}}$. Further, the maximum value $({N}/{2})\lambda_{\max}$ of $\mathcal{U}(\pmb{\gamma})$ is attained if ${\rm e}^{i\pmb{\gamma}}$ is an eigenvector corresponding to $\lambda_{\max}$. As discussed above, since $\pmb{1}^\top_N {\rm e}^{i\pmb{\gamma}} = 0$ for $\lambda \neq 0$, $\mathcal{U}(\pmb{\gamma})$ attains its maximum in balancing of phasors ${\rm e}^{i\pmb{\gamma}}$. Alternatively, $\mathcal{V}_b(\pmb{\rho}, \pmb{s}, \pmb{\gamma})$ attains its minimum when phasors ${\rm e}^{i\pmb{\gamma}}$ achieve balancing and the agents move on the circle $f(\mathcal{C})$ within the set $\mathcal{W}_{\delta}$, as proved in the part (a). \par 
  
(c) Since $\dot{\mathcal{V}}_s = \dot{\mathcal{V}}_b \leq 0$ in $\mathcal{W}_{\delta}$, it immediately follows from Lemma~\ref{lem_blf_convergence} (in the Appendix) that $|\tilde{s}_k(t)| < \delta_S$ and $|\mathcal{E}_k(t)| < \delta_T$ for all $k$ and $t \geq 0$. Now, substituting for $\tilde{s}_k$ from \eqref{speed_error} and $\mathcal{E}_k$ from \eqref{error_transformed_plane}, along with $\delta_T$ from \eqref{delta_transformed}, and simplifying the modulus, the required bounds on $s_k(t)$ and $\rho_k(t)$ follows. For more details about the latter proof, please refer to \cite{singh2024stabilizing}.  
\end{proof}

\begin{remark}
From the proof of Theorem~\ref{thm_control_transformed_plane}, it follows that the invariant sets $\Delta_s$ and $\Delta_b$ satisfy $\Delta_s \cap \Delta_b = \emptyset$ and $\Delta_s \cup \Delta_b = \Delta$. Unlike \cite{hegde2023synchronization}, the analysis of the invariant set $\Delta$ in this paper is fundamentally different due to the higher dimensionality of the agent model \eqref{model_transformed_plane} and the stabilizing control laws employed. Furthermore, as established in Theorem~\ref{thm_control_transformed_plane}, since $|\mathcal{E}_k(t)| < \delta_T$ and $|\tilde{s}_k(t)| < \delta_S$ for all $k$ and $t \geq 0$, the controllers \eqref{velocity_control_transformed_plane} and \eqref{curvature_control_transformed_plane} remain bounded along any solution trajectory of the agent model \eqref{model_transformed_plane}.
\end{remark}

Please note that $s_d$ and $\delta_S$ are two design parameters in the proposed controllers \eqref{velocity_control_transformed_plane} and \eqref{curvature_control_transformed_plane}. By choosing these parameters appropriately, the prior assumption about the positivity of the speeds $s_k$ and $v_k$ in both the planes can be guaranteed. This is established in the below corollary to Theorem~\ref{thm_control_transformed_plane}. 

\begin{cor}\label{cor_positive_speeds}
Under the conditions specified in Theorem~\ref{thm_control_transformed_plane}, if the design parameters $s_d$ and $\delta_S$ are chosen such that $s_d \geq \delta_S$, the controllers \eqref{velocity_control_transformed_plane} and \eqref{curvature_control_transformed_plane} ensure that the agents move with positive speeds in both the planes, that is, $s_k(t) > 0$ and $v_k(t) > 0$ for all $k$ and $t \geq 0$, provided $v_k(0) > 0$ for all $k$.
\end{cor}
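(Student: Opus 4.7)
The plan is to chain two observations: first, leverage the trajectory bound on $s_k(t)$ already established in Theorem~\ref{thm_control_transformed_plane}(c) to pin down the positivity of $s_k$ in the transformed plane; second, use the equivalence of the unicycle dynamics under the M\"obius transformation (Fig.~\ref{fig_model_mapping}) together with the nonvanishing of the positional derivative in Lemma~\ref{lem_transformation_derivative_properties}(b) to transfer positivity back to $v_k$ in the original plane.

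For the first step, I would invoke part (c) of Theorem~\ref{thm_control_transformed_plane}, which guarantees
\[
s_k(t) \in \bigl(s_d - \delta_S,\ s_d + \delta_S\bigr),\qquad \forall k,\ \forall t \geq 0,
\]
whenever the initial states lie in $\mathcal{W}_\delta$. Under the hypothesis $s_d \geq \delta_S$, the lower endpoint satisfies $s_d - \delta_S \geq 0$, so the open interval is contained in $(0,\infty)$ and we immediately obtain $s_k(t) > s_d - \delta_S \geq 0$, hence strict positivity $s_k(t) > 0$ for every $k$ and every $t \geq 0$. This disposes of the first conclusion and, in particular, legitimizes the polar form \eqref{model_transformed_plane} at every instant.

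For the second step, I would argue by contradiction. Suppose there exists some $k$ and a first time $t^{\star} > 0$ at which $v_k(t^{\star}) = 0$; such a $t^\star$ exists by continuity of $v_k$ (guaranteed since $\dot{v}_k = u_k$ integrates from $v_k(0)>0$) if $v_k$ ever changed sign. Then the original-plane velocity vanishes, $\dot{r}_k(t^\star) = v_k(t^\star)\,\mathrm{e}^{i\theta_k(t^\star)} = 0$. By the chain-rule relation $\dot{\rho}_k = (df_k/dr_k)\,\dot{r}_k$, and using Lemma~\ref{lem_transformation_derivative_properties}(b), which asserts $df_k/dr_k \neq 0$ and finite throughout the admissible domain (because $r_k(t)$ stays away from the singularity $-1/\alpha$ by the confinement established in Theorem~\ref{thm_control_transformed_plane}(c) combined with the bijectivity in Lemma~\ref{lem_transformation_derivative_properties}(a)), this forces $\dot{\rho}_k(t^\star) = 0$, and therefore $s_k(t^\star) = |\dot{\rho}_k(t^\star)| = 0$. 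This contradicts the strict positivity $s_k(t)>0$ obtained in the first step. Hence $v_k(t)$ cannot reach zero, and since $v_k(0) > 0$ and $v_k$ is continuous, $v_k(t) > 0$ for all $k$ and all $t \geq 0$, as claimed.

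The main obstacle, and the place that needs the most care, is making sure the M\"obius derivative $df_k/dr_k$ is both nonzero and finite along every trajectory, not merely generically: the former is already Lemma~\ref{lem_transformation_derivative_properties}(b), while the latter requires that the containment bound \eqref{rho_bounds} in Theorem~\ref{thm_control_transformed_plane}(c), together with the inverse map \eqref{inverse_mobius_transformation}, prevents $r_k(t)$ from approaching the pole $-1/\alpha$. Once this domain argument is in place, the chain-rule contradiction closes the proof cleanly, and no further estimates on $u_k$ or $\nu_k$ are needed.
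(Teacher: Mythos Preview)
Your proposal is correct and follows essentially the same route as the paper: both arguments first use Theorem~\ref{thm_control_transformed_plane}(c) to deduce $s_k(t)>0$ from $s_d\geq\delta_S$, and then transfer positivity to $v_k$ via the relation $s_k=|df_k/dr_k|\,v_k$ together with Lemma~\ref{lem_transformation_derivative_properties}(b). Your first-hitting-time contradiction is just an explicit unpacking of the paper's one-line remark that ``$v_k(t)$ can never cross zero,'' and your extra care about finiteness of $df_k/dr_k$ (avoiding the pole) is a welcome detail that the paper leaves implicit in Lemma~\ref{lem_transformation_derivative_properties}(a) and the confinement bounds.
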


\begin{proof}
According to Theorem~\ref{thm_control_transformed_plane}, since $s_d - \delta_S < s_k(t) < s_d + \delta_S$ for all $k$ and $t \geq 0$ under controllers \eqref{velocity_control_transformed_plane} and \eqref{curvature_control_transformed_plane}, it holds that $0 < s_k(t) < s_d + \delta_S$ for all $k$ and $t \geq 0$, if $s_d \geq \delta_S$. Moreover, since both $s_k \in \mathcal{C}^1$ and $v_k \in \mathcal{C}^1$ for all $k$, it readily follows either from the relations \eqref{s_k} or \eqref{v_k} that, if $v_k(0) > 0$, $v_k(t) > 0$ for all $k$ and $t > 0$, since $|df_k/dr_k| \neq 0$ and $|df^{-1}_k/d \rho_k| \neq 0$ according to Lemma~\ref{lem_transformation_derivative_properties} (i.e., $v_k(t)$ can never cross zero and remains positive if it initially positive).
\end{proof}

We shall now show how the controllers \eqref{velocity_control_transformed_plane} and \eqref{curvature_control_transformed_plane} solve the actual Problem~\ref{problem_actual_plane} in the original plane. We have the following result in this direction.

\begin{thm}\label{thm_problem_1_solution}
Under the conditions stated in Theorem~\ref{thm_control_transformed_plane}, with the design parameters $s_d$ and $\delta_S$ chosen according to Corollary~\ref{cor_positive_speeds}, Problem~\ref{problem_actual_plane} is solved in the following sense:  
\begin{itemize}[leftmargin=*]
	\item[(a)] The agents \eqref{model_actual_plane} in the original plane converge to the desired circular orbit $\mathcal{C}$, i.e., $|r_k(t)| = 1, \ \forall k$, as $t \to \infty$, while their positions remain within the set $|r_k(t) - \lambda| < \mu$ for all $k$ and $t \geq 0$, where $\lambda$ and $\mu$ are as defined in Theorem~\ref{thm_mapping_of_circles}.
	\item[(b)] At steady state, the agents' heading angles $\{\theta_k\}_{k=1}^{N}$ achieve M\"{o}bius phase-shift coupled synchronization or balancing in the sense of Definition~\ref{def_mobius_synchronization_balancing}, under the same interaction graph $\mathcal{G}$ with Laplacian $\mathcal{L}$ and control gains $\kappa_1, \kappa_2, \mathcal{K}$ as in Theorem~\ref{thm_control_transformed_plane}. Specifically:  
	\begin{itemize}
		\item if $\mathcal{G}$ is undirected and connected, with $\kappa_1 > 0$, $\kappa_2 > 0$, and $\mathcal{K} < 0$, then $\theta_k = \theta_0$ for all $k$, for some $\theta_0 \in \mathbb{S}^1$;  
		\item if $\mathcal{G}$ is additionally circulant, with $\kappa_1 > 0$, $\kappa_2 > 0$, and $\mathcal{K} > 0$, then $\pmb{1}^\top_N {\rm e}^{i(\pmb{\theta} + \pmb{\chi})} = 0$, where $\chi_k$ is defined in \eqref{chi_argument_form}.
	\end{itemize}
\end{itemize}
\end{thm}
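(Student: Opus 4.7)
The plan is to leverage the model equivalence established in Section~\ref{section_4_model_equivalence} together with the convergence result of Theorem~\ref{thm_control_transformed_plane} and pull the conclusions back to the original plane via the inverse M\"{o}bius transformation \eqref{inverse_mobius_transformation}. First, I would invoke Theorem~\ref{thm_control_transformed_plane} to obtain, in the transformed plane, (i) the asymptotic convergence $|\rho_k(t)| \to |\alpha|$ with $\mathcal{E}_k(t)\to 0$, (ii) the speed regulation $s_k(t)\to s_d$ with $s_k(t)>0$ guaranteed by Corollary~\ref{cor_positive_speeds}, and (iii) the invariance $\rho_k(t)\in\mathcal{W}_s$ (if $\alpha=\alpha_s$) or $\rho_k(t)\in\mathcal{W}_\ell$ (if $\alpha=\alpha_\ell$) for all $k$ and $t\ge 0$.

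Next I would translate (i) and (iii) into statements about $r_k(t)$ using the inverse map $r_k=f^{-1}(\rho_k)$ in \eqref{inverse_mobius_transformation}, together with the fact that $f^{-1}$ is a bijective, conformal mapping between the regions described in Theorem~\ref{thm_mapping_of_circles}. Since $f$ sends $\mathcal{C}$ to the circle $\{|w|=|\alpha|\}$, convergence $|\rho_k(t)|\to|\alpha|$ is equivalent to $|r_k(t)|\to 1$, i.e., convergence to $\mathcal{C}$ in the original plane. For the safety constraint, I would use Theorem~\ref{thm_mapping_of_circles}(b): when $\alpha=\alpha_s$, the interior-exterior mapping is preserved so that $\mathcal{W}_s$ corresponds precisely to the interior of $\mathcal{C}'$, namely $\mathcal{Z}=\{z:|z-\lambda|<\mu\}$; when $\alpha=\alpha_\ell$, the mapping is reversed so that the exterior region $\mathcal{W}_\ell$ in the transformed plane again pulls back to the interior $\mathcal{Z}$ of $\mathcal{C}'$. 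In either case, $r_k(t)\in\mathcal{Z}$ for all $k$ and $t\ge 0$, establishing part~(a).

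For part~(b), I would combine the heading-angle relation $\gamma_k=\theta_k+\chi_k$ from \eqref{gamma_k} with the two invariant-set characterizations obtained in Theorem~\ref{thm_control_transformed_plane}. In the undirected-connected case with $\mathcal{K}<0$, Theorem~\ref{thm_control_transformed_plane}(a) gives ${\rm e}^{i\pmb{\gamma}}\to{\rm e}^{i\gamma_0}\pmb{1}_N$, i.e., $\gamma_1=\cdots=\gamma_N\pmod{2\pi}$. Substituting $\gamma_k=\theta_k+\chi_k$ yields $\theta_1+\chi_1=\cdots=\theta_N+\chi_N$, which is exactly M\"{o}bius phase-shift coupled synchronization in the sense of Definition~\ref{def_mobius_synchronization_balancing}. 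Moreover, by Lemma~\ref{lem_synchronization_equivalence_both_planes} the synchronized steady state on $f(\mathcal{C})$ forces $r_k=r_0$ and hence $\chi_k=\chi_0$ for all $k$, so that the stronger statement $\theta_k=\theta_0$ for all $k$ follows directly. In the circulant case with $\mathcal{K}>0$, Theorem~\ref{thm_control_transformed_plane}(b) yields $\pmb{1}_N^\top{\rm e}^{i\pmb{\gamma}}=0$; again using $\gamma_k=\theta_k+\chi_k$, this is precisely $\pmb{1}_N^\top{\rm e}^{i(\pmb{\theta}+\pmb{\chi})}=0$, i.e., M\"{o}bius phase-shift coupled balancing.

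The main obstacle I anticipate is the careful bookkeeping in part~(a) for the case $\alpha=\alpha_\ell$: because the inverse map then reverses the interior-exterior relationship, one must verify that ``remaining in the exterior region in the transformed plane'' indeed certifies ``remaining in the interior of $\mathcal{C}'$ in the original plane.'' This is precisely the content of Theorem~\ref{thm_mapping_of_circles}(b) together with Lemma~\ref{lem_transformation_derivative_properties} (well-definedness of $f$ and $f^{-1}$ on the relevant regions), so the argument is essentially a careful citation rather than new analysis. Once this dichotomy is correctly handled and Corollary~\ref{cor_positive_speeds} is invoked to guarantee that the polar representation \eqref{model_actual_plane} remains valid along solutions (so that the identification $\dot r_k=v_k{\rm e}^{i\theta_k}$ with $v_k>0$ holds for all time), the remainder of the proof is a direct transcription of the transformed-plane conclusions to the original plane.
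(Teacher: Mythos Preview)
Your proposal is correct and follows essentially the same approach as the paper: part~(a) is handled by pulling the transformed-plane conclusions of Theorem~\ref{thm_control_transformed_plane} back through the inverse M\"{o}bius map using Theorem~\ref{thm_mapping_of_circles}(b) (the paper simply cites \cite{singh2024stabilizing} for this), and part~(b) is obtained exactly as you describe, via the identity $\gamma_k=\theta_k+\chi_k$ from \eqref{gamma_k} together with Lemma~\ref{lem_synchronization_equivalence_both_planes} for the synchronization case. Your treatment is in fact more explicit than the paper's, particularly in the careful $\alpha_s$ versus $\alpha_\ell$ dichotomy for the safety constraint.
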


\begin{proof}
For the proof of part (a), please refer to \cite[pg. 10]{singh2024stabilizing}. The proof of part (b) directly follows by noticing that synchronization and balancing of $\gamma_k$ in the transformed plane corresponds to the synchronization and balancing of $\theta_k + \chi_k$ in the original plane, according to \eqref{gamma_k}. Further, using Lemma~\ref{lem_synchronization_equivalence_both_planes}, it can be concluded that synchronization of $\gamma_k$ is equivalent to that of $\theta_k$ in the original plane at the steady-state. Therefore, the results follow for the corresponding setting of control gains as in Theorem~\ref{thm_control_transformed_plane}. Hence, proved.      
\end{proof}

Note that Theorem~\ref{thm_control_transformed_plane} designs the control laws in the transformed plane. It remains to obtain the controllers $u_k$ and $\omega_k$ in \eqref{model_actual_plane}, applied in the original plane. We establish this connection in the next section, followed by an analysis of the post-design signals based on potential functions \eqref{V_s} and \eqref{V_b}.

\section{Control Laws in Original Plane and Post-Design Signal Analysis}\label{section_6_control_original_plane}

\subsection{Control Laws in Original Plane}

Before preceding with the main results, we state the following lemma from \cite{singh2024stabilizing}:

\begin{lem}[\hspace{-.1pt}\cite{singh2024stabilizing}]\label{lem_chi_zeta_dot}
	The time-derivatives of $\chi_k$ in \eqref{chi_argument_form} and $\zeta_k$ in \eqref{zeta_argument_form}, in terms of their respective coordinate frames, are given by:
	\begin{align}
	\label{chi_dot}	\dot{\chi}_k &= -2\alpha v_k \left[ \frac{\sin\theta_k +\alpha |r_k| \sin(\theta_k-\phi_k)}{|1+\alpha r_k|^2} \right],\\
	\label{zeta_dot} \dot{\zeta}_k &= -\left(\frac{2\alpha v_k}{1-\alpha^2}\right) [|\rho_k|\sin(\gamma_k-\psi_k) - \sin\gamma_k].
	\end{align}
\end{lem}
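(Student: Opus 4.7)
Both $\chi_k$ and $\zeta_k$ are arguments of holomorphic derivatives whose numerators are real constants, so each time-derivative will be obtained via the standard identity $\frac{d}{dt}\arg[g(t)] = \Im[\dot g(t)/g(t)]$, reducing the problem to the imaginary part of a single complex rational expression followed by a polar-form expansion.

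For \eqref{chi_dot}: using \eqref{position_derivative_mobius_mapping}, I would first write $\chi_k = \arg[\alpha(1-\alpha^2)] - 2\arg[1+\alpha r_k]$, where the first term is a real constant (nonzero by Corollary~\ref{cor_quadractic_roots}). Differentiating yields $\dot\chi_k = -2\Im[\alpha \dot r_k/(1+\alpha r_k)]$. Rationalizing by $1+\alpha\bar r_k$, substituting $\dot r_k = v_k \mathrm{e}^{i\theta_k}$ and $\bar r_k = |r_k|\mathrm{e}^{-i\phi_k}$, and reading off the imaginary part yields $-2[\alpha v_k\sin\theta_k + \alpha^2 v_k|r_k|\sin(\theta_k-\phi_k)]/|1+\alpha r_k|^2$, from which \eqref{chi_dot} follows by factoring out $\alpha v_k$.

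For \eqref{zeta_dot}: the same pattern applied to $df^{-1}_k/d\rho_k = (1-\alpha^2)/[\alpha(\rho_k-1)^2]$ gives $\dot\zeta_k = -2\Im[\dot\rho_k/(\rho_k-1)]$. After rationalization by $\bar\rho_k - 1$ and substitution of $\dot\rho_k = s_k \mathrm{e}^{i\gamma_k}$ and $\rho_k = |\rho_k|\mathrm{e}^{i\psi_k}$, this becomes
\begin{equation*}
\dot\zeta_k = -\frac{2 s_k}{|\rho_k - 1|^2}\bigl[|\rho_k|\sin(\gamma_k-\psi_k) - \sin\gamma_k\bigr].
\end{equation*}
To recast the coefficient in terms of $v_k$, I would use the clean identity $\rho_k - 1 = -(1-\alpha^2)/(1+\alpha r_k)$ that follows directly from \eqref{position_transformed_plane}, together with \eqref{s_k}, to obtain $s_k/|\rho_k - 1|^2 = |\alpha|v_k/|1-\alpha^2|$.

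The main obstacle is the sign bookkeeping in this last step: a naive polar expansion produces only $|\alpha|/|1-\alpha^2|$, and recovering the signed coefficient $\alpha/(1-\alpha^2)$ in \eqref{zeta_dot} requires invoking Corollary~\ref{cor_quadractic_roots} (both roots of \eqref{mobius_roots} share the same sign, since their product is one) together with the convention distinguishing $\alpha_s$ from $\alpha_\ell$. As a sanity check, combining \eqref{gamma_k} and \eqref{theta_k} forces $\chi_k + \zeta_k \equiv 0 \pmod{2\pi}$, hence $\dot\zeta_k = -\dot\chi_k$; algebraically this is reflected in the identity $\dot\rho_k/(\rho_k-1) = -\alpha\dot r_k/(1+\alpha r_k)$ implied by the manipulations above, confirming that \eqref{chi_dot} and \eqref{zeta_dot} are two equivalent representations of the same derivative expressed in the two coordinate frames.
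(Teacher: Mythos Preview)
The paper supplies no proof here; the lemma is quoted from \cite{singh2024stabilizing}. Your route via $\tfrac{d}{dt}\arg g=\Im[\dot g/g]$, after stripping off the real constant numerators in \eqref{position_derivative_mobius_mapping} and \eqref{position_derivative_inverse_mobius_mapping}, is the natural one, and your derivation of \eqref{chi_dot} is correct as written.

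The obstacle you flag for \eqref{zeta_dot} is not mere bookkeeping but a genuine discrepancy. Your intermediate expression
\[
\dot\zeta_k \;=\; -\frac{2s_k}{|\rho_k-1|^2}\bigl[\,|\rho_k|\sin(\gamma_k-\psi_k)-\sin\gamma_k\,\bigr]
\]
is correct, and combining $\rho_k-1=-(1-\alpha^2)/(1+\alpha r_k)$ with \eqref{s_k} yields the coefficient $|\alpha|\,v_k/|1-\alpha^2|$, which coincides with $\alpha v_k/(1-\alpha^2)$ only when $\alpha(1-\alpha^2)>0$. Your proposed rescue via Corollary~\ref{cor_quadractic_roots} and the $\alpha_s/\alpha_\ell$ distinction does not close this gap: since $|\alpha_s|<1<|\alpha_\ell|$ and both roots share a sign, exactly one of the two roots violates $\alpha(1-\alpha^2)>0$. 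A concrete check with $\alpha=\alpha_\ell=2$ from Example~\ref{example_1}, $r_k=0$, $v_k=1$, $\theta_k=\pi/2$ gives $\dot\zeta_k=-\dot\chi_k=4$ from \eqref{chi_dot}, whereas \eqref{zeta_dot} as stated returns $-4$. Thus your argument establishes \eqref{chi_dot} and the version of \eqref{zeta_dot} with coefficient $|\alpha|/|1-\alpha^2|$; your consistency identity $\dot\rho_k/(\rho_k-1)=-\alpha\dot r_k/(1+\alpha r_k)$ confirms $\dot\zeta_k=-\dot\chi_k$ and is the cleanest way to see that the two displayed formulas must carry matching (unsigned) prefactors.
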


Next, we first express the controllers $u_k$ and $\omega_k$ in terms the transformed plane parameters in the following theorem. 

\begin{thm}\label{thm_actual_controllers_transformed_plane}
Consider agent models \eqref{model_actual_plane} and \eqref{model_transformed_plane} and let the conditions in Theorem~\ref{thm_control_transformed_plane} hold with the design parameters $s_d$ and $\delta_S$ chosen according to Corollary~\ref{cor_positive_speeds}. The controllers $[u_k, \omega_k]^\top$ are related to the controllers $[\nu_k, \Omega_k]^\top$ solely in terms of the transformed plane parameters as follows: 
\begin{align}
\label{u_k_transformed_plane} u_k &= \left|\frac{(1 - \alpha^2)}{\alpha}\right|\left[\frac{|\rho_k - 1|^2\nu_k - 2s_k \langle \rho_k-1, \dot{\rho}_k \rangle}{|\rho_k - 1|^4}\right],\\
\label{omega_k_transformed_plane} \omega_k &= \Omega_k - \left( \frac{2\alpha v_k}{1-\alpha^2} \right) [|\rho_k|\sin(\gamma_k-\psi_k)-\sin\gamma_k],
\end{align}
where $\nu_k$ and $\Omega_k$ are given by \eqref{velocity_control_transformed_plane} and \eqref{curvature_control_transformed_plane}, respectively. 
\end{thm}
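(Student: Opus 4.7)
The plan is to derive the two control relations by differentiating the magnitude and phase identities \eqref{v_k} and \eqref{theta_k} that connect the unicycle models \eqref{model_actual_plane} and \eqref{model_transformed_plane} across the two planes, and then substituting the dynamics available on the transformed side.

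For \eqref{u_k_transformed_plane}, I would start from \eqref{v_k}, rewritten as
\[
v_k \;=\; \left|\frac{1-\alpha^2}{\alpha}\right|\frac{s_k}{|\rho_k - 1|^2},
\]
after factoring $|\alpha(\rho_k-1)^2| = |\alpha|\,|\rho_k-1|^2$ and noting that $\alpha\in\mathbb{R}\setminus\{0,\pm 1\}$ by Corollary~\ref{cor_quadractic_roots} and Lemma~\ref{lem_transformation_derivative_properties}, so that the prefactor is a nonzero real constant that commutes with $d/dt$. Applying the quotient rule, together with $\dot{s}_k = \nu_k$ from \eqref{model_transformed_plane} and the identity $\tfrac{d}{dt}|\rho_k - 1|^2 = \tfrac{d}{dt}\langle \rho_k - 1, \rho_k - 1\rangle = 2\langle \rho_k - 1, \dot{\rho}_k\rangle$, and recalling $u_k = \dot{v}_k$ from \eqref{model_actual_plane}, directly yields the stated formula for $u_k$.

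For \eqref{omega_k_transformed_plane}, I would differentiate the phase identity $\theta_k = \gamma_k + \zeta_k$ in \eqref{theta_k} to obtain
\[
\omega_k \;=\; \dot{\theta}_k \;=\; \dot{\gamma}_k + \dot{\zeta}_k \;=\; \Omega_k + \dot{\zeta}_k,
\]
where $\dot{\gamma}_k = \Omega_k$ by \eqref{model_transformed_plane}. Substituting the closed-form expression \eqref{zeta_dot} for $\dot{\zeta}_k$ from Lemma~\ref{lem_chi_zeta_dot} then delivers \eqref{omega_k_transformed_plane}.

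Both steps are essentially bookkeeping, and I do not foresee any genuinely hard step; the heavy lifting for the angular control is already absorbed into Lemma~\ref{lem_chi_zeta_dot}. The only matters requiring care are that (i) the prefactor $|(1-\alpha^2)/\alpha|$ is a nonzero positive real constant and therefore commutes with $d/dt$ without sign ambiguity, and (ii) the scalar $|\rho_k-1|^2$ is strictly positive along any admissible trajectory, since $\rho_k=1$ would contradict Lemma~\ref{lem_transformation_derivative_properties}; thus the quotient rule applies globally in the domain of interest, making the derivation valid throughout $\mathcal{W}_\delta$.
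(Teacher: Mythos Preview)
Your proposal is correct and follows essentially the same route as the paper: differentiate the magnitude relation \eqref{v_k} and apply the quotient rule (with $\tfrac{d}{dt}|\rho_k-1|^2 = 2\langle \rho_k-1,\dot{\rho}_k\rangle$) to obtain \eqref{u_k_transformed_plane}, and differentiate the phase relation \eqref{theta_k} and invoke \eqref{zeta_dot} from Lemma~\ref{lem_chi_zeta_dot} to obtain \eqref{omega_k_transformed_plane}. Your additional remarks about the nonvanishing of the prefactor and of $|\rho_k-1|^2$ are well placed and only make the argument more explicit than the paper's version.
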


\begin{proof}
The time-derivative of \eqref{v_k}, yields
	\begin{align*}
		\dot{v}_k &= \left|\frac{(1 - \alpha^2)}{\alpha}\right| \frac{d}{dt}\left(\frac{s_k}{|\rho_k - 1|^2}\right)\\
		&= \left|\frac{(1 - \alpha^2)}{\alpha}\right|\left[\frac{|\rho_k - 1|^2\dot{s}_k - 2s_k \langle \rho_k-1, \dot{\rho}_k \rangle}{|\rho_k - 1|^4}\right].
	\end{align*}
Now, substituting for $\dot{v}_k$ and $\dot{s}_k$ from \eqref{model_actual_plane} and \eqref{model_transformed_plane}, respectively, we get the required result in \eqref{u_k_transformed_plane}. Further, the time-derivative of \eqref{theta_k} implies $\dot{\theta}_k = \dot{\gamma}_k + \dot{\zeta}_k$, where, using \eqref{model_actual_plane}, \eqref{model_transformed_plane} and \eqref{zeta_dot}, we get the required result in \eqref{omega_k_transformed_plane}. Hence, proved.  	
\end{proof}

Please note that the controllers $[\nu_k, \Omega_k]^\top$ in Theorem~\ref{thm_control_transformed_plane} are formulated entirely in terms of the transformed plane parameters, as are the controllers \eqref{u_k_transformed_plane} and \eqref{omega_k_transformed_plane} in Theorem~\ref{thm_actual_controllers_transformed_plane}. However, for practical implementation in the original plane, the controllers $[u_k, \omega_k]^\top$ must be expressed exclusively in terms of the original coordinates. To this end, we first rewrite \eqref{velocity_control_transformed_plane} and \eqref{curvature_control_transformed_plane} in terms of the original plane parameters by analyzing their individual terms. Please note that 
\begin{equation}\label{speed_error_original_plane}
\tilde{s}_k = \left|\frac{\alpha(1 - \alpha^2)}{(1 + \alpha r_k)^2}\right|v_k - s_d \triangleq h_k(r_k),
\end{equation}
using \eqref{speed_error} and \eqref{s_k}. Further, using \eqref{mobius_transformation} and \eqref{gamma_k}, the inner product $\langle \rho_k, \mathrm{e}^{i\gamma_k} \rangle$ can be expressed as
\begin{equation}
\langle \rho_k, \mathrm{e}^{i\gamma_k} \rangle = \left \langle \frac{\alpha(r_k + \alpha)}{1 + \alpha r_k}, \mathrm{e}^{i(\theta_k + \chi_k)} \right \rangle \triangleq m_k(r_k, \theta_k), 
\end{equation}
where $\chi_k(\cdot)$ is a function of $r_k$ (see \eqref{chi} in the Appendix). The transformed error \eqref{error_transformed_plane} can be expressed as 
\begin{equation}\label{error_original_plane}
\mathcal{E}_k = \frac{\alpha(r_k + \alpha)}{1 + \alpha r_k} + i\sigma \mathrm{e}^{i(\theta_k + \chi_k)} \triangleq n_k(r_k, \theta_k),
\end{equation}
using \eqref{mobius_transformation} and \eqref{gamma_k}. Further, using $\mathrm{e}^{i\pmb{\gamma}} = [\mathrm{e}^{i\gamma_1}, \cdots, \mathrm{e}^{i\gamma_N}]^\top$, we can write $\langle  i\mathrm{e}^{i\gamma_k}, \mathcal{L}_k\mathrm{e}^{i\pmb{\gamma}}\rangle = \langle  i\mathrm{e}^{i\gamma_k}, \sum_{j\in \mathcal{N}_k} \mathcal{L}_{k_j} \mathrm{e}^{i\gamma_j}\rangle$, where $\mathcal{L}_{k_j}$ represents the $j^\text{th}$ element of the $k^\text{th}$ row of the Laplacian $\mathcal{L}$. Further, exploiting \eqref{gamma_k}, we have
\begin{align}
\nonumber	& \langle  i\mathrm{e}^{i\gamma_k}, \mathcal{L}_k\mathrm{e}^{i\pmb{\gamma}}\rangle = \sum_{j\in \mathcal{N}_k} \langle  i\mathrm{e}^{i(\theta_k + \chi_k)},  \mathcal{L}_{k_j} \mathrm{e}^{i(\theta_j + \chi_j)}\rangle \\
\nonumber & = - \sum_{j\in \mathcal{N}_k} \sin ((\theta_j + \chi_j) - (\theta_k + \chi_k))\\
& = - \sum_{j\in \mathcal{N}_k} \sin ((\theta_j - \theta_k) + (\chi_j - \chi_k)) \triangleq \tau_k(\pmb{r}, \pmb{\theta}). 
\end{align} 
With these simplifications and notations, the controller \eqref{velocity_control_transformed_plane} and \eqref{curvature_control_transformed_plane} can be expressed in terms of the original plane parameters as:
\begin{align}
	\label{nu_original_plane}	\nu_k & =  -[\delta_S^2 - h_k^2(r_k)] \left[\kappa_1\frac{m_k(r_k, \theta_k)}{\delta_T^2-|n_k(r_k, \theta_k)|^2} + \kappa_2 h_k(r_k) \right],\\
	\label{Omega_original_plane}	\Omega_k & = \frac{1}{\sigma} \left[s_d + \kappa_1 \frac{ m_k(r_k, \theta_k)}{\delta_T^2-|n_k(r_k, \theta_k)|^2} + \frac{\mathcal{K}}{\sigma} \tau_k(\pmb{r}, \pmb{\theta}) \right].
\end{align}
We now state the following result analogous to Theorem~\ref{thm_actual_controllers_transformed_plane}.
\begin{thm}\label{thm_actual_controllers_original_plane}
Consider agent models \eqref{model_actual_plane} and \eqref{model_transformed_plane} and let the conditions in Theorem~\ref{thm_control_transformed_plane} hold with the design parameters $s_d$ and $\delta_S$ chosen according to Corollary~\ref{cor_positive_speeds}. The controllers $[u_k, \omega_k]^\top$ are related to the controllers $[\nu_k, \Omega_k]^\top$ solely in terms of the original plane parameters as follows:
	\begin{align}
		\label{u_k_original_plane} u_k &= \left|\frac{(1 + \alpha r_k)^2}{\alpha(1 - \alpha^2)}\right|\nu_k + 2\alpha v_k\frac{\langle 1 + \alpha r_k, \dot{r}_k \rangle}{|1 + \alpha r_k|^2},\\
		\label{omega_k_original_plane} \omega_k &= \Omega_k +  2\alpha v_k \left[ \frac{\sin\theta_k +\alpha |r_k| \sin(\theta_k-\phi_k)}{|1+\alpha r_k|^2} \right],
	\end{align}
	where $\nu_k$ and $\Omega_k$ are given by \eqref{nu_original_plane} and \eqref{Omega_original_plane}, respectively. 
\end{thm}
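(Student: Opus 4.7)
The plan is to derive the two expressions independently, each by starting from a scalar identity that links the original-plane speed/heading to the transformed-plane speed/heading, differentiating in time, and then isolating the original-plane controller. The main computational ingredient for the linear-acceleration relation is the identity for $\dot s_k$ that follows from \eqref{s_k}, while the curvature relation reduces to a single chain-rule identity involving the M\"{o}bius phase shift $\chi_k$.

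For \eqref{u_k_original_plane}, I would begin from the scalar relation
$$s_k = \left|\frac{\alpha(1-\alpha^2)}{(1+\alpha r_k)^2}\right|\,v_k = \frac{|\alpha(1-\alpha^2)|\,v_k}{|1+\alpha r_k|^2},$$
which is a restatement of \eqref{s_k}. Differentiating both sides in time, using $\dot v_k = u_k$, $\dot s_k = \nu_k$, and the identity $\tfrac{d}{dt}|1+\alpha r_k|^2 = 2\alpha \langle 1+\alpha r_k,\dot r_k\rangle$ (which follows from $|z|^2=\langle z,z\rangle$ and the linearity of the inner product together with $\dot r_k = v_k e^{i\theta_k}$), I would obtain
$$\nu_k = \frac{|\alpha(1-\alpha^2)|\, u_k}{|1+\alpha r_k|^2} - \frac{2\alpha\,v_k\,|\alpha(1-\alpha^2)|\,\langle 1+\alpha r_k,\dot r_k\rangle}{|1+\alpha r_k|^4}.$$
Solving algebraically for $u_k$ and using $|1+\alpha r_k|^2/|\alpha(1-\alpha^2)| = |(1+\alpha r_k)^2/(\alpha(1-\alpha^2))|$ (well-defined because $\alpha \neq 0,\pm 1$ by Corollary~\ref{cor_quadractic_roots}) gives \eqref{u_k_original_plane}.

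For \eqref{omega_k_original_plane}, I would start from \eqref{gamma_k}, namely $\gamma_k = \theta_k + \chi_k$. Differentiating in time and using $\dot\theta_k=\omega_k$ and $\dot\gamma_k=\Omega_k$ immediately yields $\omega_k = \Omega_k - \dot\chi_k$. Substituting the expression for $\dot\chi_k$ from Lemma~\ref{lem_chi_zeta_dot} (equation \eqref{chi_dot}) produces \eqref{omega_k_original_plane} directly. Consistency with Theorem~\ref{thm_actual_controllers_transformed_plane} can be noted in passing by observing that $-\dot\chi_k$ and the bracketed term in \eqref{omega_k_transformed_plane} must agree under \eqref{position_transformed_plane}, which is precisely the content of Lemma~\ref{lem_chi_zeta_dot}.

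There is no genuine obstacle here; the result is a change-of-coordinates identity. The most error-prone step is bookkeeping on the modulus factor $|\alpha(1-\alpha^2)|$ versus the signed quantity $\alpha$: the differentiation of $|1+\alpha r_k|^2$ produces a signed $2\alpha$, while the prefactor $|\alpha(1-\alpha^2)|$ stays unsigned, and those two must not be accidentally combined. Beyond that, the argument is a direct consequence of \eqref{s_k}, \eqref{gamma_k}, and Lemma~\ref{lem_chi_zeta_dot}, and Lemma~\ref{lem_transformation_derivative_properties} guarantees that none of the denominators vanish within the admissible domain.
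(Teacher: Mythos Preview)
Your proposal is correct and essentially identical to the paper's proof: the paper also differentiates the scalar relation \eqref{s_k} (writing $v_k$ in terms of $s_k$ and differentiating, whereas you differentiate $s_k$ in terms of $v_k$, a trivial reordering) and then uses $\omega_k=\Omega_k-\dot\chi_k$ together with \eqref{chi_dot} for the turn-rate relation. Your remarks on the sign bookkeeping for $\alpha$ versus $|\alpha(1-\alpha^2)|$ and on well-definedness via Lemma~\ref{lem_transformation_derivative_properties} are apt and match the paper's implicit reasoning.
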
 

\begin{proof}
Using \eqref{s_k}, we have $\dot{v}_k = (1/|\alpha(1 - \alpha^2)|)(d/dt)(|1 + \alpha r_k|^2s_k) = (1/|\alpha(1 - \alpha^2)|)(|1 + \alpha r_k|^2\dot{s}_k + 2\alpha s_k \langle 1 + \alpha r_k, \dot{r}_k \rangle)$. Now, substituting for $s_k$ from \eqref{s_k} and using \eqref{model_actual_plane} and \eqref{model_transformed_plane}, we get the desired result in \eqref{u_k_original_plane}. To prove second part, the time-derivative of \eqref{gamma_k} results in $\dot{\gamma}_k = \dot{\theta}_k + \dot{\chi}_k \implies \omega_k = \Omega_k - \dot{\chi}_k$ using \eqref{model_actual_plane} and \eqref{model_transformed_plane}. Now, substituting for $\dot{\chi}_k$ from \eqref{chi_dot} we get the required result in \eqref{omega_k_original_plane}. 
\end{proof}

\begin{remark}
Please note that the proposed control design in Theorem~\ref{thm_control_transformed_plane} is applicable only if $|\tilde{s}_k(0)| < \delta_S$ and $|\mathcal{E}_k(0)| < \delta_T$ for all $k$. These requirements specify certain conditions on the initial states $[{\rho}_k(0), {s}_k(0), {\gamma}_k(0)]^\top$ in the transformed plane and must be mapped back to the original plane in terms of the initial states $[{r}_k(0), {v}_k(0), {\theta}_k(0)]^\top$ for every $k$. Using \eqref{speed_error_original_plane} and \eqref{error_original_plane}, the requirements translate to the conditions $|h_k(r_k(0))| < \delta_S$ and $|n_k(r_k(0), \theta_k(0))| < \delta_T$ for each $k$, expressed in the original plane parameters. Once the initial positions $r_k(0)$ are chosen within the open set $\mathcal{Z}$ in Problem~\ref{problem_actual_plane}, the speeds $v_k(0)$ and heading angles $\theta_k(0)$ can then be selected to ensure these inequalities are satisfied for all $k$. 
\end{remark}

\subsection{Post-Design Signal Analysis}
This subsection establishes tighter bounds on the closed-loop signals by leveraging the results of Theorem~\ref{thm_control_transformed_plane}. By replacing $\mathcal{E}_k, s_k, \gamma_k$ in terms of the original plane parameters, \eqref{V_s} and \eqref{V_b} can be expressed as $\mathcal{V}_s(\pmb{r}, \pmb{v}, \pmb{\theta})$ and $\mathcal{V}_b(\pmb{r}, \pmb{v}, \pmb{\theta})$, respectively. Accordingly, the initial values are given by $\mathcal{V}_s(\pmb{r}(0), \pmb{v}(0), \pmb{\theta}(0)) \triangleq \mathcal{V}_s(0)$ and $\mathcal{V}_b(\pmb{r}(0), \pmb{v}(0), \pmb{\theta}(0)) \triangleq \mathcal{V}_b(0)$. Define the constants $c_s = \sqrt{1-\mathrm{e}^{-({2\mathcal{V}_s(0)}/{\kappa_1})}}$ and $c_b =  \sqrt{1-\mathrm{e}^{({-2\mathcal{V}_b(0)}/{\kappa_1})}}$. Based on these, we introduce the following combination of the two radii, $|\alpha|$ and $\left|(\lambda + \alpha)/\mu\right|$, of the circles in the transformed plane: 
\begin{align}
\label{eta_+}\eta^c_{+} &= (1 - c)|\alpha| + c|(\lambda + \alpha)/\mu| \\
\label{eta_-}  \eta^c_{-} &= (1 + c)|\alpha| - c|(\lambda + \alpha)/\mu|,
\end{align}
where $c = c_s$ or $c = c_b$ depending on the context. We analogously define constants $\ell_s = \sqrt{1-\mathrm{e}^{-{2\mathcal{V}_s(0)}}}$ and $\ell_b = \sqrt{1-\mathrm{e}^{{-2\mathcal{V}_b(0)}}}$. We now state the following results based on the potential functions \eqref{V_s} and \eqref{V_b}.   

\begin{thm}[Synchronization]\label{thm_bounds_synchronization}
Let $N$ agents be described by the model \eqref{model_transformed_plane} and interacting over an undirected and connected graph $\mathcal{G}$ with Laplacian matrix $\mathcal{L}$. Assume that the agents are governed by the control laws \eqref{velocity_control_transformed_plane} and \eqref{curvature_control_transformed_plane} with  controller gains $\kappa_1 > 0$, $\kappa_2 > 0$, and $\mathcal{K} < 0$, and that their initial conditions lie in the set $\mathcal{W}_{\delta}$ as defined in Theorem~\ref{thm_control_transformed_plane}. Then, the following properties hold in both planes: 
\begin{itemize}[leftmargin=*]
	\item[(a)] In the transformed plane, the absolute values of error $\mathcal{E}_k(t)$, position $\rho_k(t)$, speed $s_k(t)$ and the summation of squared relative phasors $\mathrm{e}^{i\gamma_j(t)} - \mathrm{e}^{i\gamma_k(t)}$ belong to the following compact sets for all $k$ and $t \geq 0$: 
	\begin{align}
		\label{error_bound_transformed_plane_sync}	& |\mathcal{E}_k(t)|  \leq \delta_Tc_s,\\
		\label{position_bound_transformed_plane_sync}	& |\rho_k(t)| \in 
		\begin{cases}
			\left[\eta^{c_s}_{-}, \eta^{c_s}_{+}\right], & \text{if}~ \alpha = \alpha_s\\
			\left[\eta^{c_s}_{+}, \eta^{c_s}_{-}\right], & \text{if}~ \alpha = \alpha_{\ell}
		\end{cases},\\
		\label{speed_bound_transformed_plane_sync}	& s_k(t) \in [s_d - \delta_S\ell_s, s_d + \delta_S\ell_s],\\
		\label{angle_bound_transformed_plane_sync}	& \hspace*{-5pt}\sum_{\{j,k\} \in \mathbb{E}} \left|\mathrm{e}^{i\gamma_j(t)} - \mathrm{e}^{i\gamma_k(t)} \right|^2  \in \left[0, \min \left\{\frac{-2\mathcal{V}_s(0)}{\mathcal{K}} ,4|\mathbb{E}|\right\} \right].
	\end{align}
	\item[(b)] In the original plane, the position $r_k(t)$ and the absolute value of speed $v_k(t)$ belong to the following compact sets for all $k$ and $t \geq 0$: 
	\begin{align}
		\label{position_bound_original_plane_sync} & \left|r_k(t) + \frac{\alpha^2 - (\eta^{c_s}_{+})^2}{\alpha(1 - (\eta^{c_s}_{+})^2)}\right| \leq \left|\frac{\eta^{c_s}_{+}(1 - \alpha^2)}{\alpha(1 - (\eta^{c_s}_{+})^2)}\right|,\\
		\label{speed_bound_original_plane_sync} & v_k(t) \in \left[(s_d - \delta_S\ell_s), (s_d + \delta_S\ell_s)\right] \times \left|\frac{df_k}{dr_k}\right|^{-1},
	\end{align}
	irrespective of the roots $\alpha_s$ or $\alpha_{\ell}$. 
\end{itemize}
\end{thm}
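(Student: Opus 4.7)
The plan rests on the monotonicity of the Lyapunov function $\mathcal{V}_s$ already established in Theorem~\ref{thm_control_transformed_plane}: since $\dot{\mathcal{V}}_s \leq 0$ in $\mathcal{W}_\delta$, one has $\mathcal{V}_s(t) \leq \mathcal{V}_s(0)$ for all $t \geq 0$. Under the synchronization hypotheses ($\kappa_1 > 0$, $\kappa_2 > 0$, $\mathcal{K} < 0$), each of the three summands $\kappa_1 \mathcal{S}$, $-\mathcal{K}\mathcal{U}$, and $\mathcal{H}$ in \eqref{V_s} is non-negative, so each is individually bounded above by $\mathcal{V}_s(0)$. I would extract \eqref{error_bound_transformed_plane_sync}, \eqref{speed_bound_transformed_plane_sync}, and \eqref{angle_bound_transformed_plane_sync} directly from these three component inequalities, then derive the geometric bounds \eqref{position_bound_transformed_plane_sync}, \eqref{position_bound_original_plane_sync}, and \eqref{speed_bound_original_plane_sync} using the M\"{o}bius transformation and its inverse.

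For \eqref{error_bound_transformed_plane_sync}, since every summand in the definition \eqref{blf} of $\mathcal{S}$ is non-negative, the $k$-th term alone satisfies $\tfrac{\kappa_1}{2}\ln[\delta_T^2/(\delta_T^2 - |\mathcal{E}_k|^2)] \leq \kappa_1 \mathcal{S}(t) \leq \mathcal{V}_s(0)$; exponentiating yields $|\mathcal{E}_k|^2 \leq \delta_T^2(1 - e^{-2\mathcal{V}_s(0)/\kappa_1}) = \delta_T^2 c_s^2$. The speed bound \eqref{speed_bound_transformed_plane_sync} follows analogously from $\mathcal{H}(\pmb{s}(t)) \leq \mathcal{V}_s(0)$ applied to \eqref{blf_speed}, giving $|\tilde{s}_k| \leq \delta_S \ell_s$. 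For \eqref{angle_bound_transformed_plane_sync}, combining $-\mathcal{K}\mathcal{U}(\pmb{\gamma}(t)) \leq \mathcal{V}_s(0)$ with the identity in \eqref{cardinality} (namely $\mathcal{U} = \tfrac{1}{2}\sum_{\{j,k\}\in\mathbb{E}}|\mathrm{e}^{i\gamma_j}-\mathrm{e}^{i\gamma_k}|^2$) produces the $-2\mathcal{V}_s(0)/\mathcal{K}$ bound, while the competing $4|\mathbb{E}|$ estimate is the unconditional one already recorded in \eqref{cardinality}. For the position bound \eqref{position_bound_transformed_plane_sync}, I would rewrite \eqref{error_transformed_plane} as $\rho_k = \mathcal{E}_k - i\sigma\,\mathrm{e}^{i\gamma_k}$ and apply the reverse and forward triangle inequalities, $\bigl||\alpha| - |\mathcal{E}_k|\bigr| \leq |\rho_k| \leq |\alpha| + |\mathcal{E}_k|$, since $|i\sigma\,\mathrm{e}^{i\gamma_k}| = |\alpha|$; substituting $|\mathcal{E}_k| \leq \delta_T c_s$ with the case-specific $\delta_T$ from \eqref{delta_transformed} yields, after elementary algebra, the intervals $[\eta^{c_s}_{-}, \eta^{c_s}_{+}]$ for $\alpha = \alpha_s$ and $[\eta^{c_s}_{+}, \eta^{c_s}_{-}]$ for $\alpha = \alpha_\ell$.

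The heart of the argument, and the main obstacle, is part (b). To establish \eqref{position_bound_original_plane_sync} I would push the bound $|\rho_k| \leq \eta^{c_s}_{+}$ -- which is valid in both root cases because $\eta^{c_s}_{+}$ is the outer radius for $\alpha_s$ and the inner radius for $\alpha_\ell$, while by Theorem~\ref{thm_mapping_of_circles}(b) the inverse M\"{o}bius map reverses interior--exterior precisely when $\alpha = \alpha_\ell$, so in both situations the relevant image set in the original plane is a disk -- through the inverse transformation \eqref{inverse_mobius_transformation}. Substituting $\rho_k = \alpha(r_k+\alpha)/(1+\alpha r_k)$ into $\rho_k \bar{\rho}_k \leq R^2$ with $R = \eta^{c_s}_{+}$ and collecting real and imaginary parts, the crucial algebraic simplification $(\alpha^2 - R^2)^2 - (\alpha^4 - R^2)(1 - R^2) = R^2(\alpha^2 - 1)^2$ reduces the inequality to a disk in $r_k$ with center $-(\alpha^2 - R^2)/[\alpha(1 - R^2)]$ and radius $|R(1-\alpha^2)/[\alpha(1-R^2)]|$, matching \eqref{position_bound_original_plane_sync}. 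Finally, \eqref{speed_bound_original_plane_sync} is immediate from \eqref{v_k}: $v_k = s_k\,|df^{-1}_k/d\rho_k| = s_k/|df_k/dr_k|$, combined with the bound on $s_k$ derived above. The subtle point throughout is justifying the pointwise propagation of the aggregate Lyapunov inequality to per-agent bounds, which relies crucially on the non-negativity of each individual logarithmic summand in \eqref{blf} and \eqref{blf_speed}.
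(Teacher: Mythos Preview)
Your proposal is correct and follows essentially the same approach as the paper: exploit $\mathcal{V}_s(t)\le\mathcal{V}_s(0)$ together with the non-negativity of each of the three summands in \eqref{V_s} to obtain per-agent bounds on $|\mathcal{E}_k|$, $|\tilde{s}_k|$, and $\mathcal{U}(\pmb{\gamma})$, then use the triangle inequality on \eqref{error_transformed_plane} for \eqref{position_bound_transformed_plane_sync} and the relation \eqref{s_k} for \eqref{speed_bound_original_plane_sync}. For \eqref{position_bound_original_plane_sync} you actually supply more than the paper, which defers the computation to \cite{singh2024stabilizing}; your algebraic identity $(\alpha^2-R^2)^2-(\alpha^4-R^2)(1-R^2)=R^2(\alpha^2-1)^2$ is the right key, though note that for $\alpha=\alpha_\ell$ the starting inequality is $|\rho_k|\ge\eta^{c_s}_{+}$ (not $\le$), and the disk in $r_k$ emerges because dividing by $\alpha^2(1-R^2)$ with $R=\eta^{c_s}_{+}>1$ flips the inequality sign---your geometric remark about interior--exterior reversal is the correct intuition, but the written algebra should reflect this case distinction.
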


\begin{proof}
(a) From Theorem~\ref{thm_control_transformed_plane}, since $\dot{\mathcal{V}}_s \leq 0$ for the initial conditions in the set $\mathcal{W}_{\delta}$, $\mathcal{V}_s(\pmb{\rho}(t),\pmb{s}(t),\pmb{\gamma}(t)) \leq \mathcal{V}_s(0)$ for all $t \geq 0$ in $\mathcal{W}_{\delta}$. By applying the preceding inequality for individual terms in \eqref{V_s}, one can obtain the desired bounds since all three terms in \eqref{V_s} are positive. Considering the first term $\kappa_1\mathcal{S}(\pmb{\mathcal{E}})$ in \eqref{V_s} and using \eqref{blf}, it can be concluded that $(\kappa_1/2)\sum_{k=1}^{N} \ln[{\delta_T^2}/(\delta_T^2-|\mathcal{E}_k(t)|^2)] \leq \mathcal{V}_s(0) \implies (\kappa_1/2) \ln[\delta_T^2/(\delta_T^2-|\mathcal{E}_k(t)|^2)] \leq \mathcal{V}_s(0) \implies {\delta_T^2}/(\delta_T^2-|\mathcal{E}_k(t)|^2) \leq \mathrm{e}^{{2\mathcal{V}_s(0)}/{\kappa_1}}, \forall k, \forall t \geq 0$. Since $|\mathcal{E}_k(t)| < \delta_T$ as proved in Theorem~\ref{thm_control_transformed_plane}, $\delta_T^2-|\mathcal{E}_k(t)|^2 > 0$, and hence, the preceding inequality can be simplified as $|\mathcal{E}_k(t)| \leq  \delta_T\sqrt{1-\mathrm{e}^{-({2\mathcal{V}_s(0)}/{\kappa_1})}} \implies |\mathcal{E}_k(t)| \leq \delta_T c_s, \forall k, \forall t\geq 0$, where $c_s$ is defined above. This proves \eqref{error_bound_transformed_plane_sync}. Further, exploiting \eqref{error_transformed_plane}, we can rewrite the previous inequality as $|\rho_k \pm i |\alpha|\mathrm{e}^{i\gamma_k}| \leq \delta_Tc_s, \forall k, \forall t\geq 0$. Moreover, using the fact $\Big{|}|w_1| - |w_2|\Big{|} \leq |w_1 + w_2|$ for some $w_1 \in \mathbb{C}, w_2 \in \mathbb{C}$, we can write $\pm(|\rho_k|-|\alpha|) < \delta_Tc_s, \forall k, \forall t \geq 0$. Furthermore, considering $+$ and $-$ sign gives $|\rho_k| < \delta_Tc_s+|\alpha|$ and $|\alpha|-\delta_Tc_s< |\rho_k|$ respectively. Consequently, we get $|\alpha|-\delta_Tc_s < |\rho_k(t)| <  \delta_Tc_s+|\alpha|, \forall k, \forall t \geq 0$. Moreover, substitution for $\delta_T$ from \eqref{delta_transformed} gives the desired result \eqref{position_bound_transformed_plane_sync}. \par 
	
A similar argument for the third term $\mathcal{H}(\pmb{s})$ in \eqref{V_s}, this implies using \eqref{blf_speed} that $({1}/{2}) \sum_{k=1}^{N} \ln [\delta_S^2/(\delta_S^2 - \tilde{s}_k^2)] \leq \mathcal{V}_s(0) \implies \ln [\delta_S^2/(\delta_S^2 - \tilde{s}_k^2)] \leq 2\mathcal{V}_s(0), \forall k, \forall t \geq 0$. Using the similar analysis as carried out for the first term, it can be concluded that $|\tilde{s}_k(t)| \leq \delta_S\sqrt{1-\mathrm{e}^{-2\mathcal{V}_s(0)}} \implies |\tilde{s}_k(t)| \leq \delta_S\ell_s, \forall k, \forall t \geq 0$ ($\ell_s$ is defined above). Now, using \eqref{speed_error}, one can rewrite the preceding inequality as $s_d - \delta_S\ell_s \leq s_k \leq s_d + \delta_S\ell_s, \forall k, \forall t \geq 0$. Further, using a similar argument for the second term $\mathcal{U}(\pmb{\gamma})$ in \eqref{V_s}, it can be concluded using \eqref{potential_synchro_balanc} that $\langle \mathrm{e}^{i\pmb{\gamma}}, \mathcal{L}\mathrm{e}^{i\pmb{\gamma}} \rangle \leq -2\mathcal{V}_s(0)/\mathcal{K}$ since $\mathcal{K} < 0$. From \eqref{cardinality}, it can be inferred that $\langle \mathrm{e}^{i\pmb{\gamma}}, \mathcal{L}\mathrm{e}^{i\pmb{\gamma}} \rangle = \sum_{\{j,k\}\in \mathbb{E}} \left|\mathrm{e}^{i\gamma_j} - \mathrm{e}^{i\gamma_k} \right|^2  \leq 4|\mathbb{E}|$. Consequently, the required bound in \eqref{angle_bound_transformed_plane_sync} is obtained. \par
	
(b) The proof of \eqref{position_bound_original_plane_sync} follows similar to \cite{singh2024stabilizing} and omitted for brevity. For proving \eqref{speed_bound_original_plane_sync}, we substitute $s_k$ from \eqref{s_k} in \eqref{speed_bound_transformed_plane_sync}, and using \eqref{position_derivative_mobius_mapping}, get the desired result. 
\end{proof}

\begin{thm}[Balancing]\label{thm_bounds_balancing}
Let $N$ agents be described by the model \eqref{model_transformed_plane} and interacting over an undirected and connected graph $\mathcal{G}$ with Laplacian matrix $\mathcal{L}$. Assume that the agents are governed by the control laws \eqref{velocity_control_transformed_plane} and \eqref{curvature_control_transformed_plane} with  controller gains $\kappa_1 > 0$, $\kappa_2 > 0$, and $\mathcal{K} > 0$, and that their initial conditions lie in the set $\mathcal{W}_{\delta}$ as defined in Theorem~\ref{thm_control_transformed_plane}. Then, the following properties hold in both planes: 
\begin{itemize}[leftmargin=*]
	\item[(a)] In the transformed plane, the absolute values of error $\mathcal{E}_k(t)$, position $\rho_k(t)$, speed $s_k(t)$ and the summation of squared relative phasors $\mathrm{e}^{i\gamma_j(t)} - \mathrm{e}^{i\gamma_k(t)}$ belong to the following compact sets for all $k$ and $t \geq 0$: 
	\begin{align}
		\label{error_bound_transformed_plane_balanc}	& |\mathcal{E}_k(t)|  \leq \delta_Tc_b,\\
		\label{position_bound_transformed_plane_balanc}	& |\rho_k(t)| \in 
		\begin{cases}
			\left[\eta^{c_b}_{-}, \eta^{c_b}_{+}\right], & \text{if}~ \alpha = \alpha_s\\
			\left[\eta^{c_b}_{+}, \eta^{c_b}_{-}\right], & \text{if}~ \alpha = \alpha_{\ell}
		\end{cases},\\
		\label{speed_bound_transformed_plane_balanc}	& s_k(t) \in [s_d - \delta_S\ell_b, s_d + \delta_S\ell_b],\\
		\label{angle_bound_transformed_plane_balanc}	& \hspace*{-5pt}\sum_{\{j,k\} \in \mathbb{E}} \left|\mathrm{e}^{i\gamma_j(t)} - \mathrm{e}^{i\gamma_k(t)} \right|^2  \in \\
		& \left[\max \left\{0, \left( N\lambda_{\max} - \left(\frac{2\mathcal{V}_b(0)}{\mathcal{K}}\right)\right)\right\}, N\lambda_{\max} \right].
	\end{align}
	\item[(b)] In the original plane, the position $r_k(t)$ and the absolute value of speed $v_k(t)$ belong to the following compact sets for all $k$ and $t \geq 0$: 
	\begin{align}
		\label{position_bound_original_plane_balanc} & \left|r_k(t) + \frac{\alpha^2 - (\eta^{c_b}_{+})^2}{\alpha(1 - (\eta^{c_b}_{+})^2)}\right| \leq \left|\frac{\eta^{c_b}_{+}(1 - \alpha^2)}{\alpha(1 - (\eta^{c_b}_{+})^2)}\right|,\\
		\label{speed_bound_original_plane_balanc} & v_k(t) \in \left[(s_d - \delta_S\ell_b), (s_d + \delta_S\ell_b)\right] \times \left|\frac{df_k}{dr_k}\right|^{-1},
	\end{align}
	irrespective of the roots $\alpha_s$ or $\alpha_{\ell}$. 
\end{itemize}
\end{thm}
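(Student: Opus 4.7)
The plan is to mirror the proof of Theorem~\ref{thm_bounds_synchronization} almost verbatim, but with $\mathcal{V}_b$ from \eqref{V_b} in place of $\mathcal{V}_s$, exploiting the fact that the only structural difference is in the angular term $\mathcal{K}[N\lambda_{\max}/2 - \mathcal{U}(\pmb{\gamma})]$, which now enters with a positive sign and yields a lower bound on $\mathcal{U}(\pmb{\gamma})$ rather than an upper one. The starting point is Theorem~\ref{thm_control_transformed_plane}(b), which guarantees $\dot{\mathcal{V}}_b \leq 0$ on $\mathcal{W}_\delta$ and hence $\mathcal{V}_b(t) \leq \mathcal{V}_b(0)$ for all $t \geq 0$. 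Since the graph $\mathcal{G}$ is circulant, the bound $\mathcal{U}(\pmb{\gamma}) \leq (N/2)\lambda_{\max}$ established before \eqref{pf_sb1} makes all three summands of $\mathcal{V}_b$ individually non-negative, so each of them is independently bounded above by $\mathcal{V}_b(0)$.

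First I would extract the error and speed bounds. From $\kappa_1 \mathcal{S}(\pmb{\mathcal{E}}(t)) \leq \mathcal{V}_b(0)$, the same exponentiation of \eqref{blf} used in the synchronization case gives $|\mathcal{E}_k(t)| \leq \delta_T \sqrt{1 - \mathrm{e}^{-2\mathcal{V}_b(0)/\kappa_1}} = \delta_T c_b$, which is \eqref{error_bound_transformed_plane_balanc}. Feeding this into $\mathcal{E}_k = \rho_k + i\sigma \mathrm{e}^{i\gamma_k}$ and applying the reverse triangle inequality $\bigl||\rho_k|-|\alpha|\bigr| \leq |\mathcal{E}_k|$ yields $|\alpha| - \delta_T c_b \leq |\rho_k(t)| \leq |\alpha| + \delta_T c_b$; substituting $\delta_T$ from \eqref{delta_transformed} and splitting on $\alpha = \alpha_s$ versus $\alpha = \alpha_\ell$ produces the two cases in \eqref{position_bound_transformed_plane_balanc}. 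From $\mathcal{H}(\pmb{s}(t)) \leq \mathcal{V}_b(0)$, the identical manipulation of \eqref{blf_speed} gives $|\tilde s_k(t)| \leq \delta_S \ell_b$, and \eqref{speed_error} then yields \eqref{speed_bound_transformed_plane_balanc}.

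The only genuinely new step is the angular bound. From $\mathcal{K}\bigl[N\lambda_{\max}/2 - \mathcal{U}(\pmb{\gamma}(t))\bigr] \leq \mathcal{V}_b(0)$ and $\mathcal{K} > 0$, I obtain $\mathcal{U}(\pmb{\gamma}(t)) \geq N\lambda_{\max}/2 - \mathcal{V}_b(0)/\mathcal{K}$. The identity at the top of \eqref{cardinality} gives $\sum_{\{j,k\}\in\mathbb{E}} |\mathrm{e}^{i\gamma_j}-\mathrm{e}^{i\gamma_k}|^2 = 2\mathcal{U}(\pmb{\gamma})$, so the displayed quantity is bounded below by $N\lambda_{\max} - 2\mathcal{V}_b(0)/\mathcal{K}$, and the upper bound $N\lambda_{\max}$ is immediate from $\mathcal{U}(\pmb{\gamma}) \leq (N/2)\lambda_{\max}$ for circulant graphs. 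Since the sum is trivially non-negative regardless of $\mathcal{V}_b(0)/\mathcal{K}$, the lower bound is tightened to $\max\{0,\,N\lambda_{\max} - 2\mathcal{V}_b(0)/\mathcal{K}\}$, producing \eqref{angle_bound_transformed_plane_balanc}.

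Finally, for part~(b), I would push the $|\rho_k|$ bound through the inverse M\"obius transformation \eqref{inverse_mobius_transformation}: by the circle-preserving property (Theorem~\ref{thm_mapping_of_circles}), a disk $|\rho| \leq \eta^{c_b}_+$ in the transformed plane maps to a disk in the original plane whose center and radius are exactly those appearing in \eqref{position_bound_original_plane_balanc}; this computation is identical to the one invoked in Theorem~\ref{thm_bounds_synchronization}(b) with $c_s$ replaced by $c_b$, and I would simply cite \cite{singh2024stabilizing} for the algebraic reduction. For \eqref{speed_bound_original_plane_balanc}, the relation \eqref{v_k} gives $v_k = |df_k^{-1}/d\rho_k|\,s_k = |df_k/dr_k|^{-1} s_k$, and multiplying the bound \eqref{speed_bound_transformed_plane_balanc} by this positive factor yields the claim. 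The only real obstacle is keeping the cases $\alpha_s$ versus $\alpha_\ell$ consistent when inverting the modulus inequality to obtain \eqref{position_bound_transformed_plane_balanc}, since the roles of $\eta^{c_b}_+$ and $\eta^{c_b}_-$ swap between the interior-preserving and interior-reversing branches; every other step is a mechanical re-run of the synchronization argument.
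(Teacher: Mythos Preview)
Your proposal is correct and follows exactly the approach the paper indicates: the paper's own proof simply says the argument is identical to Theorem~\ref{thm_bounds_synchronization} with the angular bound deferred to \cite{hegde2023synchronization}, and you have reproduced precisely that template, including an explicit derivation of \eqref{angle_bound_transformed_plane_balanc} from the inequality $\mathcal{K}[N\lambda_{\max}/2 - \mathcal{U}(\pmb{\gamma})]\le \mathcal{V}_b(0)$ that the reference would supply. The only caveat is that your argument (like the paper's) tacitly requires $\mathcal{G}$ to be circulant so that $N\lambda_{\max}/2 - \mathcal{U}(\pmb{\gamma})\ge 0$ and the three summands of $\mathcal{V}_b$ are individually non-negative; this is consistent with the balancing setting of Theorem~\ref{thm_control_transformed_plane}(b) even though the hypothesis of Theorem~\ref{thm_bounds_balancing} only states ``undirected and connected.''
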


\begin{proof}
Please refer to \cite{hegde2023synchronization} for the proof of \eqref{angle_bound_transformed_plane_balanc}. The remainder of the proof follows along similar lines as in Theorem~\ref{thm_bounds_synchronization}, and hence, is omitted. 
\end{proof}


\begin{figure}[t!]
	\centering
	\subfigure[Topology]{\includegraphics[width=2.7cm]{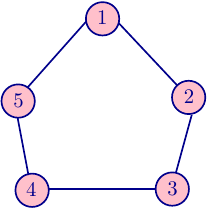}\label{nt}}\hspace{0.5cm}
	\subfigure[Laplacian Matrix]{\includegraphics[width=4.5cm]{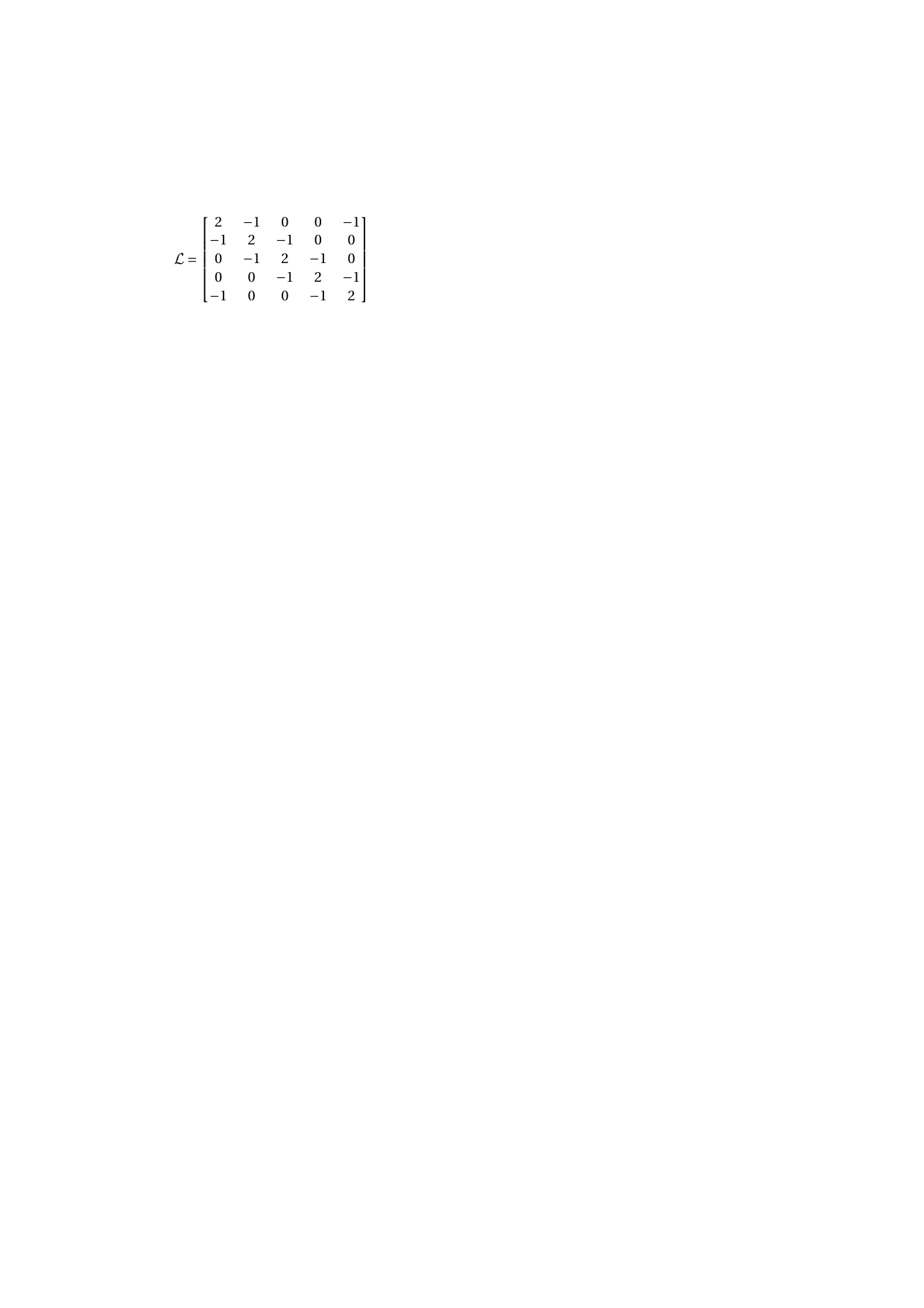}\label{lm}}
	\caption{Topology and Laplacian for $N = 5$ agents.}
	\label{fig_network topology}	
	\vspace*{-15pt}
\end{figure}

\begin{table*}[t!]
	\centering
	\caption {Agents' corresponding initial conditions in both the planes.}\label{table_initial_condition}
	\begin{tabular}{c c c c c | c c c c c}
		\hline
		\multicolumn{5}{c}{Original Plane} & \multicolumn{5}{c}{Transformed Plane} \\
		\hline
		$r_k(0)$ & $|r_k(0)|$ & $v_k(0)$ & $\theta_k(0)$ & $\phi_k(0)$ & $\rho_k(0)$ & $|\rho_k(0)|$ & $s_k(0)$ & $\gamma_k(0)$ & $\psi_k(0)$\\ 
		\hline
		$~~1.25 + i 0.00$	& 1.25 & 0.4 & $~~90^\circ$ & $0^\circ$       & $~~0.54+i0.00$ & 0.54 & 0.056 & $90.00^\circ$ &  $0^\circ$\\ 
		$-0.20 - i 1.15$	& 1.17 & 0.1 & $-15^\circ$  & $-99.84^\circ$  & $~~0.41-i0.38$ & 0.56 & 0.032 & $50.28^\circ$ & $-42.74^\circ$\\
		$-0.65 -i 0.90$	    & 1.11 & 0.1 & $-40^\circ$  & $-125.83^\circ$ & $0.23-i0.51$  & 0.56 & 0.057 & $27.38^\circ$ & $-65.76^\circ$\\
		$~~0.75 - i 1.10$	& 1.33 & 0.2 & $~~25^\circ$ & $-55.71^\circ$  & $~~0.53-i0.19$ & 0.56 & 0.034 & $68.60^\circ$ &  $-19.53^\circ$\\ 
		$~~1.00 - i 0.75$	& 1.25 & 0.5 & $~~50^\circ$ & $-36.87^\circ$  & $~~0.53 - i0.12$ & 0.54 & 0.078 & $78.07^\circ$ & $-12.52^\circ$\\ 
		\hline
	\end{tabular}
	\vspace*{-5pt}
\end{table*}

\begin{table}[t!]
	\centering
	\caption{Initial condition feasibility analysis.}\label{table_parameter_comparision}	
	\begin{tabular}{c c | c c}
		    \hline
		\multicolumn{2}{c}{Transformed Plane} & \multicolumn{2}{c}{Original Plane} \\
			\hline
		$|\mathcal{E}_k(0)|$ & $|\tilde{s}_k(0)|$ & $|n_k(r_k(0),\theta_k(0))|$ & $|h_k(r_k(0))|$ \\ 
			\hline
    0.0385 & 0.004 &	0.0383 & 0.0032 \\ 
	0.0639 & 0.028 &	0.0641 & 0.0272 \\
	0.0688 & 0.003 &	0.0686 & 0.0030 \\
	0.0645 & 0.026 &	0.0643 & 0.0258 \\ 
	0.0426 & 0.018 &	0.0428 & 0.0184 \\ 
			\hline
		\end{tabular}
		\vspace*{-15pt}
\end{table}

\begin{figure*}[t!]
	\centering{
		\subfigure[Trajectory]{\includegraphics[width=3.4cm]{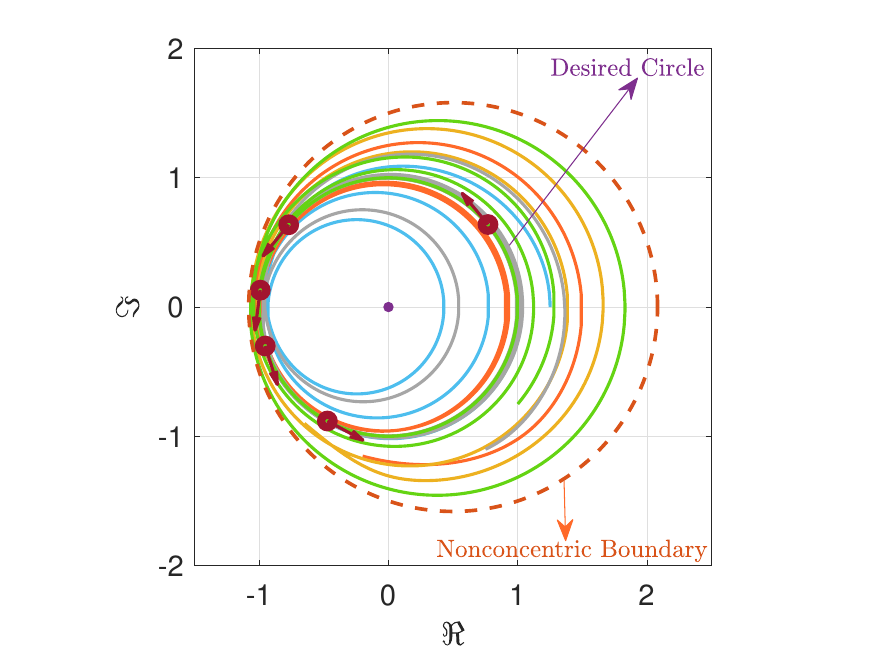}\label{trj_1}}\hspace*{0.2cm}
		\subfigure[Absolute error $|e_k|$]{\includegraphics[width=4.825cm]{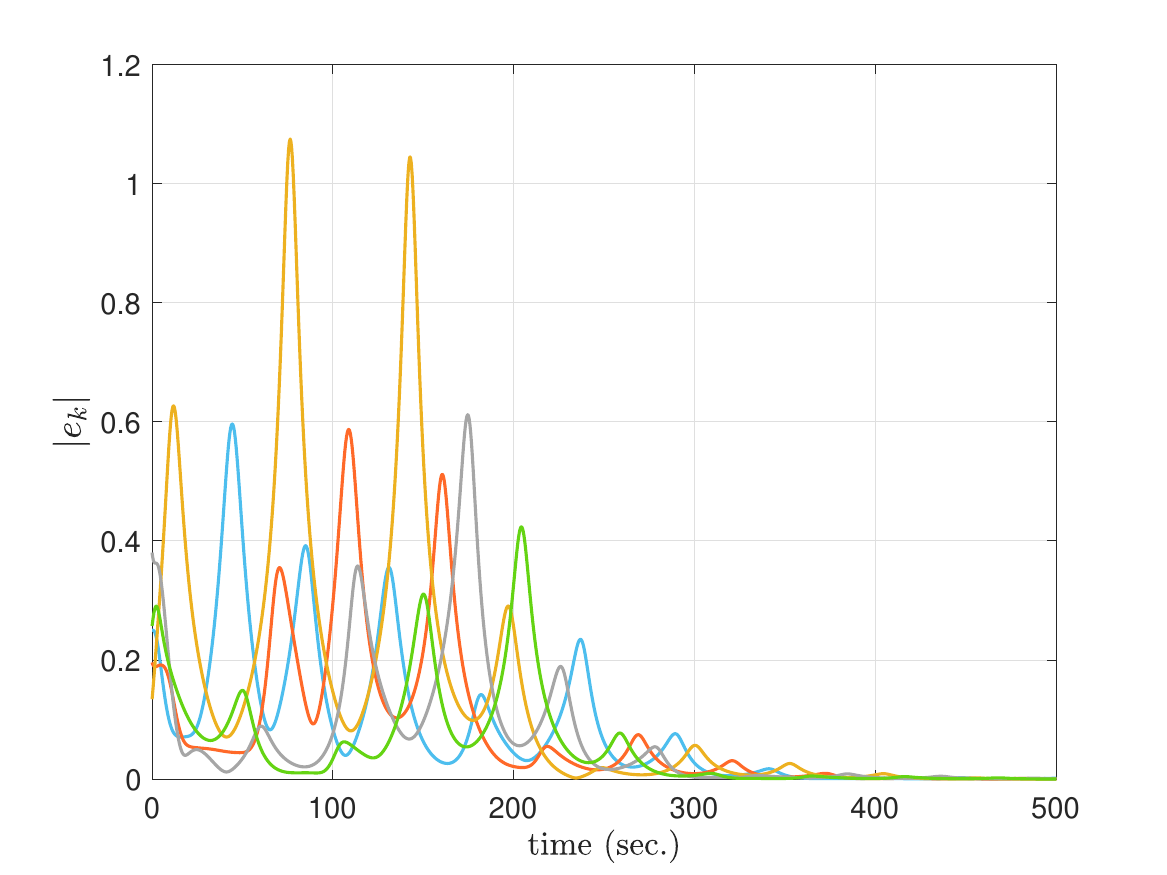}\label{error_1}}\hspace*{-0.25cm}
		\subfigure[$u_k$ with $\alpha=\alpha_s$]{\includegraphics[width=4.825cm]{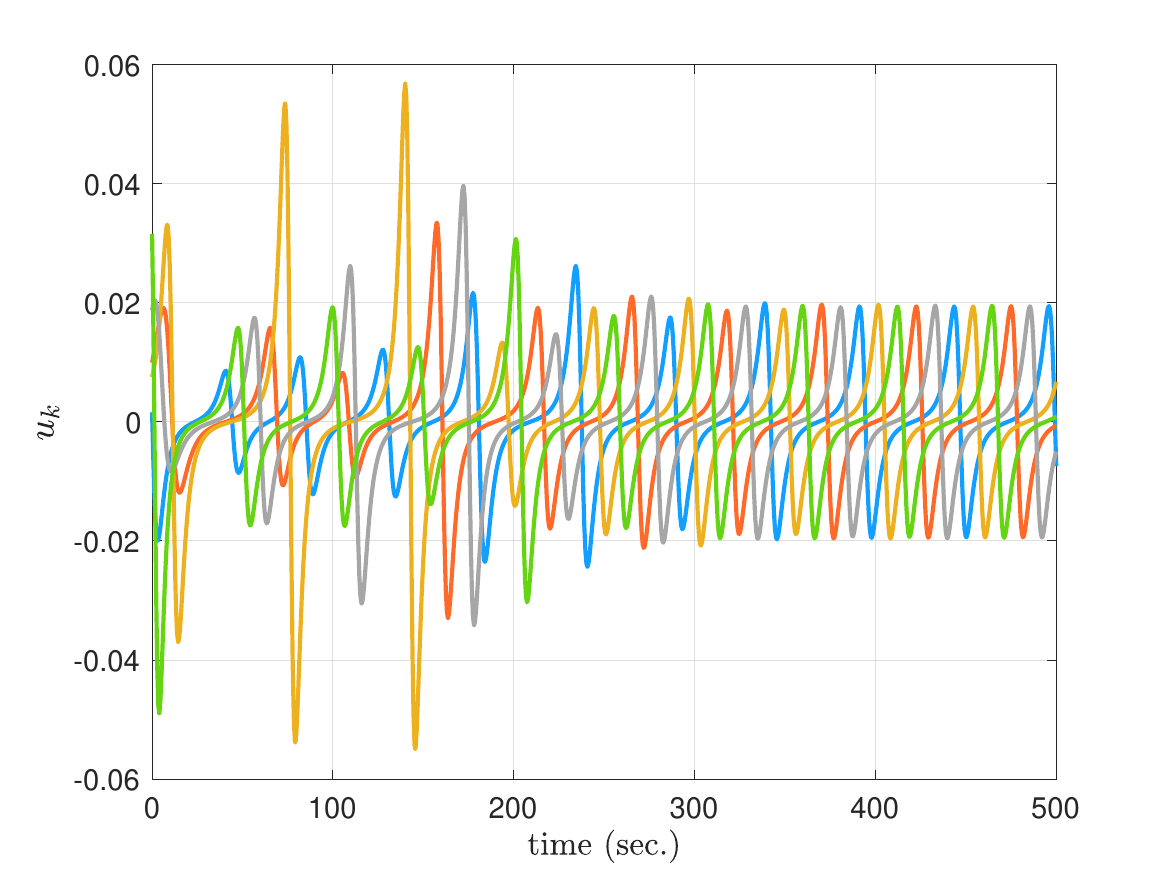}\label{uk_alpha_s}}\hspace*{-0.25cm}
		\subfigure[$\omega_k$ with $\alpha=\alpha_s$]{\includegraphics[width=4.825cm]{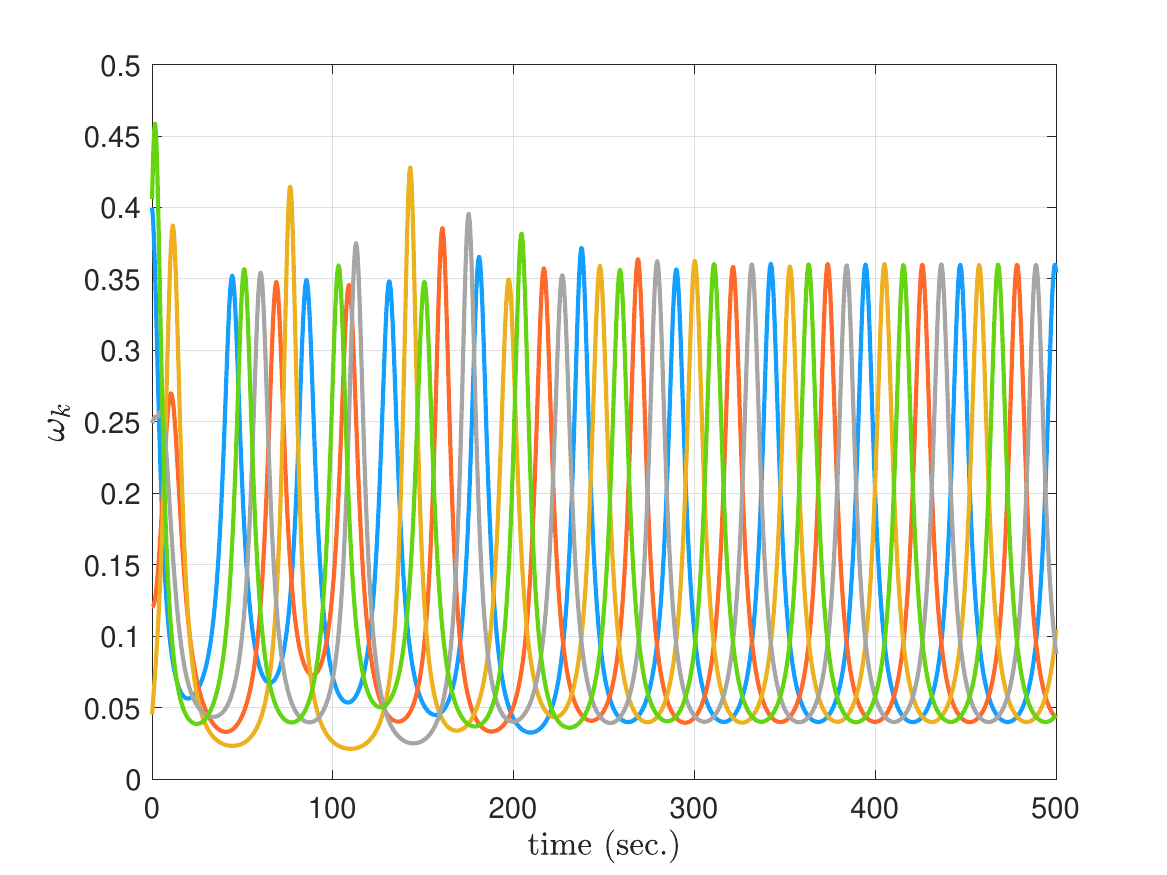}\label{omega_alpha_s}}}
	\caption{Agents with M\"{o}bius phase shift-coupled balancing in the original plane under controllers \eqref{u_k_original_plane} and \eqref{omega_k_original_plane}.}
	\label{plot1}
	\vspace*{-5pt}	
\end{figure*}

\begin{figure*}[t!]
	\centering{
		\subfigure[Trajectory]{\includegraphics[width=3.4cm]{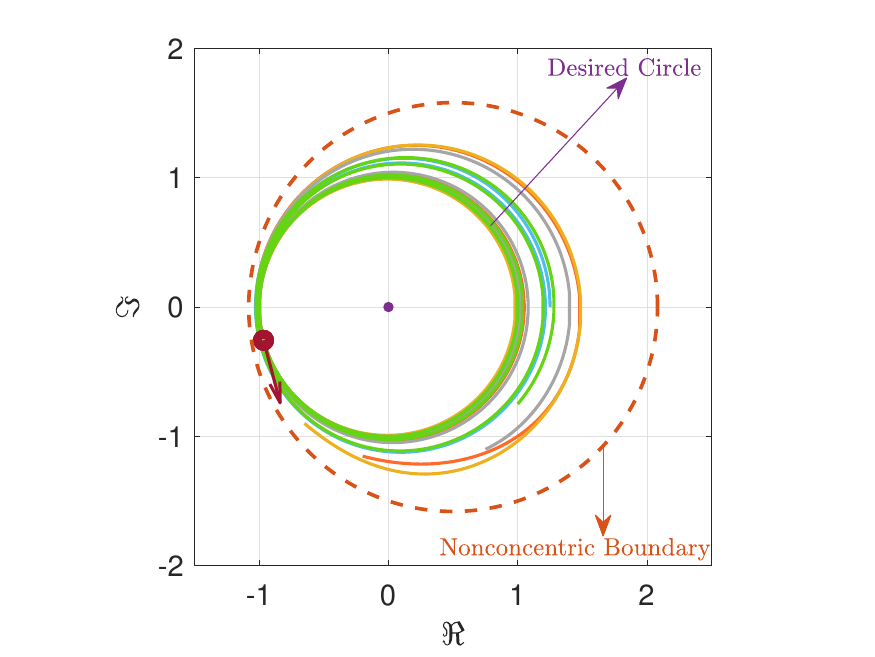}\label{trj_2}}\hspace*{0.2cm}
		\subfigure[Absolute error $|e_k|$]{\includegraphics[width=4.85cm]{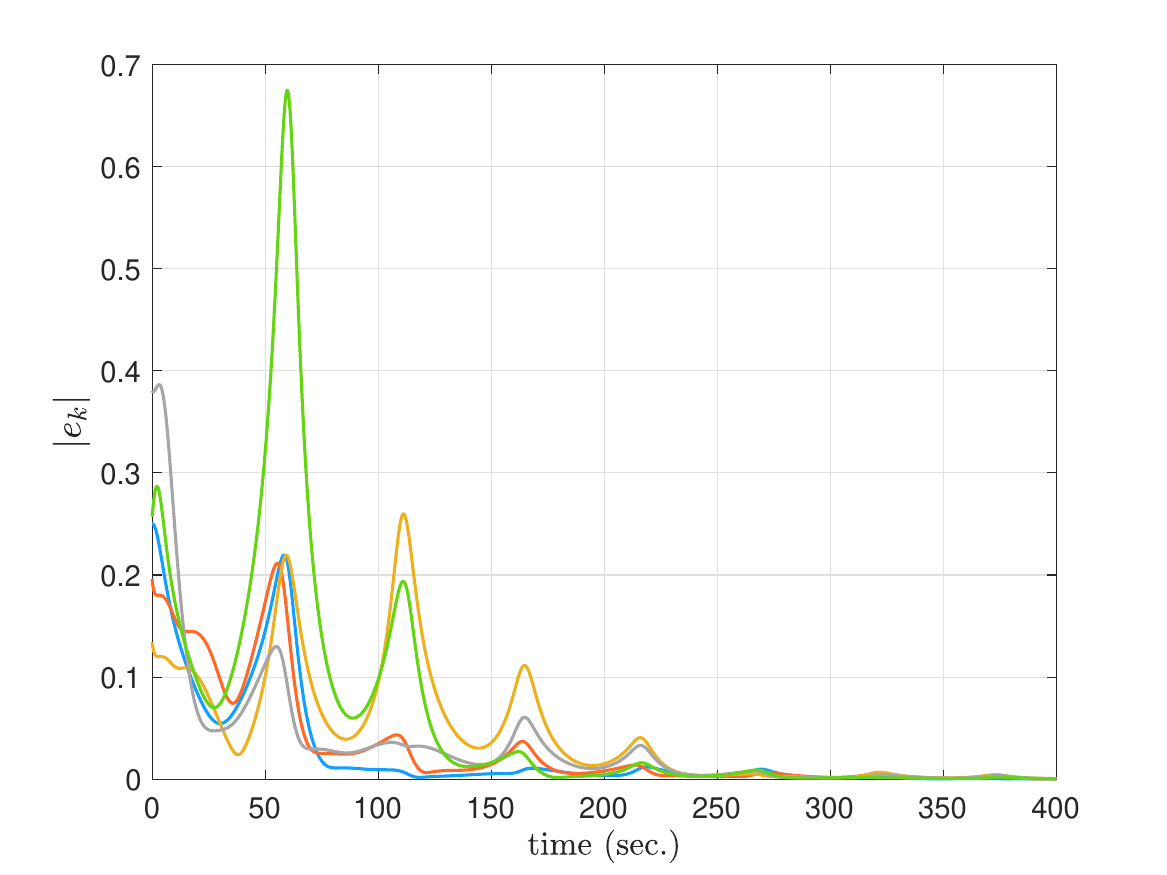}\label{error_2}}\hspace*{-0.3cm}
		\subfigure[$u_k$ with $\alpha=\alpha_s$]{\includegraphics[width=4.85cm]{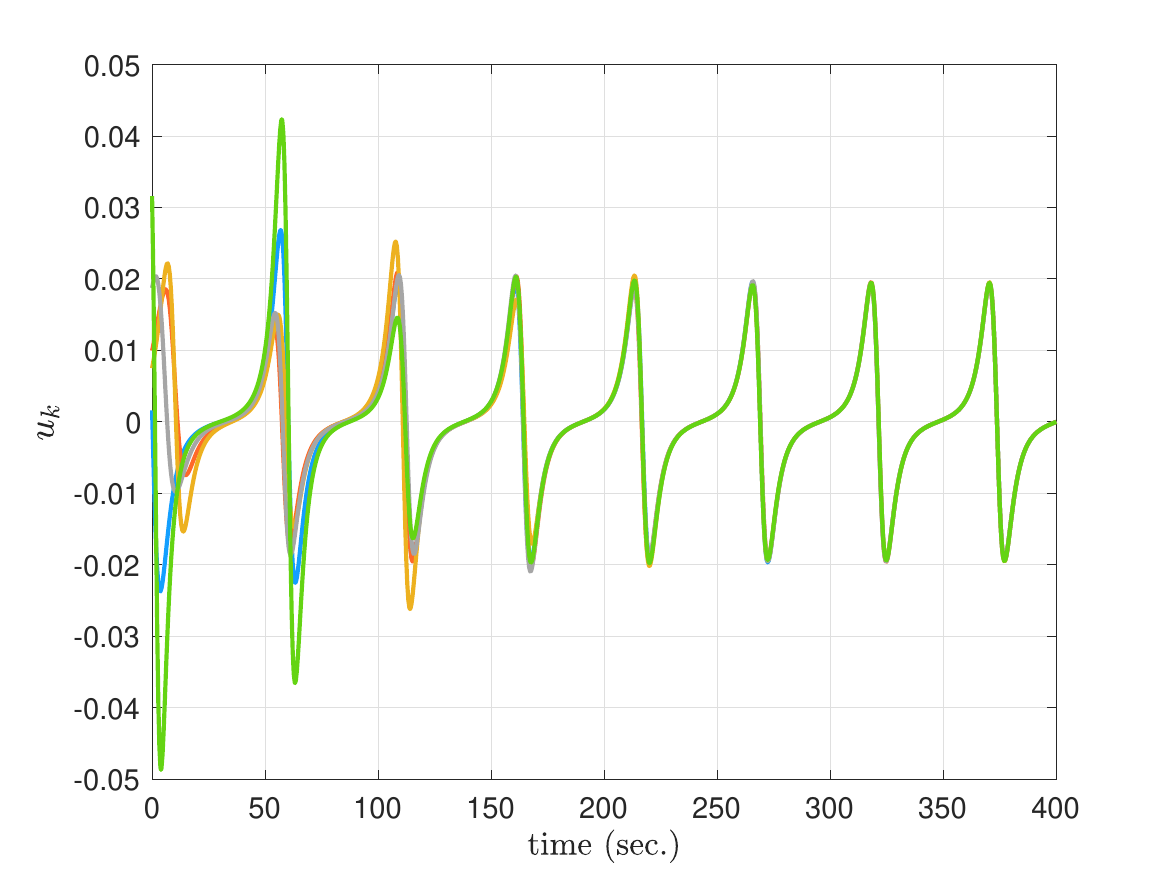}\label{uk_alpha_s2}}\hspace*{-0.3cm}
		\subfigure[$\omega_k$ with $\alpha=\alpha_s$]{\includegraphics[width=4.85cm]{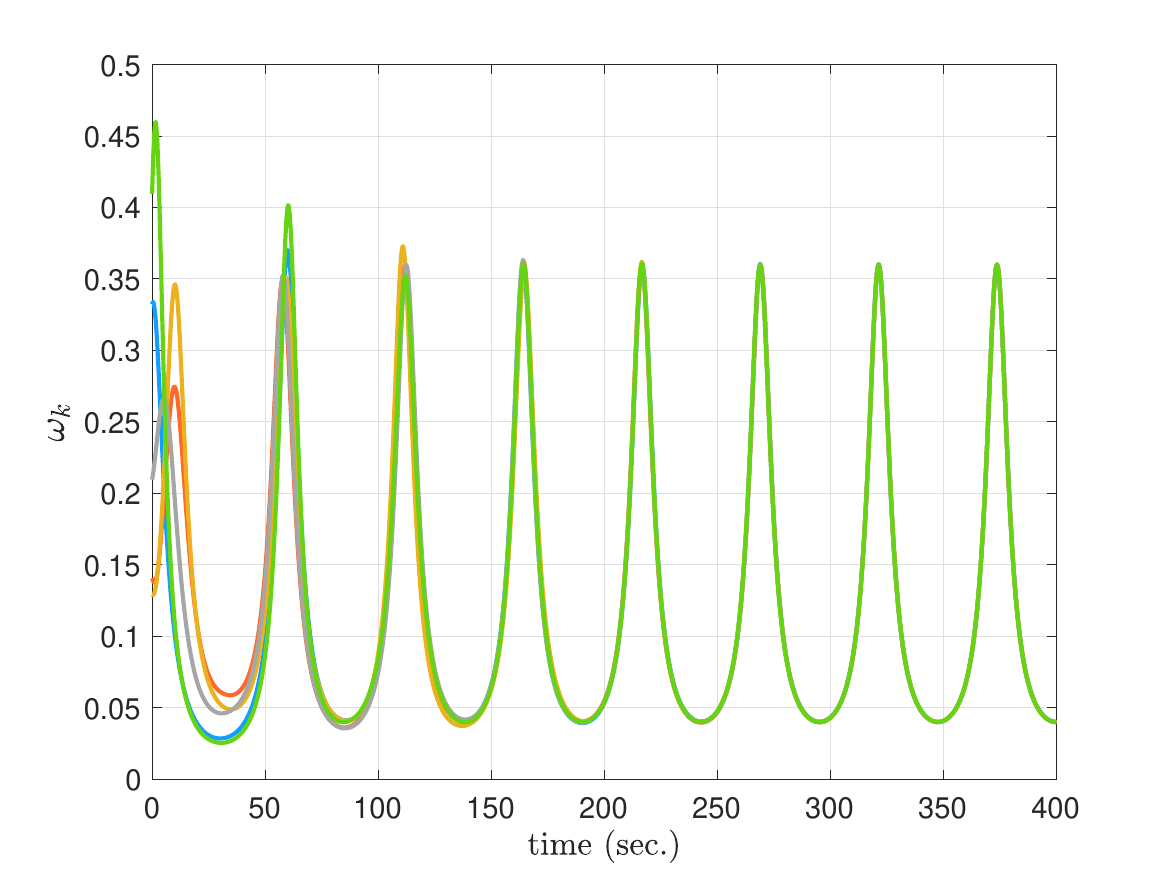}\label{omega_alpha_s2}}}
	\caption{Agents with M\"{o}bius phase shift-coupled synchronization in the original plane under controllers \eqref{u_k_original_plane} and \eqref{omega_k_original_plane}.}
	\label{plot2}
	\vspace*{-5pt}	
\end{figure*}

\section{Simulations and Experimental Validation}\label{section_7_simulations_and_experiments}

\subsection{Numerical Simulations}
Consider the setting of non-concentric circles as described in Example~\ref{example_1} where $\lambda = 0.5$ and $\mu = {5}/{2}$. Let $N=5$ agents interact according to Fig.~\ref{fig_network topology}, with their initial positions, speeds, and heading angles in both planes specified in Table~\ref{table_initial_condition}. All parameters are given in SI units. According to Corollary~\ref{cor_positive_speeds}, the design parameters are selected as $s_d = \delta_S = 0.06$. Further, $\delta_T$ for the smaller root $\alpha_s = 0.5$ in \eqref{delta_transformed} is given by $\delta_T = 0.1325$ (the case for larger root $\alpha_{\ell}$ follows analogously and is omitted for brevity). As summarized in Table~\ref{table_parameter_comparision}, the feasibility requirements for the initial conditions are satisfied in both planes: $|\mathcal{E}_k(0)| \leq \delta_T$ and $|\tilde{s}_k(0)| < \delta_S$ in the transformed plane, and $|n_k(r_k(0), \theta_k(0))| < \delta_T$ and $|h_k(r_k(0))| < \delta_S$ in the original plane, for all $k=1,\ldots,5$. 

We implemented the controllers \eqref{u_k_original_plane} and \eqref{omega_k_original_plane} by setting the controller gains as $\kappa_1 = 0.004$, $\kappa_2 = 10$ and $\mathcal{K} = 0.04$ for M\"{o}bius phase shift-coupled balancing (resp., $\mathcal{K} = -0.04$ for synchronization). As shown in Fig.~\ref{plot1} and Fig.~\ref{plot2}, all the agents stabilize on the desired circular orbit $\mathcal{C}$ in the original plane while remaining confined within the nonconcentric outer boundary $\mathcal{C}'$ in the aforementioned phase patterns. Further, the position errors approach zero, and controllers $u_k$ and $\omega_k$ are also balanced and synchronized in corresponding phase patterns for all $k$. On the other hand, Fig.~\ref{plot3} and Fig.~\ref{plot4} illustrate the same in the transformed plane where both the circles are concentric. While position errors approach zero, the velocity controls $\nu_k$ in \eqref{velocity_control_transformed_plane} tend to zero and the tun-rate controls $\Omega_k$ in \eqref{curvature_control_transformed_plane} approach $\Omega_k = s_d/|\alpha| = 0.06/0.5 = 0.12$, as desired. Note that Fig.~\ref{plot2} and Fig.~\ref{plot4} validates Lemma~\ref{lem_synchronization_equivalence_both_planes}. Further, the speeds in both planes remain positive and bounded as shown in Fig.~\ref{plot5}.

\begin{figure*}[t!]
	\centering{
		\subfigure[Trajectory]{\includegraphics[width=3.25cm]{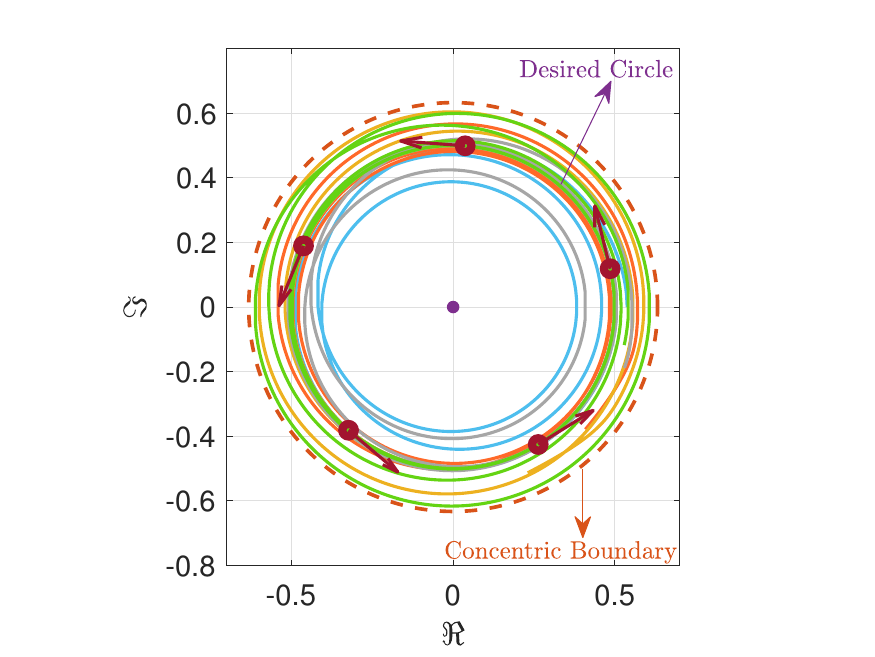}\label{tp_trjb}}\hspace*{0.2cm}
		\subfigure[Absolute error $|\mathcal{E}_k|$]{\includegraphics[width=4.875cm]{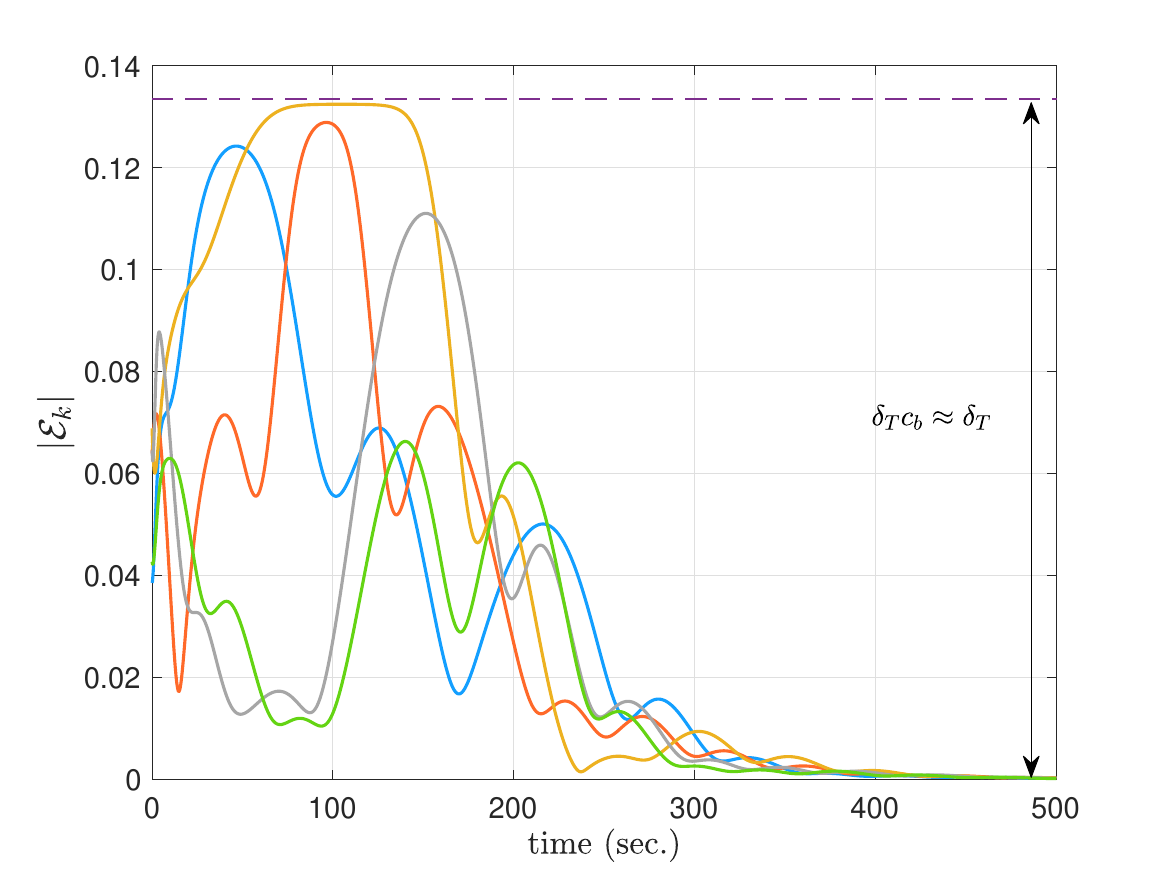}\label{tp_errorb}}\hspace*{-0.4cm}
		\subfigure[$\nu_k$ with $\alpha=\alpha_s$]{\includegraphics[width=4.875cm]{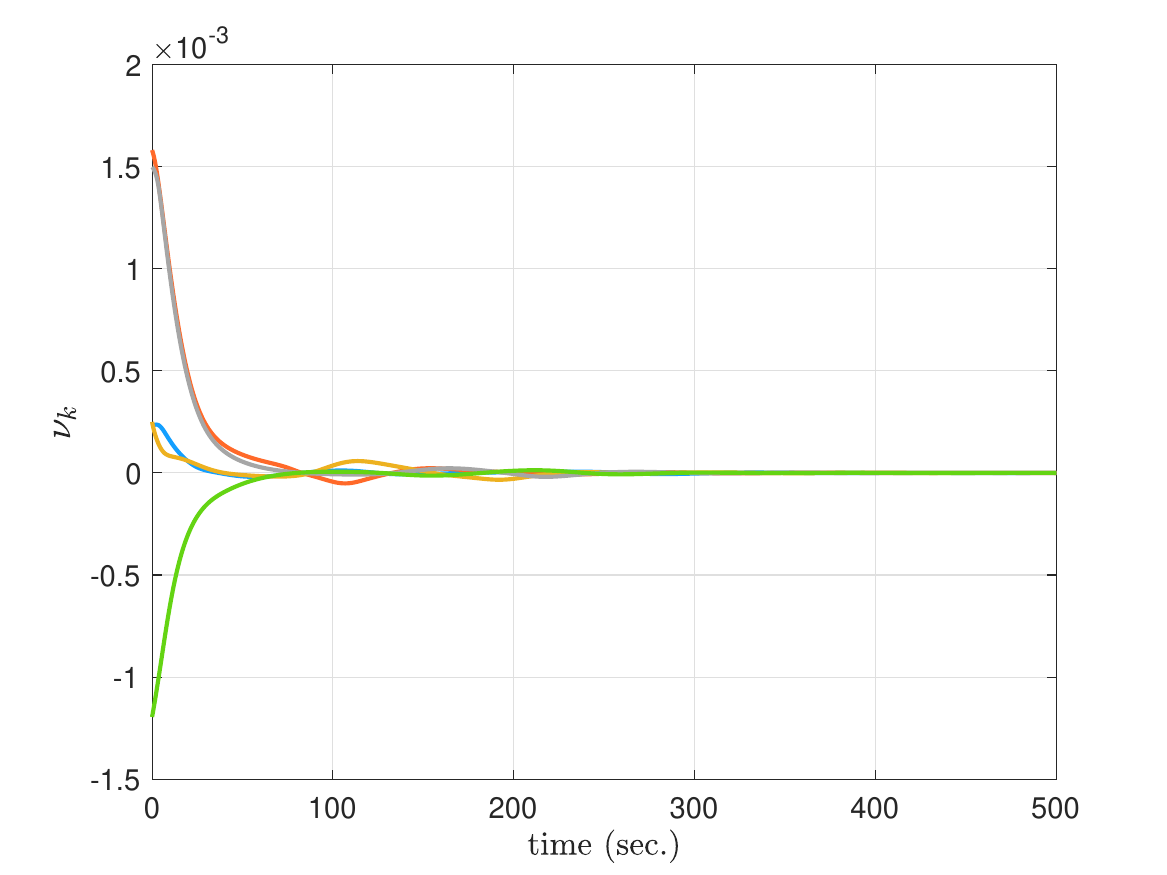}\label{tp_acc_controlb}}\hspace*{-0.4cm}
		\subfigure[$\Omega_k$ with $\alpha=\alpha_s$]{\includegraphics[width=4.875cm]{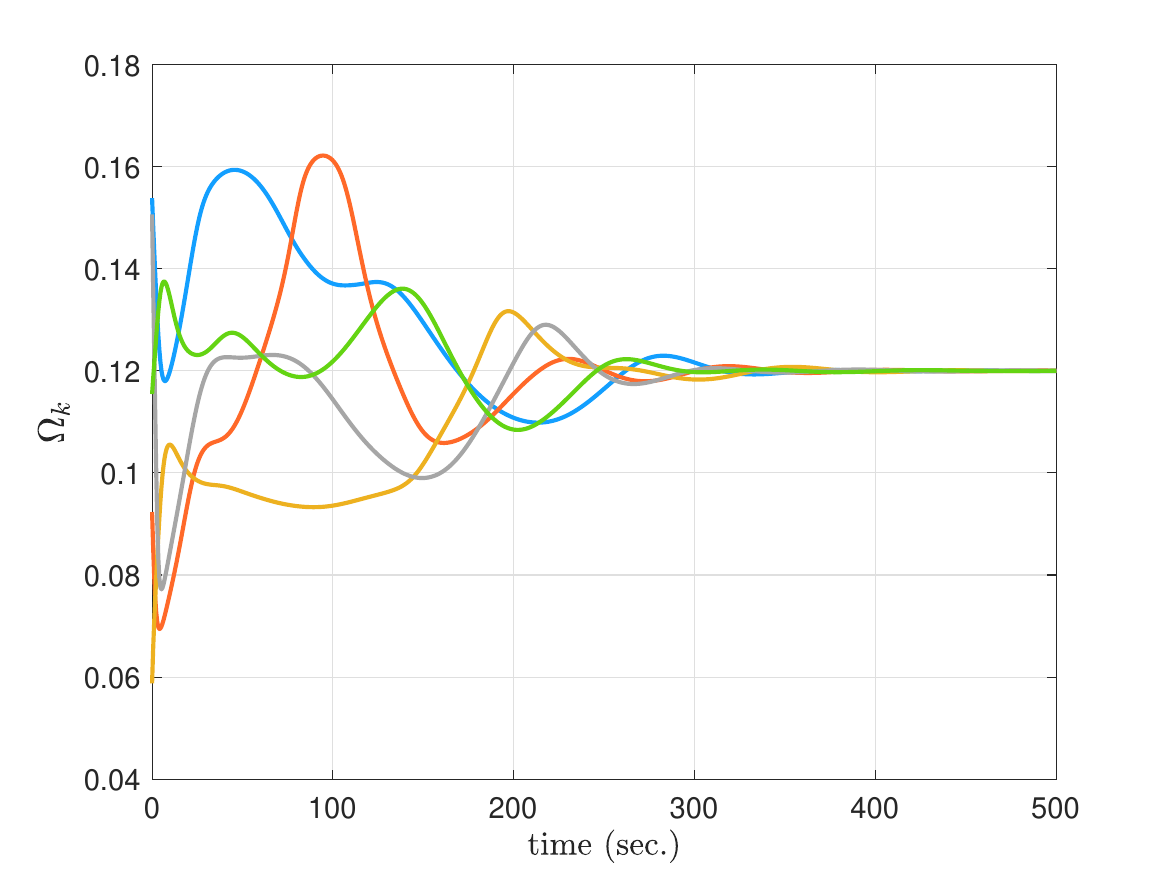}\label{tp_curv_controlb}}}
	\caption{Agents with phase balancing in the transformed plane under controllers \eqref{velocity_control_transformed_plane} and \eqref{curvature_control_transformed_plane}.}
	\label{plot3}
	\vspace*{-5pt}	
\end{figure*}

\begin{figure*}[t!]
	\centering{
		\subfigure[Trajectory]{\includegraphics[width=3.25cm]{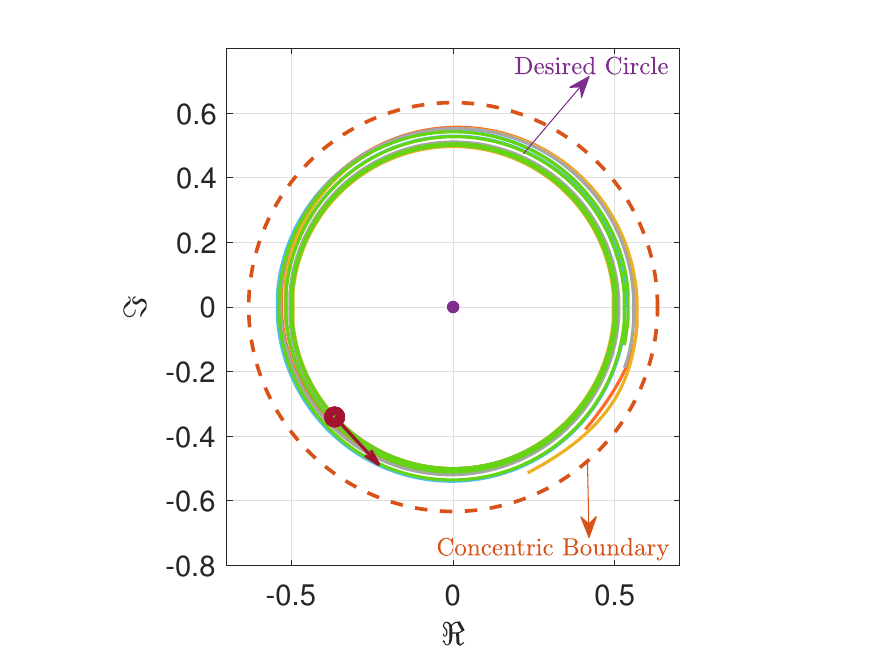}\label{tp_trjs}}\hspace*{0.2cm}
		\subfigure[Absolute error $|\mathcal{E}_k|$]{\includegraphics[width=4.9cm]{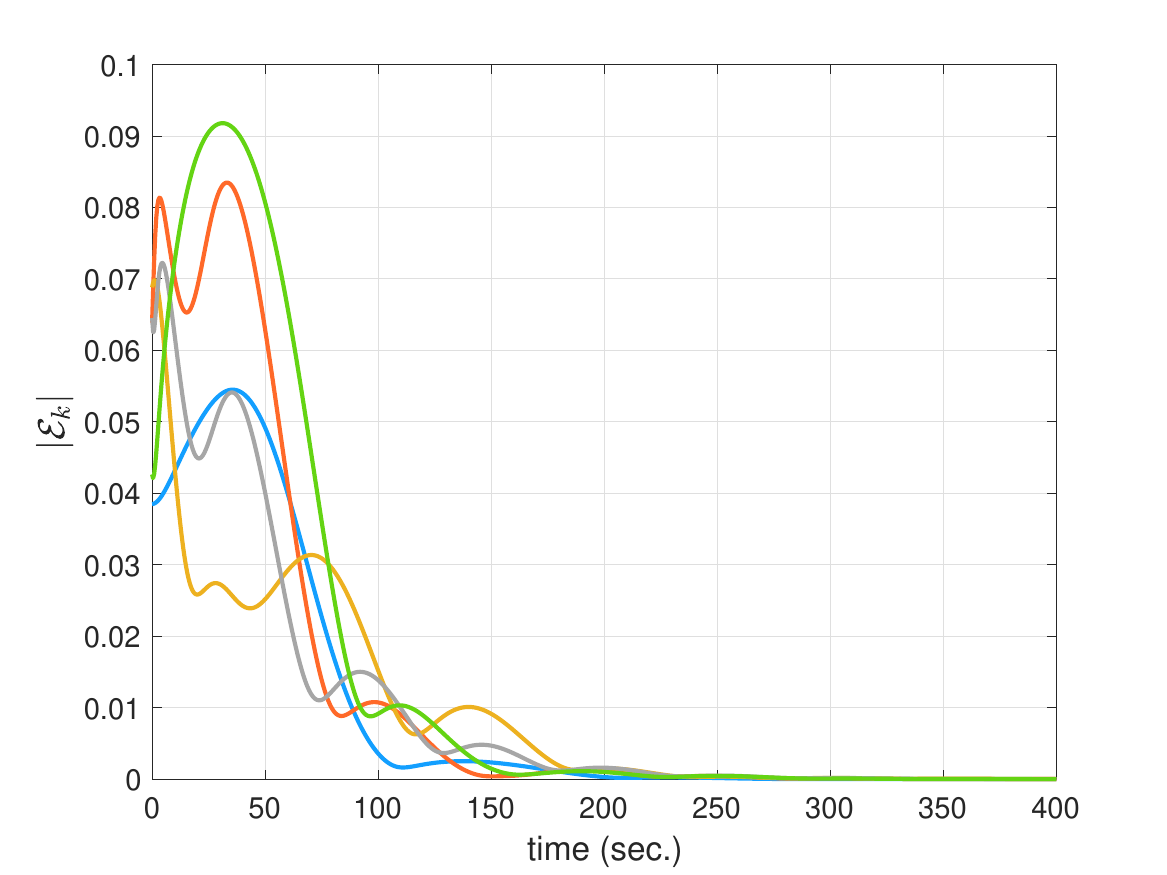}\label{tp_errors}}\hspace*{-0.4cm}
		\subfigure[$\nu_k$ with $\alpha=\alpha_s$]{\includegraphics[width=4.9cm]{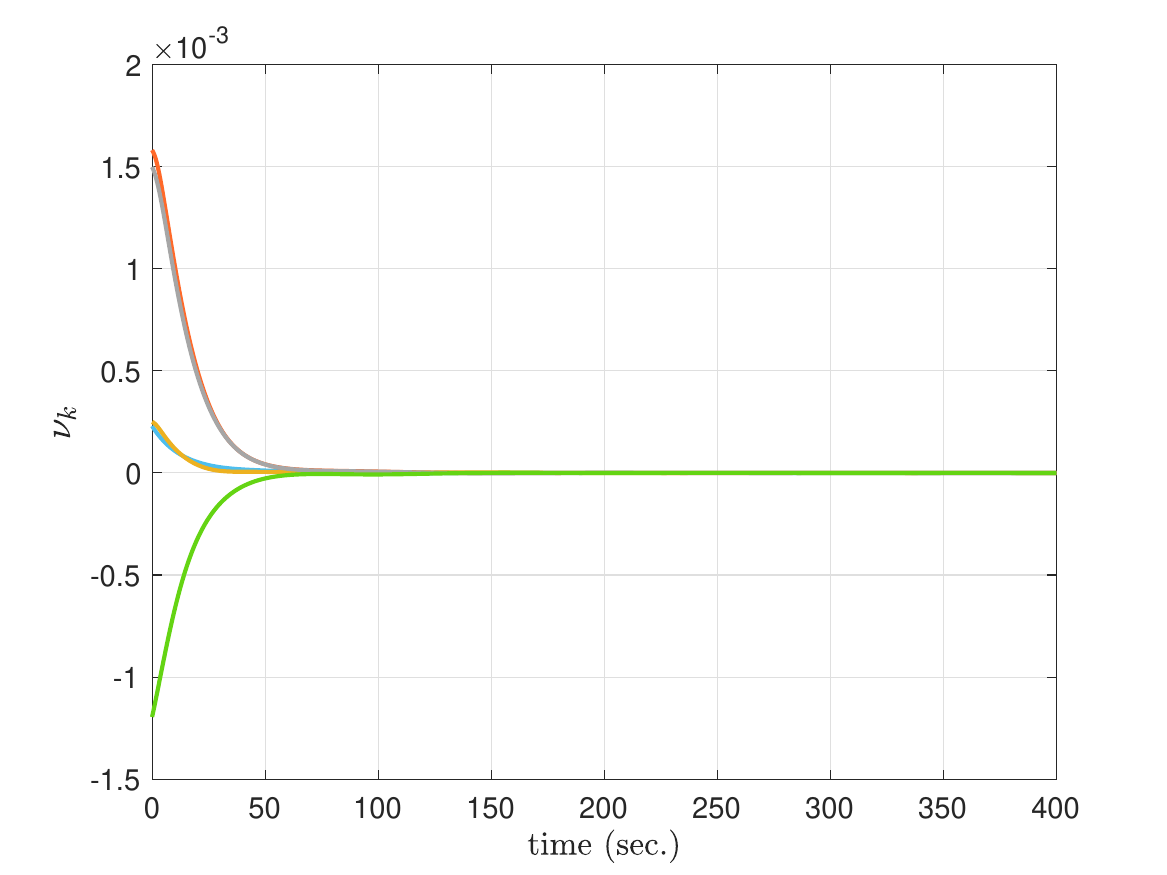}\label{tp_acc_controls}}\hspace*{-0.4cm}
		\subfigure[$\Omega_k$ with $\alpha=\alpha_s$]{\includegraphics[width=4.9cm]{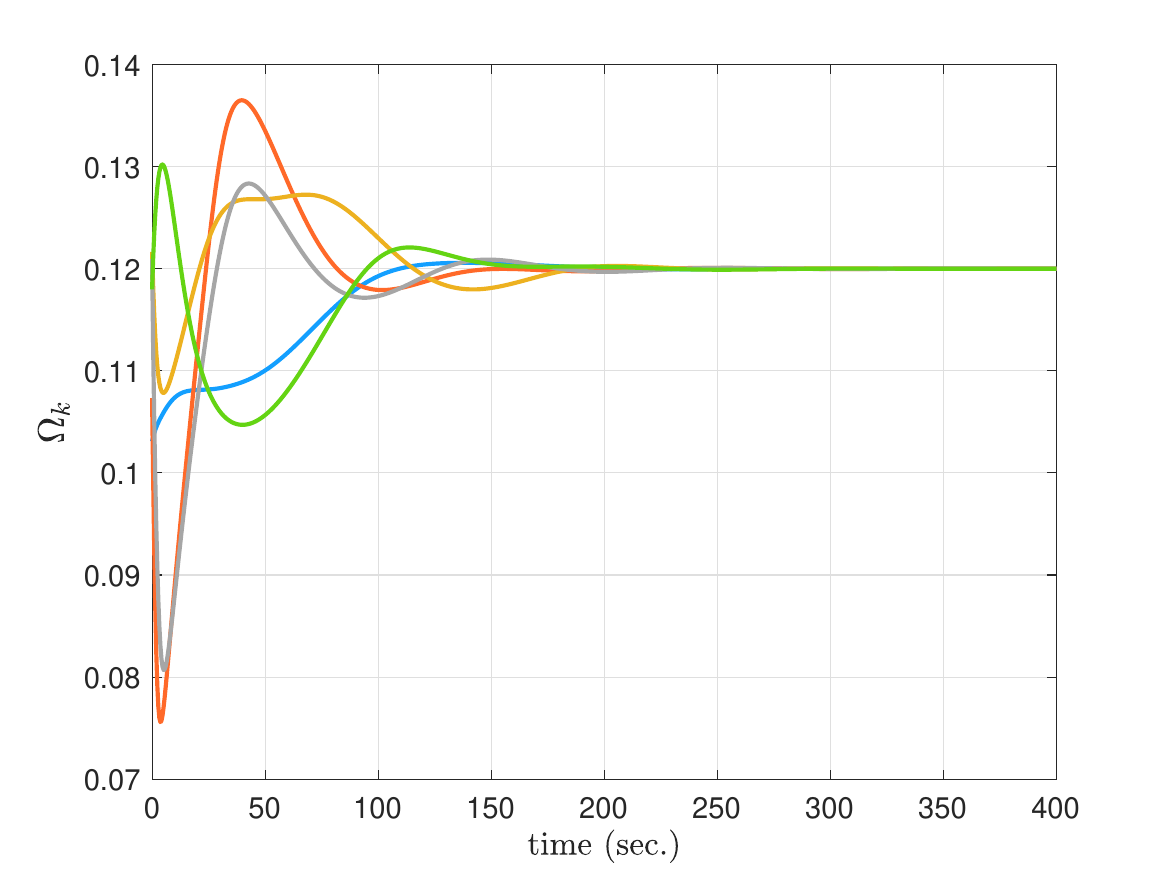}\label{tp_curv_controls}}}
	\caption{Agents with phase synchronization in the transformed plane under controllers \eqref{velocity_control_transformed_plane} and \eqref{curvature_control_transformed_plane}.}
	\label{plot4}
	\vspace*{-5pt}	
\end{figure*}

\begin{figure*}[t!]
	\centering{
		\subfigure[$v_k$-phase-shift synchronization]{\includegraphics[width=4.8cm]{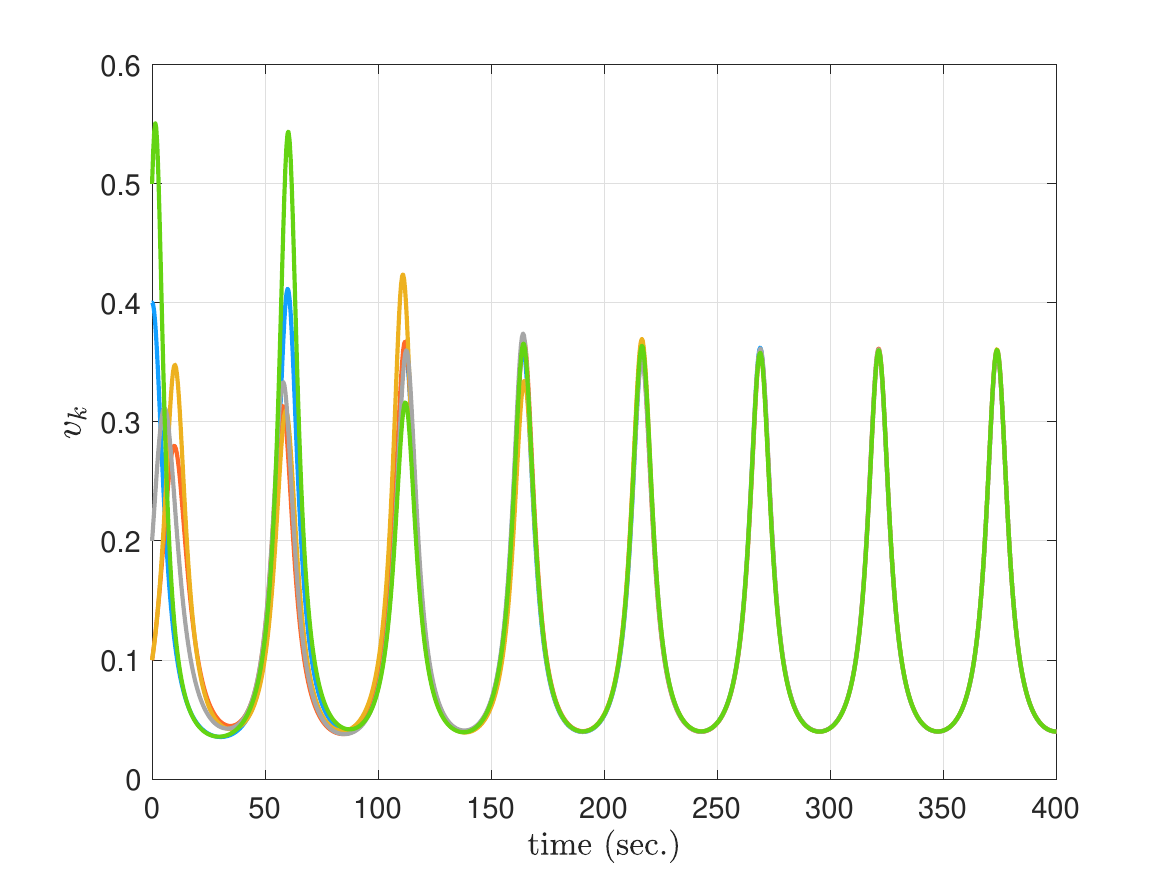}\label{tp_sync_gama}}\hspace*{-0.5cm}
		\subfigure[$v_k$-phase-shift balancing]{\includegraphics[width=4.8cm]{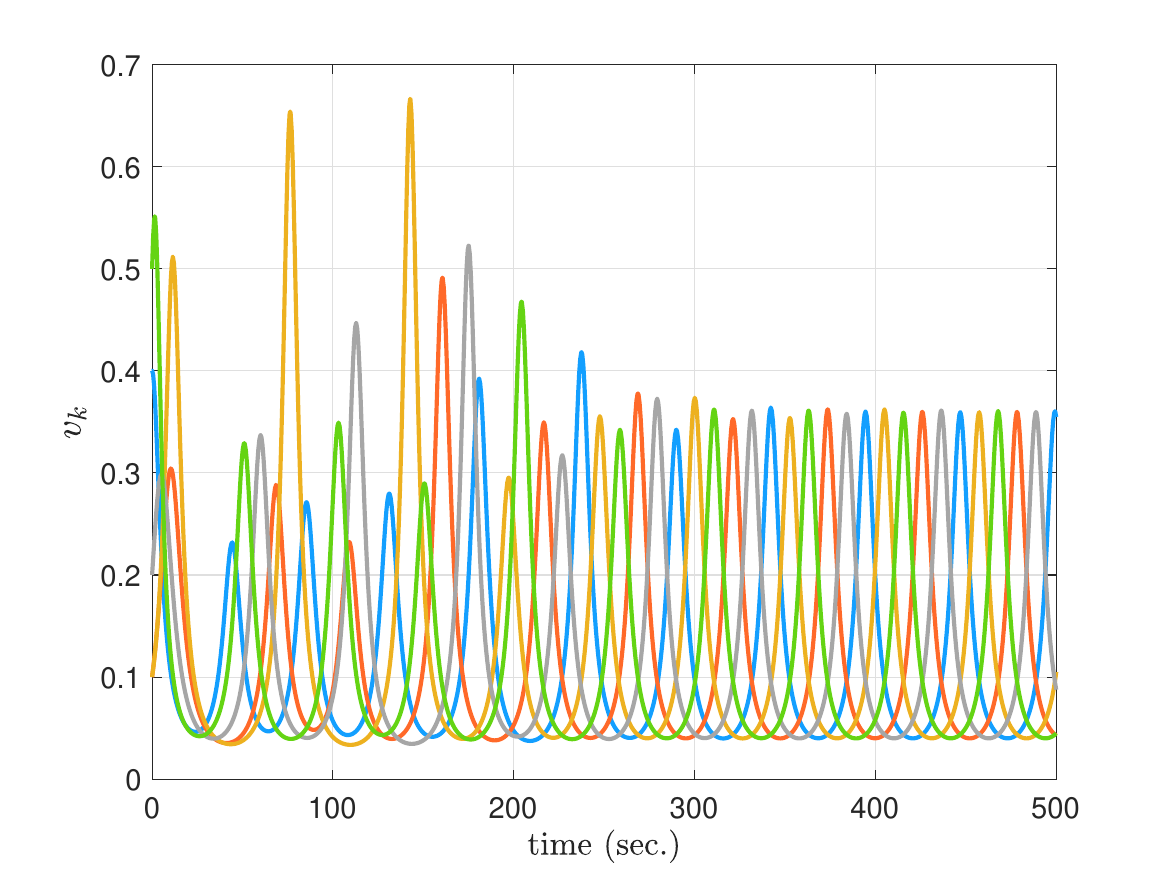}\label{tp_balan_gama}}\hspace*{-0.5cm}
		\subfigure[$s_k$-phase synchronization]{\includegraphics[width=4.8cm]{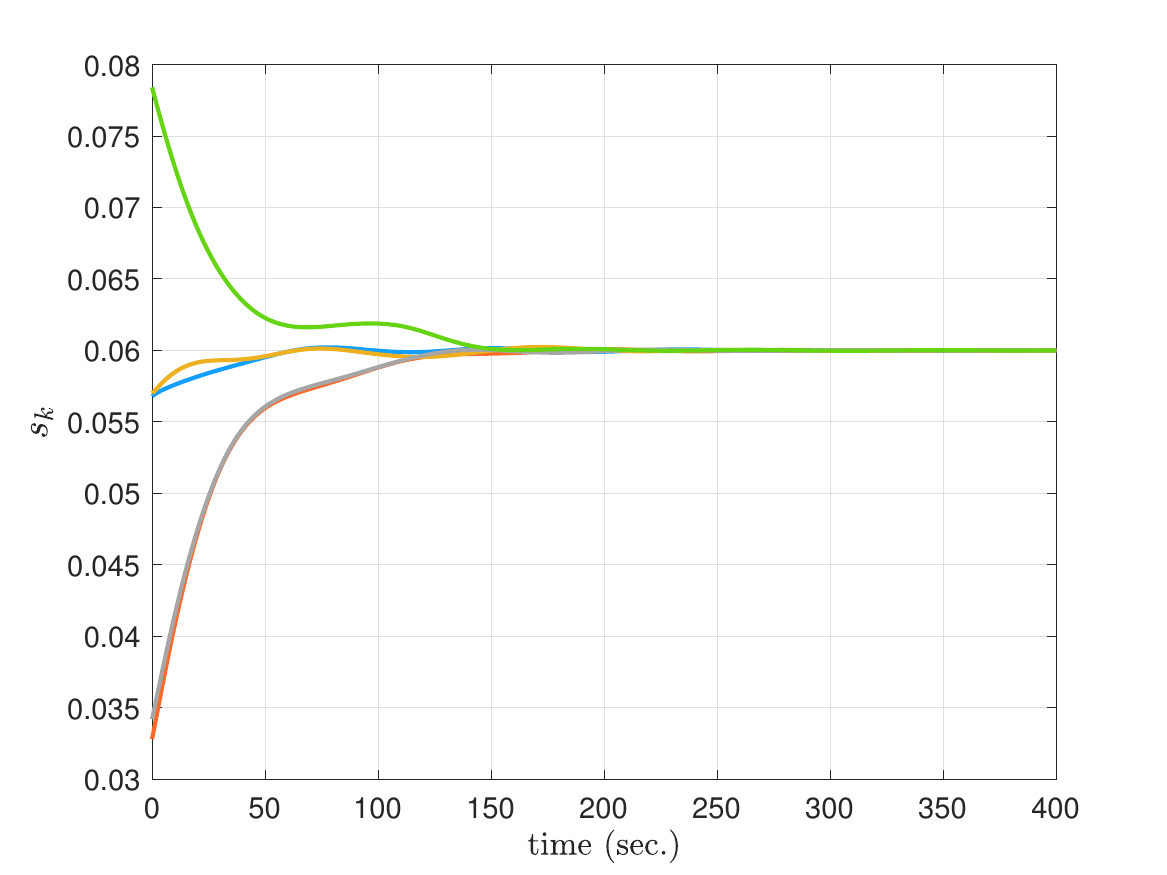}\label{ap_sync_theta}}\hspace*{-0.5cm}
		\subfigure[$s_k$-phase balancing]{\includegraphics[width=4.8cm]{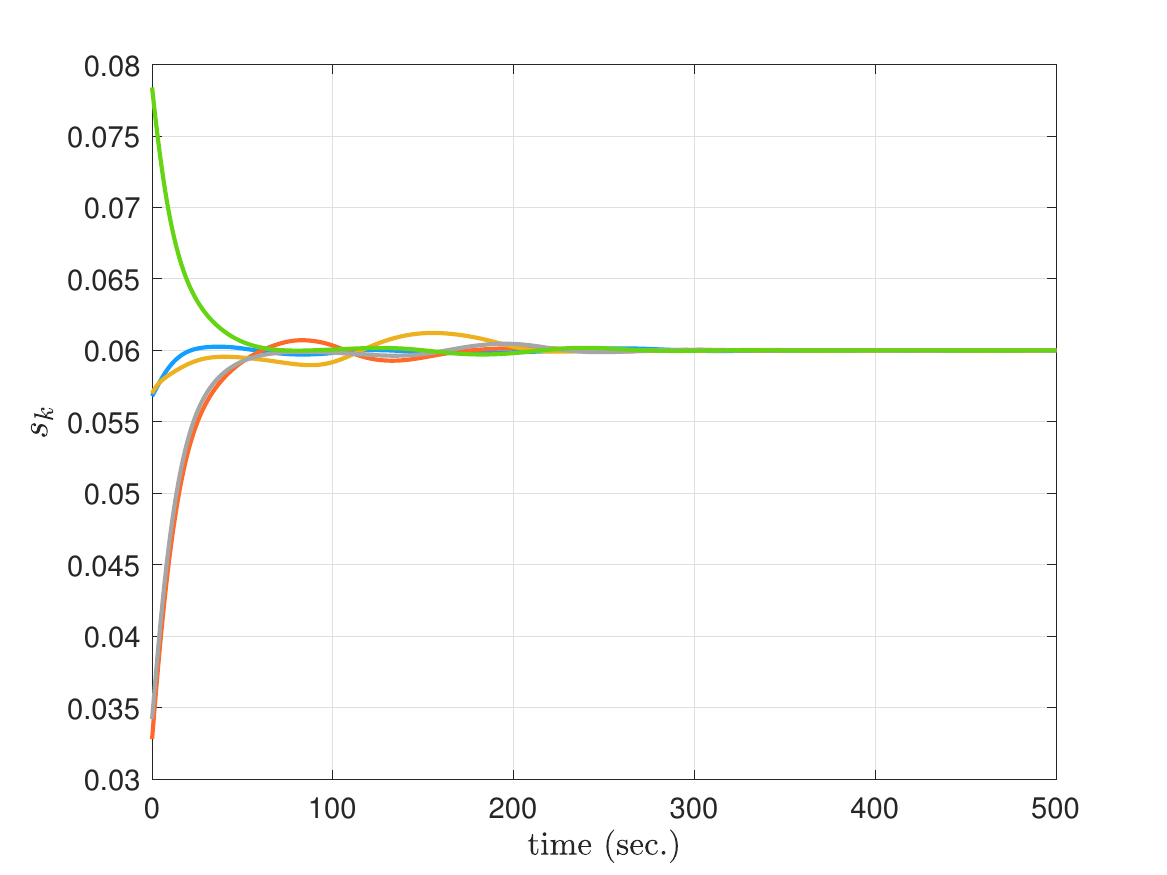}\label{ap_dfv_theta}}}
	\caption{Agents' speed in original and transformed planes.}
	\label{plot5}
	\vspace*{-5pt}	
\end{figure*}

\subsection{Experimental Validation}
We conducted experiments with four Khepera IV ground robots in a motion capture (MoCap) environment equipped with overhead cameras and infrared sensors. The MoCap system provides real-time position and heading information of each robot, which is synchronized and processed on a track manager. For closed-loop control, the Robot Operating System (ROS) was used to transmit control commands from the track manager to the robots via a WiFi channel. For the experiments, we select the first four initial conditions in Table~\ref{table_initial_condition}, where the speeds are in accordance with the maximum hardware limit of $0.814$ m/s, while maintaining the same control gains as above in phase balancing. 

The experimental results in Fig.~\ref{screenshot_trajectory} show a top-down snapshot of the system's configuration at $t = 100$ seconds. The recorded trajectories obtained from the track-manager are plotted in Fig.~\ref{qtm_plot}, capturing the real-time motion of each robot. The corresponding acceleration and turn rate control inputs are shown in Fig.~\ref{hardware_dfv_ac} and Fig.~\ref{hardware_dfv_tc}, validating the reliability of the proposed control laws in hardware. A video recording of the experiment is also provided at \url{https://youtu.be/FkHR-GATFrA}, where the robots can be seen naturally slowing down in tighter regions and moving faster in open areas.

\begin{figure*}[t!]
	\centering{
		\subfigure[Agents' positions at $t=100$ sec.]{\includegraphics[width=3.5cm]{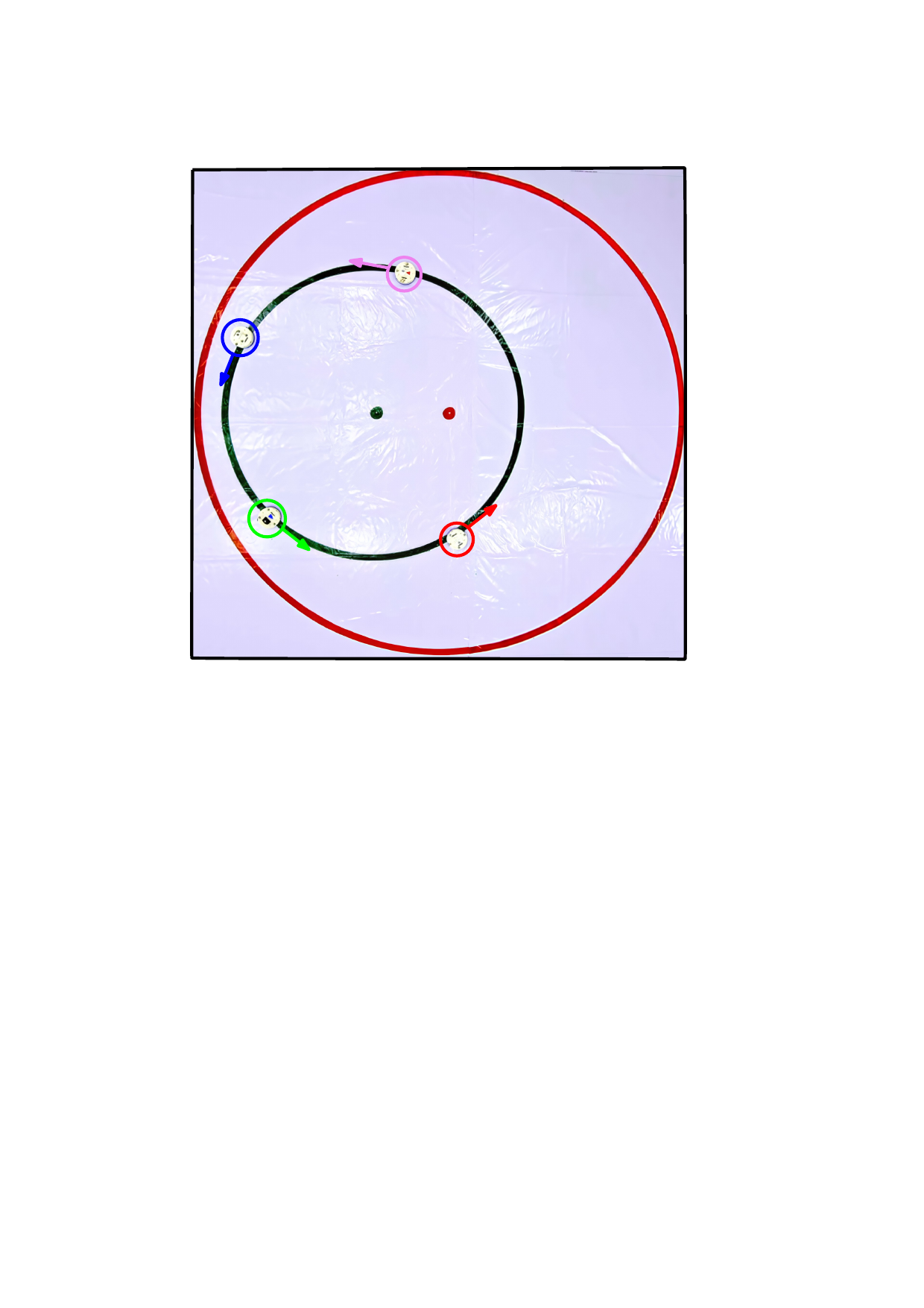}\label{screenshot_trajectory}}\hspace*{0.2cm}
		\subfigure[Trajectory in track manager]{\includegraphics[width=3.7cm]{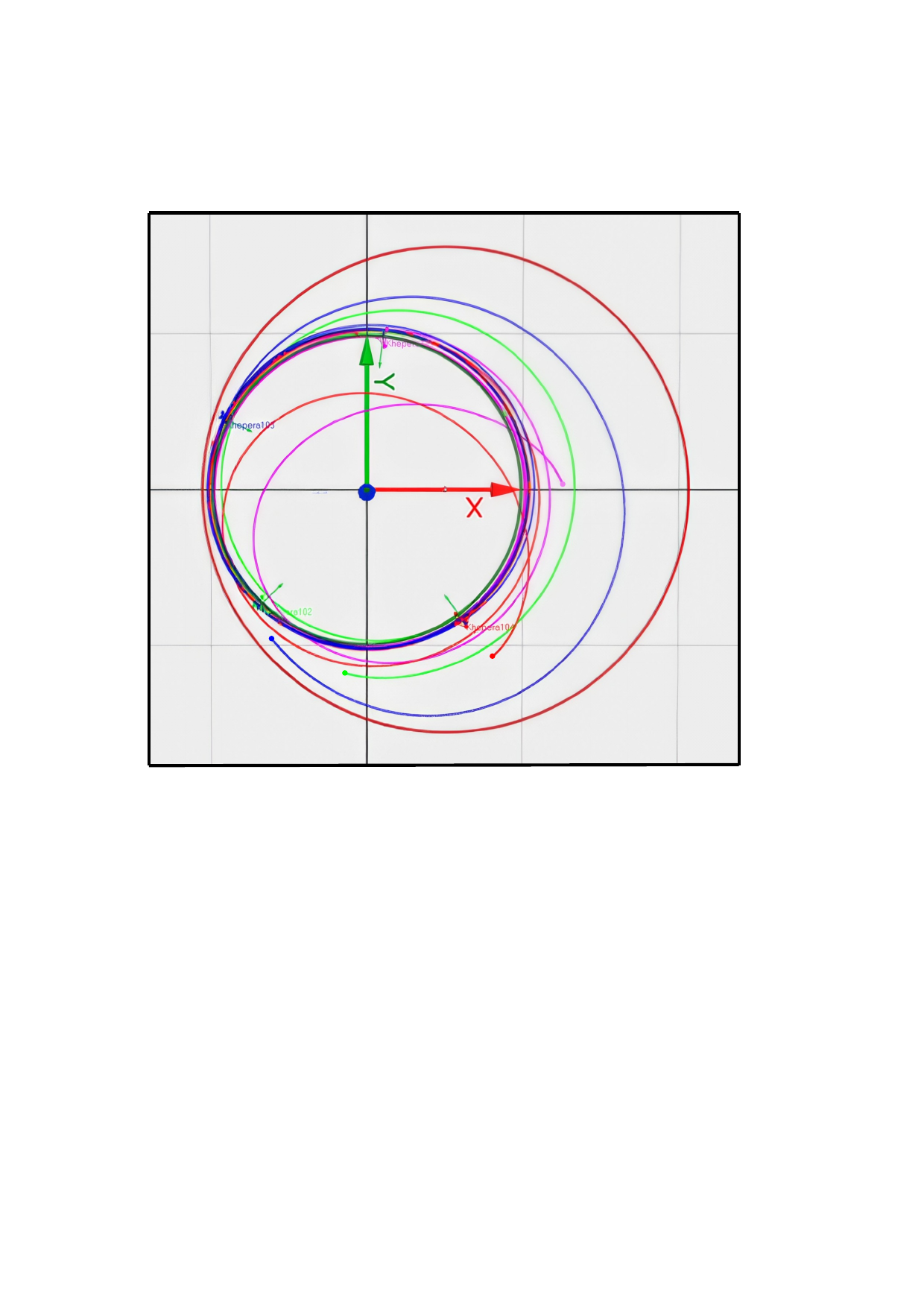}\label{qtm_plot}}
		\subfigure[$u_k$ with $\alpha=\alpha_s$]{\includegraphics[width=5.2cm]{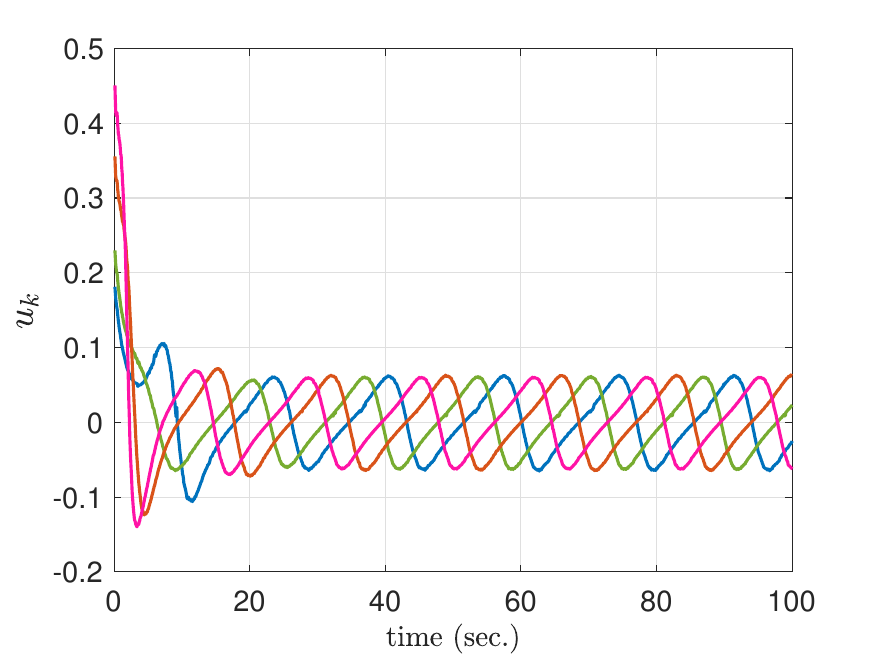}\label{hardware_dfv_ac}}\hspace*{-0.4cm}
		\subfigure[$\omega_k$ with $\alpha=\alpha_s$]{\includegraphics[width=5.2cm]{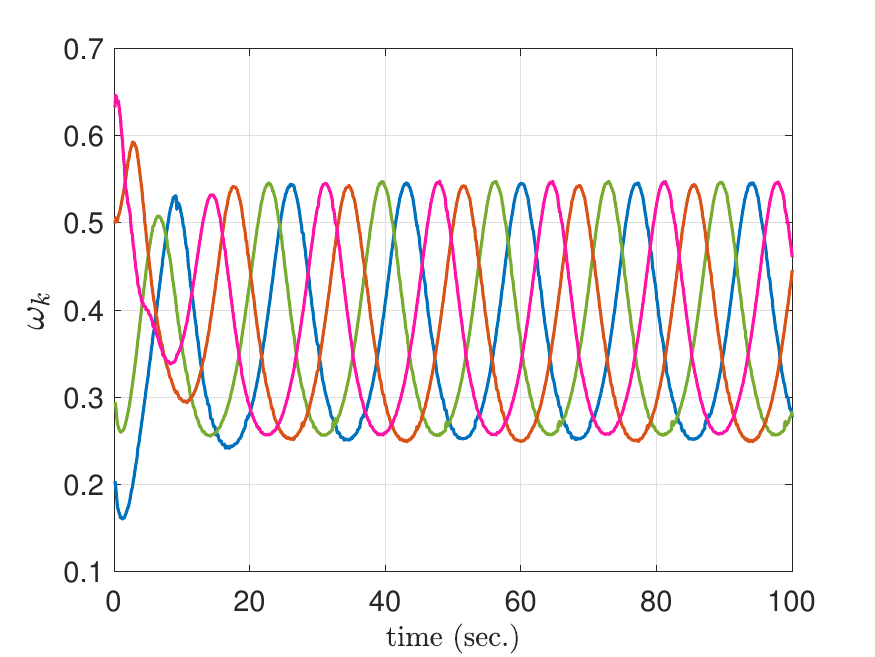}\label{hardware_dfv_tc}}}
	\caption{Experimental results with Khepera IV robots in M\"{o}bius phase shift-coupled balancing in the original plane.}
	\label{plot6}
	\vspace*{-5pt}	
\end{figure*}

\section{Conclusion and Future Remarks}\label{section_8_conclusion_and_future_remarks}
We investigated collective circular motion control of unicycle agents under M\"{o}bius phase-shift coupled synchronization and balancing, ensuring that their trajectories remain bounded within a nonconcentric circular boundary. Our solution methodology relied on formulating an equivalent problem in the transformed plane via M\"{o}bius transformation and subsequently deriving the controllers applied to the actual agents in the original plane. The equivalence of agents' models across the two planes revealed the conformal property of the M\"{o}bius transformation and highlighted a resemblance to transfer functions in frequency-domain analysis from classical control. Moreover, it was established that the synchronization property remains invariant under the M\"{o}bius transformation. Depending on the root of \eqref{mobius_roots}, the equivalent Problem~\ref{problem_transformed_plane} can be interpreted either as a trajectory-constraining problem or a collision-avoidance problem, both ultimately leading to a solution of the original Problem~\ref{problem_actual_plane}. The control analysis fundamentally relied on the design parameters $s_d$ and $\delta_S$, which guarantee forward motion in both planes, and on the sign of the gain term $\mathcal{K}$ in \eqref{u_k_original_plane} and \eqref{omega_k_original_plane}, which determined the emergent phase pattern ($\mathcal{K} < 0$ for synchronization, $\mathcal{K} > 0$ for balancing). The orthogonality property \eqref{ortho_prop} was instrumental in proving the negative semi-definiteness of the composite Lyapunov function, thereby establishing asymptotic stability. Simulations and experiments validated the effectiveness of the proposed approach. 

Please note that the present work does not address the practical challenge of inter-agent collision avoidance, which remains an interesting direction for future research. Another promising extension is to investigate the realization of more general phase patterns, where the phase shift is not restricted solely to those induced by the M\"{o}bius transformation.


\appendix

\section*{Supporting Definitions and Results}

\begin{lem}[\hspace{-.1pt}\cite{singh2024stabilizing}]\label{lem_chi_zeta}
	The angles $\chi_k$ in \eqref{chi_argument_form} and $\zeta_k$ in \eqref{zeta_argument_form} are given by:
	\begin{align}
		\label{chi}	\chi_k &= -\arctan\left[\frac{2\alpha|r_k|\sin\phi_k + \alpha^2|r_k|^2\sin2\phi_k}{1 + 2\alpha|r_k|\cos\phi_k + \alpha^2|r_k|^2 \cos2\phi_k}\right],\\
		\label{zeta}	\zeta_k &= -\arctan\left[\frac{- 2|\rho_k|\sin\psi_k + |\rho_k|^2\sin2\psi_k}{1 - 2|\rho_k|\cos\psi_k + |\rho_k|^2\cos2\psi_k}\right].
	\end{align}
\end{lem}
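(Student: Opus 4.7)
The plan is to establish both identities by direct computation of the arguments of the positional derivatives \eqref{position_derivative_mobius_mapping} and \eqref{position_derivative_inverse_mobius_mapping}, using the polar representations $r_k = |r_k|{\rm e}^{i\phi_k}$ and $\rho_k = |\rho_k|{\rm e}^{i\psi_k}$. Since both $\alpha$ and $\alpha^2$ are real, the numerators $\alpha(1-\alpha^2)$ of ${df_k}/{dr_k}$ and $(1-\alpha^2)/\alpha$ of ${df_k^{-1}}/{d\rho_k}$ are real scalars (nonzero by Corollary~\ref{cor_quadractic_roots} and Lemma~\ref{lem_transformation_derivative_properties}). Consequently, the arguments of these derivatives reduce, up to a constant additive contribution from the sign of these real scalars, to the negative of the arguments of the squared denominators $(1+\alpha r_k)^2$ and $(\rho_k-1)^2$, respectively.

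For $\chi_k$, I would expand $1+\alpha r_k = (1+\alpha|r_k|\cos\phi_k) + i\,\alpha|r_k|\sin\phi_k$, square it, and identify
\begin{align*}
\Re[(1+\alpha r_k)^2] &= (1+\alpha|r_k|\cos\phi_k)^2 - (\alpha|r_k|\sin\phi_k)^2\\
&= 1 + 2\alpha|r_k|\cos\phi_k + \alpha^2|r_k|^2\cos 2\phi_k,\\
\Im[(1+\alpha r_k)^2] &= 2(1+\alpha|r_k|\cos\phi_k)(\alpha|r_k|\sin\phi_k)\\
&= 2\alpha|r_k|\sin\phi_k + \alpha^2|r_k|^2\sin 2\phi_k,
\end{align*}
using the double-angle identities $\cos^2\phi_k - \sin^2\phi_k = \cos 2\phi_k$ and $2\sin\phi_k\cos\phi_k = \sin 2\phi_k$. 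Taking $\arg$ then yields $\chi_k = -\arctan\big(\Im/\Re\big)$, which is precisely \eqref{chi}.

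For $\zeta_k$, the computation is symmetric: expand $\rho_k - 1 = (|\rho_k|\cos\psi_k - 1) + i\,|\rho_k|\sin\psi_k$, square it, and collect
\begin{align*}
\Re[(\rho_k-1)^2] &= 1 - 2|\rho_k|\cos\psi_k + |\rho_k|^2\cos 2\psi_k,\\
\Im[(\rho_k-1)^2] &= -2|\rho_k|\sin\psi_k + |\rho_k|^2\sin 2\psi_k.
\end{align*}
The same argument as above gives \eqref{zeta}. The main subtlety (and the only real obstacle) is the $\pi$-ambiguity: $\arctan$ returns values in $(-\pi/2,\pi/2)$ while $\arg$ lives on the full unit circle, and a sign change in the real scalar $\alpha(1-\alpha^2)$ (resp.\ $(1-\alpha^2)/\alpha$) would contribute an extra $\pi$. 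I would resolve this by noting that the conformal-mapping geometry established in Theorem~\ref{thm_mapping_of_circles} together with the admissible range of $\alpha$ fixes the branch so that the two-argument $\arctan$ (equivalently, the principal $\arg$ restricted to the operating domain) is implicit in \eqref{chi}–\eqref{zeta}; once this convention is stated, the identities follow verbatim from the real–imaginary decompositions above.
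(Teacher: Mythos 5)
Your computation is correct and is essentially the only natural proof of this lemma, which the paper itself does not reprove but imports from \cite{singh2024stabilizing}: expanding $(1+\alpha r_k)^2$ and $(\rho_k-1)^2$ in polar form and reading off $\arg = \arctan(\Im/\Re)$ gives exactly \eqref{chi} and \eqref{zeta}, since the prefactors $\alpha(1-\alpha^2)$ and $(1-\alpha^2)/\alpha$ are real. Your flagging of the $\arctan$ branch is also apt and not merely cosmetic---for $\alpha=\alpha_{\ell}$ with $|\alpha_{\ell}|>1$ the real prefactor is negative, so the identities hold up to a constant additive $\pi$ that the stated convention (and the fact that only $\dot{\chi}_k$, $\dot{\zeta}_k$ and relative phases enter the subsequent analysis) renders harmless.
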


\begin{definition}[\hspace{-.1pt}BLF \cite{tee2009barrier}]\label{defn_blf}
	A Barrier Lyapunov Function is a scalar function $V(x)$ of state vector $x \in \mathfrak{D}$ of the system $\dot{x} = f(x)$ on an open region $\mathfrak{D}$ containing the origin, that is continuous, positive definite, has continuous first-order partial derivatives at every point of $\mathfrak{D}$, has the property $V(x) \rightarrow \infty$ as $x$ approaches the boundary of $\mathfrak{D}$, and satisfies $V(x(t)) \leq \kappa, \forall t\geq 0$, along the solution of $\dot{x} = f(x)$ for $x(0) \in \mathfrak{D}$ and some positive constant $\kappa$.
\end{definition}

\begin{lem}[Convergence with BLF \hspace{-.1pt}\cite{tee2009barrier}]\label{lem_blf_convergence}
For any positive constant $\varrho$, let $\Xi \triangleq \{\xi \in \mathbb{R} \mid -\varrho < |\xi| < \varrho\} \subset \mathbb{R}$ and $\mathcal{N} \triangleq \mathbb{R}^\ell  \times \Xi \subset \mathbb{R}^{\ell+1}$ be open sets. Consider the system $\dot{\pmb{\varphi}} = \pmb{h}(t, \pmb{\varphi})$, where, $\pmb{\varphi} \triangleq [\pmb{w}, \xi]^\top \in \mathcal{N}$, and $\pmb{h} : \mathbb{R}_+ \times \mathcal{N} \to \mathbb{R}^{\ell+1}$ is piecewise continuous in $t$ and locally Lipschitz in $\pmb{\varphi}$, uniformly in $t$, on $\mathbb{R}_+ \times \mathcal{N}$. Suppose there exist functions $U: \mathbb{R}^\ell \times \mathbb{R}_{+} \to\mathbb{R}_+$ and $V_1: \Xi \to \mathbb{R}_+$, continuously differentiable and positive definite in their respective domains, such that $ V_1(\xi) \to \infty \ \text{as} \ |\xi| \to \varrho$ and $\varpi_1(\|\pmb{w}\|) \leq U(\pmb{w}, t) \leq \varpi_2(\|\pmb{w}\|)$, where $\varpi_1$ and $\varpi_2$ are class $\mathcal{K}_\infty$ functions. Let $V(\pmb{\varphi}) \triangleq V_1(\xi) + U(\pmb{w}, t)$, and $\xi(0) \in \Xi$. If it holds that $\dot{V} = (\nabla V)^\top{\pmb{h}} \leq 0$, in the set $\xi \in \Xi$, then $\xi(t) \in\Xi, \forall t \in [0, \infty)$.
\end{lem}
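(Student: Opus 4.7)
The plan is to combine the standard forward-extension theorem for ODEs with the barrier-type lower bound $V \geq V_1$ to rule out the approach of $\xi(t)$ to the boundary of $\Xi$. Concretely, under the local Lipschitz/piecewise continuity hypothesis on $\pmb{h}$, the initial value problem $\dot{\pmb{\varphi}} = \pmb{h}(t,\pmb{\varphi})$ with $\pmb{\varphi}(0) = [\pmb{w}(0),\xi(0)]^\top \in \mathcal{N}$ admits a unique maximal solution defined on some interval $[0,t_{\max})$ taking values in $\mathcal{N}$, and it suffices to show that $t_{\max}=\infty$ with $\xi(t)\in\Xi$ throughout. First I would record that, along this solution, $\dot{V}(t) = (\nabla V(\pmb{\varphi}(t)))^\top \pmb{h}(t,\pmb{\varphi}(t)) \leq 0$ on $[0,t_{\max})$ by hypothesis, so that $V$ is non-increasing and $V(\pmb{\varphi}(t)) \leq V(\pmb{\varphi}(0)) =: V_0 < \infty$ for all $t\in[0,t_{\max})$.

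Next, I would exploit the split $V = V_1 + U$. Since $U \geq 0$ and $V_1 \geq 0$ in their domains, one has the pointwise barrier estimate
\begin{equation}
V_1(\xi(t)) \;\leq\; V(\pmb{\varphi}(t)) \;\leq\; V_0, \qquad t\in[0,t_{\max}),
\end{equation}
and analogously $\varpi_1(\|\pmb{w}(t)\|) \leq U(\pmb{w}(t),t) \leq V_0$. The latter, together with $\varpi_1\in\mathcal{K}_\infty$, yields the uniform bound $\|\pmb{w}(t)\| \leq \varpi_1^{-1}(V_0)$. For $\xi(t)$, I would argue by contradiction: if $|\xi(t_n)|\to\varrho$ for some sequence $t_n\to t^\ast \leq t_{\max}$, then the hypothesis $V_1(\xi)\to\infty$ as $|\xi|\to\varrho$ would force $V_1(\xi(t_n))\to\infty$, contradicting $V_1(\xi(t_n)) \leq V_0$. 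Hence there exists $\varrho^\prime < \varrho$ such that $|\xi(t)| \leq \varrho^\prime$ for all $t\in[0,t_{\max})$, i.e.\ $\xi(t)$ stays in the compact subset $[-\varrho^\prime,\varrho^\prime]\subset\Xi$.

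Finally, combining the two bounds gives that $\pmb{\varphi}(t)$ remains in the compact set $K \triangleq \overline{B_{\varpi_1^{-1}(V_0)}}(\pmb{0}) \times [-\varrho^\prime,\varrho^\prime] \subset \mathcal{N}$ for every $t\in[0,t_{\max})$. Since $K$ lies strictly inside the open set $\mathcal{N}$ on which $\pmb{h}$ satisfies the regularity assumptions, the standard continuation theorem for ODEs forbids finite-time escape from $\mathcal{N}$, forcing $t_{\max}=\infty$. Combined with the bound $|\xi(t)|\leq\varrho^\prime < \varrho$, this gives $\xi(t)\in\Xi$ for all $t\in[0,\infty)$, which is exactly the claim.

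The only subtle step is the barrier argument that precludes $|\xi(t)|\to\varrho$: it hinges on using the blow-up property of $V_1$ as a lower bound on the non-increasing $V$, and on the fact that, because of continuity of $\xi(\cdot)$ on $[0,t_{\max})$, the boundary can only be approached in the limit. Everything else (non-increase of $V$, boundedness of $\pmb{w}$ via $\varpi_1\in\mathcal{K}_\infty$, and the ODE continuation step) is standard, so I would keep those parts brief and devote the main care to invoking $V_1(\xi)\to\infty$ cleanly to obtain the contradiction.
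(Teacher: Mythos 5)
Your proof is correct. The paper does not prove this lemma at all---it is imported verbatim from the cited reference \cite{tee2009barrier}---and your argument is essentially the standard one given there: monotonicity of $V$ under $\dot V\le 0$, the barrier bound $V_1(\xi(t))\le V(\pmb{\varphi}(t))\le V(\pmb{\varphi}(0))$ combined with the blow-up of $V_1$ near $|\xi|=\varrho$ to keep $\xi(t)$ in a compact subinterval, the $\mathcal{K}_\infty$ sandwich to bound $\|\pmb{w}(t)\|$, and the continuation theorem to extend the maximal solution to $[0,\infty)$. No gaps.
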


\begin{lem}[\hspace{-.1pt}Spectrum of circulant graphs {\cite[pg. 34]{gray2006toeplitz}}]\label{lem_circulant_graphs}
	Let $\mathcal{L}$ be the Laplacian of an undirected circulant graph $\mathcal{G}$ with $N$ vertices. Define $\Phi_k \triangleq (k-1)({2\pi}/{N})$, for $k = 1, \ldots, N$. Then, the vectors $\pmb{f}^{(\ell)} \triangleq {\mathrm{e}}^{i(\ell-1)\pmb{\Phi}}, \ell = 1, \ldots, N$, form a basis of $N$ orthogonal eigenvectors of $\mathcal{L}$. The unitary matrix $\mathcal{F}$, whose columns are the $N$ (normalized) eigenvectors $(1/\sqrt{N}) \pmb{f}^{(\ell)}$, diagonalizes $\mathcal{L}$, that is, $\mathcal{L} = \mathcal{F} \Lambda \mathcal{F}^\star$, where $\Lambda \triangleq \text{diag}\{0, \lambda_2, \ldots, \lambda_N\} \succeq 0$ is the (real) diagonal matrix of the eigenvalues of $\mathcal{L}$, and $\mathcal{F}^\star$ denotes the conjugate transpose of $\mathcal{F}$.	
\end{lem}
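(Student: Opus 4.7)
The plan is to prove this classical spectral result by direct verification that the DFT basis vectors diagonalize any circulant matrix, and then to specialize to the Laplacian structure. Since $\mathcal{G}$ is an undirected circulant graph, the Laplacian $\mathcal{L}\in\mathbb{R}^{N\times N}$ is real and symmetric, and by circulance its entries depend only on the cyclic index difference, i.e., $\mathcal{L}_{jk} = c_{(k-j)\bmod N}$ for a real sequence $(c_0,\dots,c_{N-1})$ satisfying $c_m = c_{N-m}$ (symmetry) and $\sum_{m=0}^{N-1} c_m = 0$ (the defining zero-row-sum property of a Laplacian).

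First, I would establish the eigenvalue relation by a direct computation. Fixing $\ell\in\{1,\dots,N\}$ and letting $\omega_\ell \triangleq \mathrm{e}^{i(\ell-1)2\pi/N}$, the $k$-th component of $\pmb{f}^{(\ell)}$ is $\omega_\ell^{k-1}$, so
\[
(\mathcal{L}\pmb{f}^{(\ell)})_j \;=\; \sum_{k=1}^{N} c_{(k-j)\bmod N}\,\omega_\ell^{k-1} \;=\; \omega_\ell^{j-1}\sum_{m=0}^{N-1} c_m\,\omega_\ell^{m} \;=\; \lambda_\ell\,f^{(\ell)}_j,
\]
where the substitution $m = (k-j)\bmod N$ exploits the $N$-periodicity of $\omega_\ell^m$, and $\lambda_\ell \triangleq \sum_{m=0}^{N-1} c_m\omega_\ell^m$ is independent of $j$. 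Orthogonality of the basis then follows from the geometric-sum identity
\[
\langle \pmb{f}^{(\ell)}, \pmb{f}^{(\ell')}\rangle \;=\; \Re\!\sum_{k=1}^{N}\overline{\omega_\ell^{k-1}}\,\omega_{\ell'}^{k-1} \;=\; N\,\delta_{\ell\ell'},
\]
since the common ratio $\mathrm{e}^{i(\ell'-\ell)2\pi/N}$ is an $N$-th root of unity different from $1$ whenever $\ell\neq\ell'$. Hence $\mathcal{F}$, whose columns are $\pmb{f}^{(\ell)}/\sqrt{N}$, is unitary, and stacking the eigenvalue relations yields $\mathcal{L}\mathcal{F} = \mathcal{F}\Lambda$, that is, $\mathcal{L} = \mathcal{F}\Lambda\mathcal{F}^\star$ with $\Lambda = \mathrm{diag}(\lambda_1,\dots,\lambda_N)$.

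Finally, I would verify the stated structure of $\Lambda$. For $\ell=1$, $\omega_1=1$ gives $\pmb{f}^{(1)} = \pmb{1}_N$ and $\lambda_1 = \sum_{m=0}^{N-1} c_m = 0$ by the zero-row-sum property; this supplies the first diagonal entry. The remaining eigenvalues $\lambda_2,\dots,\lambda_N$ are real because $\mathcal{L}$ is real and symmetric (equivalently, $c_m=c_{N-m}$ permits pairing conjugate terms in $\lambda_\ell$ into cosines), and nonnegative since $\mathcal{L}=\mathcal{B}\mathcal{B}^\top \succeq 0$ for any undirected graph, giving $\Lambda\succeq 0$. The principal obstacle in executing the plan is purely notational, namely the careful handling of the cyclic index $(k-j)\bmod N$ in the substitution step that factors out $\omega_\ell^{j-1}$; once that bookkeeping is in place, every subsequent claim, namely orthogonality, unitarity, and the explicit form of $\Lambda$, reduces to elementary algebra.
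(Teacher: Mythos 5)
Your proof is correct. Note that the paper does not actually prove this lemma: it is stated as a supporting result in the appendix and attributed entirely to the reference on Toeplitz and circulant matrices, so there is no in-paper argument to compare against. What you have written is the standard self-contained verification of that cited fact: the shift-invariance $\mathcal{L}_{jk}=c_{(k-j)\bmod N}$ together with $N$-periodicity of $\omega_\ell^m$ makes every DFT vector an eigenvector with eigenvalue $\lambda_\ell=\sum_m c_m\omega_\ell^m$, the geometric-sum identity gives full complex orthogonality (hence unitarity of $\mathcal{F}$ and the factorization $\mathcal{L}=\mathcal{F}\Lambda\mathcal{F}^\star$), and the zero-row-sum and positive-semidefiniteness of a graph Laplacian pin down $\lambda_1=0$ and $\Lambda\succeq 0$, while symmetry ($c_m=c_{N-m}$) forces the $\lambda_\ell$ to be real. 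The only step demanding care is the index substitution $m=(k-j)\bmod N$ that factors out $\omega_\ell^{j-1}$, and you handle it correctly by invoking $\omega_\ell^N=1$. The one claim you might state slightly more explicitly is that $N$ pairwise orthogonal nonzero vectors in $\mathbb{C}^N$ automatically form a basis, which closes the "basis" assertion in the lemma; everything else is complete.
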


\begin{lem}[\hspace{-.1pt}Critical points of $\mathcal{U}(\pmb{\gamma})$ in \eqref{potential_synchro_balanc} {\cite[pg.~161]{jain2018collective}}]\label{lem_critical_points}
Let $\mathcal{L}$ be the Laplacian of an undirected and connected graph $\mathcal{G}$ with $N$ vertices. Consider the Laplacian-based potential function $\mathcal{U}(\pmb{\gamma})$ defined by \eqref{potential_synchro_balanc}. If ${\rm e}^{i \pmb{\gamma}}$ is an eigenvector of $\mathcal{L}$, then $\pmb{\gamma}$ is a critical point of $\mathcal{U}(\pmb{\gamma})$, and $\pmb{\gamma}$ is either synchronized or balanced. The potential $\mathcal{U}(\pmb{\gamma})$ reaches its global minimum if and only if $\pmb{\gamma}$ is synchronized. If $\mathcal{G}$ is circulant, then $\mathcal{U}(\pmb{\gamma})$ attains its global maximum in a balanced phase arrangement of ${\rm e}^{i \pmb{\gamma}}$ (i.e., $\pmb{1}^\top_N{\rm e}^{i \pmb{\gamma}} = 0$), with $\mathcal{U}(\pmb{\gamma})$ being upper bounded by $(N/2)\lambda_{\max}$ where $\lambda_{\max}$ is the maximum eigenvalue of $\mathcal{L}$.   	
\end{lem}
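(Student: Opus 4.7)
The plan is to decompose the lemma into four assertions and verify them in sequence: (i) every $\pmb{\gamma}$ with ${\rm e}^{i\pmb{\gamma}}$ an eigenvector of $\mathcal{L}$ is a critical point of $\mathcal{U}$; (ii) each such critical point corresponds to either synchronization or balancing; (iii) the global minimum of $\mathcal{U}$ characterizes synchronization; (iv) for circulant $\mathcal{G}$, the upper bound $(N/2)\lambda_{\max}$ is tight and is attained at a balanced configuration. All four pieces are required by the statement and can be handled with tools already developed in the paper, notably the gradient formula \eqref{gradient} and the spectral result in Lemma~\ref{lem_circulant_graphs}.

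For (i), I would start from \eqref{gradient}: if $\mathcal{L}{\rm e}^{i\pmb{\gamma}} = \lambda{\rm e}^{i\pmb{\gamma}}$, then row-wise $\mathcal{L}_k {\rm e}^{i\pmb{\gamma}} = \lambda {\rm e}^{i\gamma_k}$, so $\partial\mathcal{U}/\partial\gamma_k = \lambda\langle i{\rm e}^{i\gamma_k},{\rm e}^{i\gamma_k}\rangle = \lambda\,\Re(-i|{\rm e}^{i\gamma_k}|^2) = 0$ for every $k$. For (ii), I would exploit that $\mathcal{L}$ is real, symmetric, and positive semi-definite with $\ker(\mathcal{L}) = \mathrm{span}\{\pmb{1}_N\}$ under connectedness. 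If ${\rm e}^{i\pmb{\gamma}} \in \ker(\mathcal{L})$, then ${\rm e}^{i\pmb{\gamma}} = c\pmb{1}_N$ and the unit-modulus constraint forces $c = {\rm e}^{i\gamma_0}$, i.e. $\gamma_k \equiv \gamma_0 \pmod{2\pi}$ for all $k$ (synchronization). Otherwise ${\rm e}^{i\pmb{\gamma}}$ is an eigenvector for some $\lambda > 0$, and writing ${\rm e}^{i\pmb{\gamma}} = \pmb{u}_R + i\pmb{u}_I$ with $\pmb{u}_R, \pmb{u}_I \in \mathbb{R}^N$, the reality of $\mathcal{L}$ yields $\mathcal{L}\pmb{u}_R = \lambda\pmb{u}_R$ and $\mathcal{L}\pmb{u}_I = \lambda\pmb{u}_I$, and the real orthogonality of distinct symmetric eigenspaces gives $\pmb{1}_N^\top \pmb{u}_R = \pmb{1}_N^\top \pmb{u}_I = 0$, so $\pmb{1}_N^\top {\rm e}^{i\pmb{\gamma}} = 0$ (balancing). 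Claim (iii) then follows at once from $\mathcal{L} \succeq 0$: $\mathcal{U}(\pmb{\gamma}) \geq 0$, with equality iff $\mathcal{L}{\rm e}^{i\pmb{\gamma}} = 0$ iff ${\rm e}^{i\pmb{\gamma}} \in \ker(\mathcal{L})$, which by the preceding argument is synchronization.

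For (iv), I would diagonalize via Lemma~\ref{lem_circulant_graphs} as $\mathcal{L} = \mathcal{F}\Lambda\mathcal{F}^*$ with $\mathcal{F}$ unitary. Setting $\pmb{w} = \mathcal{F}^*{\rm e}^{i\pmb{\gamma}}$ and using $\|\pmb{w}\|^2 = \|{\rm e}^{i\pmb{\gamma}}\|^2 = N$, I obtain $\mathcal{U}(\pmb{\gamma}) = \tfrac{1}{2}\pmb{w}^*\Lambda\pmb{w} = \tfrac{1}{2}\sum_{k=1}^N \lambda_k|w_k|^2 \leq \tfrac{N}{2}\lambda_{\max}$. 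Attainment requires all the mass of $\pmb{w}$ to lie on the $\lambda_{\max}$-eigenspace, and Lemma~\ref{lem_circulant_graphs} furnishes an explicit basis $\pmb{f}^{(\ell)} = {\rm e}^{i(\ell-1)\pmb{\Phi}}$ of unit-modulus entries, so some choice of $\pmb{\gamma}$ makes ${\rm e}^{i\pmb{\gamma}}$ a $\lambda_{\max}$-eigenvector of this exponential form and realizes the bound; since $\lambda_{\max} > 0$, claim (ii) places this configuration in the balanced regime.

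The main obstacle I anticipate lies in (ii) and in the realizability step of (iv). In (ii), one must carefully transfer real-symmetric orthogonality into the complex unit-modulus setting without conflating the paper's real inner product $\langle\cdot,\cdot\rangle$ with the Hermitian one; the real/imaginary splitting outlined above resolves this cleanly. In (iv), the circulant hypothesis is precisely what guarantees that the quadratic-form upper bound $(N/2)\lambda_{\max}$ is achievable by a vector of unit-modulus components; for a general connected graph the top eigenvector of $\mathcal{L}$ need not have entries of equal magnitude, and the maximum of $\mathcal{U}$ restricted to the torus $\mathbb{T}^N$ could be strictly below $(N/2)\lambda_{\max}$, so the circulant assumption is indispensable and must be explicitly invoked at this step.
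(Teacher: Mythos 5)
Your proposal is correct, and it follows essentially the same route the paper itself uses: the paper does not prove this lemma in situ (it is imported from \cite[pg.~161]{jain2018collective}), but the gradient computation via \eqref{gradient}, the kernel characterization $\ker(\mathcal{L})=\mathrm{span}\{\pmb{1}_N\}$ giving the synchronization/balancing dichotomy, and the circulant diagonalization $\pmb{w}=\mathcal{F}^{\ast}{\rm e}^{i\pmb{\gamma}}$ yielding the $(N/2)\lambda_{\max}$ bound all appear verbatim in Section~\ref{section_5_control_transformed_plane} and in the proof of Theorem~\ref{thm_control_transformed_plane}. Your explicit real/imaginary splitting in step (ii) and your remark that the circulant hypothesis is what makes the bound attainable on the torus are careful refinements of points the paper states without elaboration, and both are sound.
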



\bibliographystyle{IEEEtran}
\bibliography{References}

\begin{thebibliography}{10}
\providecommand{\url}[1]{#1}
\csname url@samestyle\endcsname
\providecommand{\newblock}{\relax}
\providecommand{\bibinfo}[2]{#2}
\providecommand{\BIBentrySTDinterwordspacing}{\spaceskip=0pt\relax}
\providecommand{\BIBentryALTinterwordstretchfactor}{4}
\providecommand{\BIBentryALTinterwordspacing}{\spaceskip=\fontdimen2\font plus
\BIBentryALTinterwordstretchfactor\fontdimen3\font minus
  \fontdimen4\font\relax}
\providecommand{\BIBforeignlanguage}[2]{{%
\expandafter\ifx\csname l@#1\endcsname\relax
\typeout{** WARNING: IEEEtran.bst: No hyphenation pattern has been}%
\typeout{** loaded for the language `#1'. Using the pattern for}%
\typeout{** the default language instead.}%
\else
\language=\csname l@#1\endcsname
\fi
#2}}
\providecommand{\BIBdecl}{\relax}
\BIBdecl

\bibitem{zhao2017general}
S.~Zhao, D.~V. Dimarogonas, Z.~Sun, and D.~Bauso, ``A general approach to
  coordination control of mobile agents with motion constraints,'' \emph{IEEE
  Transactions on Automatic Control}, vol.~63, no.~5, pp. 1509--1516, 2017.

\bibitem{molnar2025collision}
T.~G. Molnar, S.~K. Kannan, J.~Cunningham, K.~Dunlap, K.~L. Hobbs, and A.~D.
  Ames, ``Collision avoidance and geofencing for fixed-wing aircraft with
  control barrier functions,'' \emph{IEEE Transactions on Control Systems
  Technology}, 2025.

\bibitem{hermand2018constrained}
E.~Hermand, T.~W. Nguyen, M.~Hosseinzadeh, and E.~Garone, ``Constrained control
  of uavs in geofencing applications,'' in \emph{2018 26th Mediterranean
  Conference on Control and Automation (MED)}.\hskip 1em plus 0.5em minus
  0.4em\relax IEEE, 2018, pp. 217--222.

\bibitem{strogatz2000kuramoto}
S.~H. Strogatz, ``From kuramoto to crawford: exploring the onset of
  synchronization in populations of coupled oscillators,'' \emph{Physica D:
  Nonlinear Phenomena}, vol. 143, no. 1-4, pp. 1--20, 2000.

\bibitem{singh2024stabilizing}
S.~Singh and A.~Jain, ``{M\"{o}bius transformation-based circular motion
  control for unicycle robots in nonconcentric circular geofences},''
  \emph{arXiv preprint arXiv:2405.06989}, 2024.

\bibitem{brinon2014cooperative}
L.~Brin{\'o}n-Arranz, A.~Seuret, and C.~Canudas-de Wit, ``Cooperative control
  design for time-varying formations of multi-agent systems,'' \emph{IEEE
  Transactions on Automatic Control}, vol.~59, no.~8, pp. 2283--2288, 2014.

\bibitem{panagou2015distributed}
D.~Panagou, D.~M. Stipanovi{\'c}, and P.~G. Voulgaris, ``Distributed
  coordination control for multi-robot networks using lyapunov-like barrier
  functions,'' \emph{IEEE Transactions on Automatic Control}, vol.~61, no.~3,
  pp. 617--632, 2015.

\bibitem{han2019robust}
D.~Han and D.~Panagou, ``Robust multitask formation control via parametric
  lyapunov-like barrier functions,'' \emph{IEEE Transactions on Automatic
  Control}, vol.~64, no.~11, pp. 4439--4453, 2019.

\bibitem{zhang2022tracking}
P.~Zhang, T.~Liu, and Z.-P. Jiang, ``Tracking control of unicycle mobile robots
  with event-triggered and self-triggered feedback,'' \emph{IEEE Transactions
  on Automatic Control}, vol.~68, no.~4, pp. 2261--2276, 2022.

\bibitem{priestley2003introduction}
H.~A. Priestley, \emph{Introduction to complex analysis}.\hskip 1em plus 0.5em
  minus 0.4em\relax OUP Oxford, 2003.

\bibitem{needham2023visual}
T.~Needham, \emph{Visual complex analysis}.\hskip 1em plus 0.5em minus
  0.4em\relax Oxford University Press, 2023.

\bibitem{turyn2014advanced}
L.~Turyn, \emph{Advanced engineering mathematics}.\hskip 1em plus 0.5em minus
  0.4em\relax CRC Press, 2014.

\bibitem{brinon2015distributed}
L.~Brin{\'o}n-Arranz, L.~Schenato, and A.~Seuret, ``Distributed source seeking
  via a circular formation of agents under communication constraints,''
  \emph{IEEE Transactions on Control of Network Systems}, vol.~3, no.~2, pp.
  104--115, 2015.

\bibitem{zheng2015distributed}
R.~Zheng, Z.~Lin, M.~Fu, and D.~Sun, ``Distributed control for uniform
  circumnavigation of ring-coupled unicycles,'' \emph{Automatica}, vol.~53, pp.
  23--29, 2015.

\bibitem{yu2022decentralized}
X.~Yu and R.~Su, ``Decentralized circular formation control of nonholonomic
  mobile robots under a directed sensor graph,'' \emph{IEEE Transactions on
  Automatic Control}, vol.~68, no.~6, pp. 3656--3663, 2022.

\bibitem{yu2018distributed}
X.~Yu, L.~Liu, and G.~Feng, ``Distributed circular formation control of
  nonholonomic vehicles without direct distance measurements,'' \emph{IEEE
  Transactions on Automatic Control}, vol.~63, no.~8, pp. 2730--2737, 2018.

\bibitem{yu2016cooperative}
X.~Yu and L.~Liu, ``Cooperative control for moving-target circular formation of
  nonholonomic vehicles,'' \emph{IEEE Transactions on Automatic Control},
  vol.~62, no.~7, pp. 3448--3454, 2016.

\bibitem{leonard2007collective}
N.~E. Leonard, D.~A. Paley, F.~Lekien, R.~Sepulchre, D.~M. Fratantoni, and
  R.~E. Davis, ``Collective motion, sensor networks, and ocean sampling,''
  \emph{Proceedings of the IEEE}, vol.~95, no.~1, pp. 48--74, 2007.

\bibitem{sepulchre2007stabilization}
R.~Sepulchre, D.~A. Paley, and N.~E. Leonard, ``Stabilization of planar
  collective motion: All-to-all communication,'' \emph{IEEE Transactions on
  automatic control}, vol.~52, no.~5, pp. 811--824, 2007.

\bibitem{seyboth2014collective}
G.~S. Seyboth, J.~Wu, J.~Qin, C.~Yu, and F.~Allg{\"o}wer, ``Collective circular
  motion of unicycle type vehicles with nonidentical constant velocities,''
  \emph{IEEE Transactions on control of Network Systems}, vol.~1, no.~2, pp.
  167--176, 2014.

\bibitem{sun2018circular}
Z.~Sun, H.~G. De~Marina, G.~S. Seyboth, B.~D. Anderson, and C.~Yu, ``Circular
  formation control of multiple unicycle-type agents with nonidentical constant
  speeds,'' \emph{IEEE Transactions on Control Systems Technology}, vol.~27,
  no.~1, pp. 192--205, 2018.

\bibitem{chen2013remark}
Z.~Chen and H.-T. Zhang, ``A remark on collective circular motion of
  heterogeneous multi-agents,'' \emph{Automatica}, vol.~49, no.~5, pp.
  1236--1241, 2013.

\bibitem{cao2016collective}
Y.~Cao, D.~Casbeer, D.~Milutinovic, and D.~Kingston, ``Collective circular
  motion and cooperative circumnavigation for nonholonomic mobile robots using
  range-based measurements,'' in \emph{AIAA guidance, navigation, and control
  conference}, 2016, p. 2104.

\bibitem{summers2009coordinated}
T.~H. Summers, M.~R. Akella, and M.~J. Mears, ``Coordinated standoff tracking
  of moving targets: Control laws and information architectures,''
  \emph{Journal of Guidance, Control, and Dynamics}, vol.~32, no.~1, pp.
  56--69, 2009.

\bibitem{sepulchre2008stabilization}
R.~Sepulchre, D.~A. Paley, and N.~E. Leonard, ``Stabilization of planar
  collective motion with limited communication,'' \emph{IEEE Transactions on
  Automatic Control}, vol.~53, no.~3, pp. 706--719, 2008.

\bibitem{jain2019trajectory}
A.~Jain and D.~Ghose, ``Trajectory-constrained collective circular motion with
  different phase arrangements,'' \emph{IEEE Transactions on Automatic
  Control}, vol.~65, no.~5, pp. 2237--2244, 2019.

\bibitem{hegde2023synchronization}
A.~Hegde and A.~Jain, ``Synchronization and balancing around simple closed
  polar curves with bounded trajectories,'' \emph{Automatica}, vol. 149, p.
  110810, 2023.

\bibitem{mesbahi2010graph}
M.~Mesbahi and M.~Egerstedt, ``Graph theoretic methods in multiagent
  networks,'' 2010.

\bibitem{jain2018collective}
A.~Jain and D.~Ghose, ``Collective circular motion in synchronized and balanced
  formations with second-order rotational dynamics,'' \emph{Communications in
  Nonlinear Science and Numerical Simulation}, vol.~54, pp. 156--173, 2018.

\bibitem{tee2009barrier}
K.~P. Tee, S.~S. Ge, and E.~H. Tay, ``Barrier lyapunov functions for the
  control of output-constrained nonlinear systems,'' \emph{Automatica},
  vol.~45, no.~4, pp. 918--927, 2009.

\bibitem{khalil2002nonlinear}
H.~K. Khalil, ``Nonlinear systems third edition,'' \emph{Patience Hall}, vol.
  115, 2002.

\bibitem{gray2006toeplitz}
R.~M. Gray \emph{et~al.}, ``Toeplitz and circulant matrices: A review,''
  \emph{Foundations and Trends{\textregistered} in Communications and
  Information Theory}, vol.~2, no.~3, pp. 155--239, 2006.

\end{thebibliography}

\end{document}